\documentclass[reqno,a4paper,11pt,draft]{article}
\pdfoutput=1
\usepackage{xcolor}

\newif\iffancy 
\fancytrue

\usepackage[textwidth = 430 pt, textheight = 630 pt]{geometry}

\definecolor{MyDarkBlue}{rgb}{0.15,0.25,0.45}
\usepackage{amsmath,amssymb}
\usepackage{amsfonts}
\usepackage{mathrsfs}
\usepackage{stackengine} 

\renewcommand\varPhi{\boldsymbol\Phi} 

\iffancy
\usepackage{bbm}
\else
\newcommand\mathbbm\mathbb
\fi
\usepackage{booktabs}

\usepackage{amsthm}

\usepackage[utf8x]{inputenc}

\usepackage[final,unicode]{hyperref}
\hypersetup{
    hypertexnames=false,
    colorlinks=true,
    citecolor=MyDarkBlue,
    linkcolor=MyDarkBlue,
    urlcolor=MyDarkBlue,
    pdfauthor={Hyungrok Kim and Christian Saemann},
    pdftitle={Adjusted Parallel Transport for Higher Gauge Theories},
    pdfsubject={hep-th math-ph},
    breaklinks=true
}

\PrerenderUnicode{∞} 

\usepackage{pgf,tikz}
\usetikzlibrary{cd,decorations.markings,calc}
\usepackage{mathtools}



\linespread{1.09}

\setlength{\footnotesep}{3.5mm}
\let\fn\footnote
\renewcommand{\footnote}[1]{\linespread{1.1}\fn{#1}\linespread{1.29}}


\makeatletter\renewcommand{\section}{\@startsection
    {section}{1}{\z@}{-3.5ex plus -1ex minus
        -.2ex}{2.3ex plus .2ex}{\bf }}
\makeatletter\renewcommand{\subsection}{\@startsection{subsection}{2}{\z@}{-3.25ex
        plus -1ex minus
        -.2ex}{1.5ex plus .2ex}{\bf }}
\makeatletter\renewcommand{\subsubsection}{\@startsection{subsubsection}{3}{-2.45ex}{-3.25ex
        plus -1ex minus -.2ex}{1.5ex plus .2ex}{\it }}
\renewcommand{\thesection}{\arabic{section}}
\renewcommand{\thesubsection}{\arabic{section}.\arabic{subsection}}
\renewcommand{\@seccntformat}[1]{\@nameuse{the#1}.~~}

\renewcommand{\theequation}{\thesection.\arabic{equation}}
\makeatletter \@addtoreset{equation}{section}
\def\Ddots{\mathinner{\mkern1mu\raise\p@
        \vbox{\kern7\p@\hbox{.}}\mkern2mu
        \raise4\p@\hbox{.}\mkern2mu\raise7\p@\hbox{.}\mkern1mu}}
\setcounter{tocdepth}{3} 

\usepackage[toc,page]{appendix}

\newtheorem{thm}{Theorem}[section]
\renewcommand{\thethm}{\thesection.\arabic{thm}}

\newtheorem{theorem}[thm]{Theorem}

\newtheorem{corollary}[thm]{Corollary}

\renewcommand{\appendices}{
    \section*{Appendix}\label{appendices}\setcounter{subsection}{0}
    \addcontentsline{toc}{section}{Appendix}
    \setcounter{equation}{0}
    \makeatletter
    \renewcommand{\theequation}{\Alph{subsection}.\arabic{equation}}
    \renewcommand{\thesubsection}{\Alph{subsection}}
    \renewcommand{\thethm}{\Alph{subsection}.\arabic{thm}}
    \@addtoreset{equation}{subsection}
    \@addtoreset{thm}{subsection}
    \makeatother
}

\newcommand{\hooklongrightarrow}{\lhook\joinrel\longrightarrow}

\newcommand\cm{\mathrm{cm}}


\hyphenation{pa-ra-met-rized}


\def\slasha#1{\setbox0=\hbox{$#1$}#1\hskip-\wd0\hbox to\wd0{\hss\sl/\/\hss}}

\def\periodb#1{\setbox0=\hbox{$#1$}#1\hskip-\wd0\hbox to\wd0{-}}





\iffancy
\newcommand{\unit}{\mathbbm{1}}   			
\else 
\newcommand{\unit}{1\!\!1}

\fi
\newcommand{\id}{\mathrm{id}}   			

\newcommand{\CA}{\mathcal{A}}    			

\newcommand{\CC}{\mathcal{C}}
\newcommand{\CCC}{\mathscr{C}}

\newcommand{\CCG}{\mathscr{G}}

\newcommand{\CP}{\mathcal{P}}

\newcommand{\frg}{\mathfrak{g}}				
\newcommand{\frL}{\mathfrak{L}}				
\newcommand{\frh}{\mathfrak{h}}				

\newcommand{\frv}{\mathfrak{v}}
\newcommand{\frl}{\mathfrak{l}}

\newcommand{\FR}{\mathbbm{R}}     			
\newcommand{\RZ}{\mathbbm{Z}}     			

\newcommand{\dd}{\mathrm{d}}     			
\newcommand{\dpar}{\partial}     			
\newcommand{\de}{\mathrm{e}}     			
\newcommand{\di}{\mathrm{i}}     			
\newcommand{\eps}{{\varepsilon}}			

\newcommand{\sB}{\mathsf{B}}

\newcommand{\sE}{\mathsf{E}}



\newcommand{\eand}{{\qquad\mbox{and}\qquad}}     		
\newcommand{\ewith}{{\qquad\mbox{with}\qquad}}

\newcommand{\der}[1]{\frac{\dpar}{\dpar #1}}   		


\newcommand{\astring}{\mathfrak{string}}

\newcommand{\sU}{\mathsf{U}}     			

\newcommand{\sG}{\mathsf{G}}

\newcommand{\sL}{\mathsf{L}}

\newcommand{\sLie}{\mathsf{Lie}}
\newcommand{\sCE}{\mathsf{CE}}

\newcommand{\sH}{\mathsf{H}}

\newcommand{\sString}{\mathsf{String}}

\newcommand{\acton}{\mathbin\vartriangleright}     			
\newcommand{\tildeacton}{\mathbin{\tilde\vartriangleright}}     			
\renewcommand{\remark}[1]{}     				
     				%

\def\tyng(#1){\hbox{\tiny$\yng(#1)$}}			
\def\tyoung(#1){\hbox{\tiny$\young(#1)$}}			

\if0 

\fi

\newlength\tmplength
\def\showmybox{\tmplength=\wd0\relax\box0\kern-\tmplength}
\def\mybox#1{\setbox0=\hbox{\ensurestackMath{#1}}}
\def\SIn#1#2{\stackinset{c}{#1}{c}{#2}}
\def\chint{\mathchoice%
{\mybox{\SIn{1.8pt}{-0pt}{\mathsf C}{\displaystyle\phantom{\int}}}\showmybox}
{\mybox{\SIn{1.5pt}{-0pt}{\scriptstyle\mathsf C}{\textstyle\phantom{\int}}}\showmybox}
{\mybox{\SIn{1.2pt}{-0pt}{\scriptscriptstyle\mathsf C}{\scriptstyle\phantom{\int}}}\showmybox}%
{\mybox{\SIn{1.1pt}{-0pt}{\scalebox{0.7}{$\scriptscriptstyle\mathsf C$}}{%
\scriptscriptstyle\phantom{\int}}}\showmybox}%
\int}

\def\ichint{\mathchoice%
{\mybox{\SIn{9.3pt}{-0pt}{\mathsf C}{\displaystyle\phantom{\int}}}\showmybox}
{\mybox{\SIn{7pt}{-0pt}{\scriptstyle\mathsf C}{\textstyle\phantom{\int}}}\showmybox}
{\mybox{\SIn{5.5pt}{-0pt}{\scriptscriptstyle\mathsf C}{\scriptstyle\phantom{\int}}}\showmybox}%
{\mybox{\SIn{4.4pt}{-0pt}{\scalebox{0.7}{$\scriptscriptstyle\mathsf C$}}{%
\scriptscriptstyle\phantom{\int}}}\showmybox}%
\iint}

\def\chichint{\mathchoice%
{\mybox{\SIn{1.8pt}{-0pt}{\mathsf C}{\displaystyle\phantom{\int}}}\showmybox}
{\mybox{\SIn{1.5pt}{-0pt}{\scriptstyle\mathsf C}{\textstyle\phantom{\int}}}\showmybox}
{\mybox{\SIn{1.2pt}{-0pt}{\scriptscriptstyle\mathsf C}{\scriptstyle\phantom{\int}}}\showmybox}%
{\mybox{\SIn{1.1pt}{-0pt}{\scalebox{0.7}{$\scriptscriptstyle\mathsf C$}}{%
\scriptscriptstyle\phantom{\int}}}\showmybox}%
\mathchoice%
{\mybox{\SIn{9.3pt}{-0pt}{\mathsf C}{\displaystyle\phantom{\int}}}\showmybox}
{\mybox{\SIn{7pt}{-0pt}{\scriptstyle\mathsf C}{\textstyle\phantom{\int}}}\showmybox}
{\mybox{\SIn{5.5pt}{-0pt}{\scriptscriptstyle\mathsf C}{\scriptstyle\phantom{\int}}}\showmybox}%
{\mybox{\SIn{4.4pt}{-0pt}{\scalebox{0.7}{$\scriptscriptstyle\mathsf C$}}{%
\scriptscriptstyle\phantom{\int}}}\showmybox}%
\iint}

\def\ichichint{\mathchoice%
{\mybox{\SIn{1.8pt}{-0pt}{\mathsf C}{\displaystyle\phantom{\iiint}}}\showmybox}
{\mybox{\SIn{1.5pt}{-0pt}{\scriptstyle\mathsf C}{\textstyle\phantom{\iiint}}}\showmybox}
{\mybox{\SIn{1.2pt}{-0pt}{\scriptscriptstyle\mathsf C}{\scriptstyle\phantom{\iiint}}}\showmybox}%
{\mybox{\SIn{1.1pt}{-0pt}{\scalebox{0.7}{$\scriptscriptstyle\mathsf C$}}{%
\scriptscriptstyle\phantom{\iiint}}}\showmybox}%
\mathchoice%
{\mybox{\SIn{9.3pt}{-0pt}{\mathsf C}{\displaystyle\phantom{\iiint}}}\showmybox}
{\mybox{\SIn{7pt}{-0pt}{\scriptstyle\mathsf C}{\textstyle\phantom{\iiint}}}\showmybox}
{\mybox{\SIn{5.5pt}{-0pt}{\scriptscriptstyle\mathsf C}{\scriptstyle\phantom{\iiint}}}\showmybox}%
{\mybox{\SIn{4.4pt}{-0pt}{\scalebox{0.7}{$\scriptscriptstyle\mathsf C$}}{%
\scriptscriptstyle\phantom{\iiint}}}\showmybox}%
\iiint}

\newcommand{\sft}{{\sf t}}

\newcommand{\sfid}{\mathsf{id}}

\definecolor{outrageousorange}{rgb}{1.0, 0.43, 0.29}

\newcommand{\astringsk}{\mathfrak{string}_{\rm sk}}
\newcommand{\astringl}{\mathfrak{string}_{\rm lp}}

\newcommand{\ainn}{\mathfrak{inn}}
\newcommand{\sInn}{\mathsf{Inn}}
\newcommand{\sW}{\mathsf{W}}

\DeclareMathOperator\Pexp{P\,exp}
\DeclareMathOperator\dom{dom}
\DeclareMathOperator\codom{codom}

\iffancy
\usepackage{stmaryrd}
\else
\newcommand\llbracket{[\![}
\newcommand\rrbracket{]\!]}
\fi

\begin{document}
    \begin{titlepage}
        \begin{flushright}
            EMPG--19--24
        \end{flushright}
        \vskip2.0cm
        \begin{center}
            {\LARGE \bf
                Adjusted Parallel Transport\\[0.2cm]
                for Higher Gauge Theories
            }
            \vskip1.5cm
            {\Large Hyungrok Kim and Christian Saemann}
            \setcounter{footnote}{0}
            \renewcommand{\thefootnote}{\arabic{thefootnote}}
            \vskip1cm
            {\em Maxwell Institute for Mathematical Sciences\\
                Department of Mathematics, Heriot--Watt University\\
                Colin Maclaurin Building, Riccarton, Edinburgh EH14 4AS, U.K.}\\[0.5cm]
            {Email: {\ttfamily hk55@hw.ac.uk~,~c.saemann@hw.ac.uk}}
        \end{center}
        \vskip1.0cm
        \begin{center}
            {\bf Abstract}
        \end{center}
        \begin{quote}
            Many physical theories, including notably string theory, require non-abelian higher gauge fields defining higher holonomy. Previous approaches to such higher connections on categorified principal bundles require these to be fake flat. This condition, however, renders them locally gauge equivalent to connections on abelian gerbes. For particular higher gauge groups, for example 2-group models of the string group, this limitation can be overcome by generalizing the notion of higher connection. Starting from this observation, we define a corresponding generalized higher holonomy functor which is free from the fake flatness condition, leading to a truly non-abelian parallel transport.
        \end{quote}

    \end{titlepage}
    
    \tableofcontents

    \section{Introduction and results}
        
    \subsection{Motivation}
    
    Non-abelian higher gauge fields arise in a number of physical contexts, ranging from six-dimensional conformal field theory over supergravity theories to string/M-theory. Such gauge fields are meant to describe higher holonomies, arising from a parallel transport along higher-dimensional spaces, e.g.~surfaces. 
    
    In particular, the classical string couples to the Kalb--Ramond 2-form field $B$, which is part of the connection of an abelian gerbe. This is the higher analogue of a particle coupling to a Maxwell gauge potential $A$, which is part of the connection on an abelian principal bundle. If we now want to generalize connections on abelian gerbes to potentially self-interacting ones, mimicking the transition from Maxwell fields to Yang--Mills fields, we face a number of problems. Using the appropriate language of 2-categories and functorial definitions of higher principal bundles, of their connections and of the induced parallel transport, most of these\footnote{e.g.~the Eckmann--Hilton type argument forbidding a na\"ive non-abelian higher parallel transport} are readily overcome.
    
    We arrive at a theory of non-abelian gerbes or higher principal bundles with connections~\cite{Breen:math0106083,Aschieri:2003mw,Baez:2004in,Baez:0511710} together with an induced parallel transport~\cite{Baez:2002jn,Girelli:2003ev,Baez:2004in,Schreiber:0705.0452,Schreiber:0802.0663,Schreiber:2008aa}; see also~\cite{Soncini:2014zra} as well as~\cite{Baez:2010ya} for an introduction. Topologically, these non-abelian higher principal bundles are simultaneous generalizations of (non-abelian) principal fiber bundles and abelian gerbes. The connections they carry, however, merely generalize connections of abelian gerbes. Consistency of the underlying differential cocycles requires that a particular curvature component, known as {\em fake curvature}, vanishes. This fake flatness condition also arises from a higher Stokes' theorem, guaranteeing invariance of the induced higher parallel transport under reparametrizations.
    
    Thus, the fake flatness condition forbids a straightforward interpretation of ordinary principal bundles with connections as non-abelian higher principal bundles with connection. This is surprising because categorification usually implies generalization: a set is trivially a category, a category is trivially a 2-category, a group is trivially a 2-group, and, indeed, a principal bundle is trivially a principal 2-bundle; but not every principal bundle with connection is a principal 2-bundle with connection in the sense of~\cite{Breen:math0106083,Aschieri:2003mw,Baez:2004in,Baez:0511710}. Even worse, connections on non-abelian higher principal bundles are locally gauge equivalent to connections on abelian ones; see~\cite{Saemann:2019dsl} and also~\cite{Gastel:2018joi}. Locally, thus, the extension to non-abelian higher principal bundles is futile, and we can merely hope to answer some topological questions with these, using higher versions of Chern--Simons theories. Other highly interesting theories, such as six-dimensional superconformal field theories involving the tensor multiplet, require local gauge field interactions, also over topologically trivial spaces, and therefore the conventional non-abelian higher principal bundles are inapt for their description.
    
    For certain higher gauge groups, there is a further generalization of the notion of connection that lifts this limitation. A higher gauge algebra\footnote{Note that we always follow the physicists' nomenclature and identify the terms {\em gauge group} and {\em gauge (Lie) algebra} with the structure group and structure Lie algebra of the principal bundle underlying the gauge theory. The gauge group is thus different from the resulting {\em group of gauge transformations}.} $\frL$ gives rise to a differential graded algebra $\sW(\frL)$, called its {\em Weil algebra}. The kinematical data of a higher gauge theory over some local patch $U$ of spacetime is fully encoded in a morphism of differential graded algebras from $\sW(\frL)$ to the de~Rham complex $\Omega^\bullet(U)$. However, the na\"ive generalization of the notion of Weil algebra of a Lie algebra to the Weil algebra of a higher Lie algebra is problematic: the induced definition of invariant polynomials is not compatible with quasi-isomorphisms, which are the appropriate notion of isomorphisms for higher Lie algebras. For particular higher Lie algebras $\frL$, this incompatibility can be overcome by particular deformations of the Weil algebra $\sW(\frL)$~\cite{Sati:2008eg,Sati:2009ic}.
    
    At the field theory level, the BRST complex describing infinitesimal gauge transformations and their actions on the fields arising from morphisms $\sW(\frL)\rightarrow \Omega^\bullet(U)$ is not closed. It closes only up to equations of motion corresponding to the fake curvature condition. The aforementioned deformations of the Weil algebra also cure this problem. Such deformed Weil algebras that induce a closed BRST complex were called {\em adjusted Weil algebras} in~\cite{Saemann:2019dsl}.
    
    If the higher gauge group is the \emph{string 2-group}\footnote{more precisely: a 2-group model of the string group}, a higher relative of the spin group, the adjustment leads to \emph{differential string structures}. These are expected to arise in the context of string theory and M-theory; see~\cite{Saemann:2017rjm,Saemann:2017zpd,Saemann:2019dsl}.
    
    Because the adjustment of the Weil algebra lifts the fake flatness condition from the higher differential cocycles, it is natural to wonder if it does the same for the higher parallel transport. This is the main question in this paper, which we answer in the affirmative.

    \subsection{Results}
    
    To simplify the presentation we restrict some parts of our discussion to the example of the loop model of the string Lie 2-group; there is, however, no reason to believe that it does not apply to arbitrary higher Lie groups admitting an adjustment.
    
    We start from the observation that the dual of the Weil algebra of a higher Lie $n$-algebra $\frL$ is isomorphic to the Lie $(n+1)$-algebra of inner derivations $\ainn(\frL)$ of $\frL$; details are found in appendix~\ref{app:inner_weil}. For a Lie 2-algebra $\frL$, this leads to the Lie 3-algebra $\ainn(\frL)$, which can be described in two ways: first, as a 3-term $L_\infty$-algebra and second, as a 2-crossed module of Lie algebras, as done in~\cite{Roberts:0708.1741}. The former is directly obtained from the Weil algebra, while the latter contains some additional information and is readily integrated to a 2-crossed module of Lie groups. In appendix~\ref{app:Lie3_2_crossed}, we explain in detail the correspondence between Lie 3-algebras and 2-crossed modules of Lie algebras.
    
    The adjustment of the Weil algebra amounts thus to an adjustment of the algebra of inner derivations. One result that certainly deserves further study is that the data required to lift the Lie 3-algebra into a 2-crossed module of Lie algebras is precisely the data encoding the adjustment of the Weil algebra; see sections~\ref{ssec:loop_model_of_Lie_algebra} and~\ref{ssec:adjusted_inner_derivation}. There, we also compute the corresponding integrated adjusted inner automorphism Lie 2-groups in the form of a 2-crossed module of Lie groups and compare them to the unadjusted forms.
    
    For simplicity, we focus our discussion on {\em local} parallel transport over a contractible patch $U$ of the spacetime manifold; gluing the local picture to a global one is mostly a technicality. Local parallel transport in ordinary gauge theory with gauge group\footnote{i.e.~the structure group of an underlying principal bundle} $\sG$ is essentially a functor $\Phi$ from the path groupoid $\CP U$, which has points in $U$ as its objects and paths between these as morphisms, to the one-object groupoid $\sB\sG$ which has $\sG$ as its group of morphisms:
    \begin{equation}
        \Phi\colon \CP U\longrightarrow \sB\sG~,\hspace{2cm}
        \begin{tikzcd}
            \mbox{paths} \arrow[d,shift left] \arrow[d,shift right]\arrow[r]{}{\Phi_1} & \sG \arrow[d,shift left] \arrow[d,shift right]\\
            U \arrow[r]{}{\Phi_0} & * 
        \end{tikzcd}
    \end{equation}
    see section~\ref{ssec:parallel} for technical details. Any path is thus associated to a group element such that constant paths are mapped to $\unit_\sG$ and composition of paths leads to multiplication of the corresponding group elements. These are the axioms of a parallel transport. A connection 1-form is readily extracted from considering infinitesimal paths and conversely, a connection 1-form maps a path to a group element by the usual path-ordered exponential.
    
    An alternative yet equivalent picture is obtained from the short exact sequence of groupoids
    \begin{equation}
        *\longrightarrow \begin{array}{c} \sG\\[-0.75ex] \downdownarrows\\[-0.3ex]\sG\end{array} \hooklongrightarrow \sInn(\sG)\longrightarrow \begin{array}{c} \sG\\[-0.75ex] \downdownarrows\\[-0.75ex]*\end{array}\longrightarrow *~,
    \end{equation}
    where $\sInn(\sG)$ is the Lie 2-group of inner automorphisms. Instead of a functor from $\CP U$ to $\sB\sG$, we can also consider a 2-functor $\varPhi$ from the 2-groupoid $\CP_{(2)} U$ of points, paths and paths or homotopies between paths to $\sB\sInn(\sG)$. We find that the globular identities\footnote{i.e.~the relations between domain and codomain for morphisms and higher morphisms: the domain of the domain is the domain of the codomain, and the codomain of the domain is the codomain of the codomain} for $\varPhi$ reduce the defining data to the same as for $\Phi$, which simply corresponds to Stokes' theorem at the level of connection 1-forms and curvature 2-forms.
    
    Gauge transformations are encoded in natural transformations between the corresponding functors. In the case of 2-functors $\varPhi:\CP_{(2)} U\rightarrow \sB\sInn(\sG)$, we must restrict the 2-natural transformations to obtain the correct set of gauge transformations, as we explain in detail in section~\ref{ssec:derived_parallel_ordinary}.
    
    In the context of higher gauge theory with Lie 2-group $\CCG$, similarly, a strict 2-functor $\Phi\colon\CP_{(2)}U\rightarrow \sB\CCG$ induces a strict 3-functor $\varPhi\colon\CP_{(3)} U \rightarrow \sB\sInn(\CCG)$. In both cases, the fake curvature condition appears as a necessary condition for the existence of these functors.
    
    If we replace, however, the 3-group of inner automorphisms $\sInn(\sString(\sG))$ with the 3-group of {\em adjusted} inner automorphisms $\sInn_{\rm adj}(\CCG)$ and consider a strict 3-functor
    \begin{equation}
        \varPhi^{\rm adj}\colon\CP_{(3)}U\longrightarrow \sB\sInn_{\rm adj}(\sString(\sG))~,
    \end{equation}
    we obtain a new higher-dimensional parallel transport. The higher-dimensional Stokes' theorem is automatically satisfied and merely enforces the definition of higher curvatures together with the corresponding higher Bianchi identities. This parallel transport is truly non-abelian and underlies the self-interacting field theories constructed from the kinematical data arising from the adjusted Weil algebra. Contrary to the unadjusted parallel transport, this 3-functor only simplifies to a 2-functor if the underlying connection is fake flat.
    
    Altogether, we conclude that while for ordinary (higher) gauge theories there exist two fully equivalent ways of defining parallel transport, this is no longer the case if we aim for an adjusted higher parallel transport, where only one of these definitions is possible. In particular, a general higher parallel transport along $d$-dimensional volumes with underlying gauge $d$-group $\CCG$ which admits an adjusted higher $(d+1)$-group of inner automorphisms $\sInn_{\rm adj}(\CCG)$ is based on a strict $(d+1)$-functor
    \begin{equation}
        \varPhi^{\rm adj}\colon\CP_{(d+1)}U\longrightarrow \sB\sInn_{\rm adj}(\CCG)~.
    \end{equation}
    Gauge transformations arise from appropriately restricted $d$-natural transformations. For $d=1$, $\sInn(\CCG)$ is always adjusted, since there are no higher curvatures, and the 2-functor simplifies to a functor $\Phi:\CP U\rightarrow \sB\CCG$, reproducing the usual higher transport. If an adjustment is not possible for $d>1$, then only an unadjusted parallel transport exists, which is locally gauge equivalent to an abelian one. 
    
    Our discussion extends in principle straightforwardly to higher dimensions, except that one should use simplicial models of the required higher path groupoids and higher Lie groups, as~in~e.g.~\cite{Jurco:2016qwv}; the technicalities of higher coherence laws will otherwise overwhelm.
        
    \subsection{Why categories, groupoids and all that?}
    
    Because we hope to reach also a less ``categorically minded'' audience with this paper, we outline the interrelations between all the categorical and higher structures and the reasons for them arising in this paper.
    
    An experimenter observes the Aharonov--Bohm effect and concludes that nature associates to each path a phase, i.e.~an element of $\sU(1)$. The phases add when paths are concatenated; the phases invert when paths are inverted. One would call the set of paths a group, except that paths only compose when the endpoints match. We instead call it a \emph{groupoid}, or more precisely the {\em path groupoid} of the space, and regard the points in space as objects and the paths as maps or {\em morphisms} between their endpoints, called {\em domain} and {\em codomain}:
    \begin{equation}
        \begin{gathered}
            \begin{tikzcd}
                \codom(\gamma)=y&[-1.2cm]\bullet &[1cm] \arrow[l,bend right,swap]{}{\textrm{path}~\gamma}\bullet&[-1.2cm]x=\dom(\gamma)
            \end{tikzcd}\\
            \begin{tikzcd}
                x &[-1.2cm]\bullet \arrow[loop,swap]{}{\id_x}
            \end{tikzcd}~~~
            \begin{tikzcd}
                y&[-1.2cm]\bullet \arrow[r,bend left]{}{\gamma^{-1}} &[1cm] \bullet&[-1.2cm]x
            \end{tikzcd}~~~
            \begin{tikzcd}
                \bullet & \ar[l,bend right,swap]{}{\gamma_2}\bullet & \bullet \ar[l,bend right,swap]{}{\gamma_1}\ar[ll,bend left]{}{\gamma_2\circ \gamma_1}
            \end{tikzcd}
        \end{gathered}
    \end{equation}
    The group of phases, $\sU(1)$, can also be regarded as a groupoid with morphisms $\sU(1)$ taking a single object $*$ to itself:
    \begin{equation}
        \begin{tikzpicture}[baseline=(current bounding box.center)]
            \node (a) {$*$};
            \path[scale=2] 
            (a) edge [loop right, "$\de^{\frac{1}{2}\pi\di}$"]
            (a) edge [loop below, "$\de^{\pi\di}$"] (a)
            (a) edge [loop above, "$\unit$"] (a);
            \draw [densely dotted]  (240:1.5em) arc (240:120:1.5em);
            \draw [densely dotted]  (60:1.5em) arc (60:30:1.5em);
            \draw [densely dotted]  (330:1.5em) arc (330:300:1.5em);
        \end{tikzpicture}
    \end{equation}
    We call this groupoid $\sB\sU(1)$. The Berry phase, which mathematicians call \emph{holonomy}, maps the objects and morphism in the path groupoid to objects and morphisms in the groupoid $\sB\sU(1)$ of phases. This generalization of a group homomorphism is called a {\em functor}.\footnote{The terminology is borrowed from philosophy: more general groupoids are called {\em categories}, which Saunders Mac~Lane took from Kant; he also took the term {\em functor} from Carnap.}
    
    Over time, physicists have discovered two variations on the theme. One, discovered by Yang and Mills, replaces the abelian group of phases $\sU(1)$ with non-abelian ones, as necessary for describing strong and weak nuclear forces. The other variation generalizes paths to surfaces and higher-dimensional spaces, as necessary for field theory on higher-dimensional spacetimes. String theory seems to require both variations at the same time in e.g.~stacks of NS5- or M5-branes, which have strings in them with self-interacting higher non-abelian gauge fields. Defining the right generalization of the underlying phases is important to fundamentally understand this physics.
    
    Both variations are captured by essentially obvious generalizations of the holonomy functor, exemplifying the utility of functorial descriptions of mathematical objects. Replacing the groupoid $\sB\sU(1)$ by the groupoid $\sB\sG$ for a non-abelian gauge group $\sG$ is straightforward. The generalization to a higher-dimensional parallel transport requires the development of higher-dimensional groupoids containing points, paths between points, paths between paths, etc., but also this construction is not hard, using geometric intuition. Here, a new feature is that higher morphisms compose in multiple ways: e.g.~in the case of the 2-groupoid of points, paths and paths-between-paths, paths-between-paths compose both vertically and horizontally:
    \begin{equation*}
        \begin{tikzcd}
            \bullet & \lar[""{name=A},""'{name=B}] \lar[bend left=90, ""'{name=C}] \lar[bend right=90, ""{name=D}] \bullet \arrow[Rightarrow, from=B, to=D] \arrow[Rightarrow, from=C, to=A] 
        \end{tikzcd}
        \longmapsto 
        \begin{tikzcd}
            \bullet & \lar[bend left=90, ""'{name=C}] \lar[bend right=90, ""{name=D}] \bullet \arrow[Rightarrow, from=C, to=D] 
        \end{tikzcd}
        \eand
        \begin{tikzcd}
            \bullet & \lar[bend left=60, ""'{name=A}] \lar[bend right=60, ""{name=B}] \bullet \arrow[Rightarrow, from=A, to=B] & 
            \lar[bend left=60, ""'{name=C}] \lar[bend right=60, ""{name=D}] \bullet \arrow[Rightarrow, from=C, to=D]
        \end{tikzcd}
        \longmapsto
        \begin{tikzcd}
            \bullet && 
            \ar[ll, bend left=60, ""'{name=C}] \ar[ll, bend right=60, ""{name=D}] \bullet \arrow[Rightarrow, from=C, to=D]
        \end{tikzcd}
    \end{equation*}
    This gives rise to the term {\em higher-dimensional algebra} for higher categories and higher groupoids. Higher-dimensional groupoids with single objects then describe higher analogues of groups, just as $\sB\sG$ describes the group $\sG$. This process of adding morphisms between morphisms is known as {\em categorification}.
    
    The heavy use of groupoids and their higher-dimensional generalizations is thus due to the ease with which they allow us to reproduce and subsequently generalize relevant mathematical definitions, guaranteeing consistency from the outset.
    
    All terms we use are generalizations or categorifications of the mathematical terms underlying ordinary gauge theories. It is more convenient to describe them by equivalent mathematical objects, and one can easily get lost in the nomenclature.
    
    As described above, the gauge group is categorified to a higher Lie group. We describe Lie 2-groups and Lie 3-groups with the more economical language of {\em crossed modules of Lie groups} and {\em 2-crossed modules of Lie groups}; see appendix~\ref{app:hypercrossed_modules}.
    Just as Lie groups differentiate to Lie algebras, higher Lie groups differentiate to higher Lie algebras. For Lie $n$-algebras, we use three models: we start from {\em $n$-term $L_\infty$-algebras}, see appendix~\ref{app:L_infinity_defs}, which we also describe dually via their function algebras, known as {\em Chevalley--Eilenberg algebras}, and the function algebra of their inner derivations, known as {\em Weil algebras}; see section~\ref{ssec:limitations}. The third description is in terms of \((n-1)\)-crossed modules of Lie algebras; see again appendix~\ref{app:hypercrossed_modules} for the definitions and appendix~\ref{app:Lie3_2_crossed} for the relation to $n$-term $L_\infty$-algebras.
    
    The higher analogue of a principal bundle is a {\em principal $n$-bundle} or an \emph{\((n-1)\)-gerbe}.\footnote{Standard nomenclature often assumes gerbes to be abelian while principal $n$-bundles are unrestricted.} Locally, the description of connections is easily described as morphisms from the Weil algebra of the gauge $L_\infty$-algebra to the de~Rham complex of the local patch; see section~\ref{ssec:limitations}. For a gauge Lie $n$-algebra, one obtains 1-, 2-, \dots, $n$-forms valued in particular parts of the Lie $n$-algebra and corresponding 2-, 3-, \dots, $(n+1)$-form components of the total curvature. All of the latter, except for the form with top degree, are known as {\em fake curvatures}.
    
    For higher-dimensional parallel transport we need {\em higher groupoids}, which, as clear from the higher path groupoid, are essentially collections of objects, morphisms between objects, and higher morphisms between morphisms, such that all morphisms are invertible. We mostly work with strict higher groupoids, i.e.~those for which composition of morphisms is strictly associative and unital. Strict higher \((n+1)\)-groupoids are readily defined by replacing the group of morphisms of a (1-)category with the \(n\)-groupoid of morphisms.
    
    As mentioned before, a \emph{higher group} is defined by a higher groupoid with a single object.\footnote{Pedantically, a \emph{higher group} is obtained by truncating the single object and instead regarding groupoid 1-morphisms as the objects of the higher group and groupoid \((k+1)\)-morphisms as \(k\)-morphisms of the higher group. This generalizes the relation between the \(\sG\) and \(\sB\sG\).}
    Our 3-groups are {\em Gray groups}, which means that the two different ways of evaluating the diagram
    \begin{equation}
        \begin{tikzcd}
            \bullet & \lar[""{name=Aa},""'{name=Ba}] \lar[bend left=90, ""'{name=Ca}] \lar[bend right=90, ""{name=Da}] \bullet \arrow[Rightarrow, from=Ba, to=Da] \arrow[Rightarrow, from=Ca, to=Aa]
            & \lar[""{name=E},""'{name=F}] \lar[bend left=90, ""'{name=G}] \lar[bend right=90, ""{name=H}] \bullet \arrow[Rightarrow, from=F, to=H] \arrow[Rightarrow, from=G, to=E] 
        \end{tikzcd}
    \end{equation}
    are the same, up to isomorphism. For further details, see the literature cited in the respective sections.

    \section{Adjusted Weil algebras}

    \subsection{Local kinematical data of gauge theories from differential graded algebras}
    
    Henri Cartan~\cite{Cartan:1949aaa,Cartan:1949aab} discovered a particularly elegant and useful description of local connection forms on principal fiber bundles. Let $\frg$ be a Lie algebra\footnote{either a finite-dimensional Lie algebra, or an infinite-dimensional Lie algebra with a suitable notion of dual space} with basis $e_\alpha$ and structure constants $f^\gamma_{\alpha\beta}$, such that
    \begin{equation}
        [e_\alpha,e_\beta] \eqqcolon f^\gamma_{\alpha\beta}e_\gamma~.
    \end{equation}
    Dually, $\frg$ can be regarded as the \emph{(graded-commutative) differential graded algebra}
    \begin{equation}
        \sCE(\frg)\coloneqq \big(\bigodot\nolimits^\bullet \frg[1]^*,Q_\sCE\big)=\big(\CC^\infty_{\rm pol}(\frg[1]),Q_\sCE\big)~,
    \end{equation}
    which consists of polynomials in the coordinate functions $t^\alpha\in \frg[1]^*$ of degree one and whose differential $Q_\sCE$ is the homological vector field 
    \begin{equation}
        Q_\sCE=-\tfrac12 f^\gamma_{\alpha\beta}t^\alpha t^\beta \der{t^\gamma}~,~~~|Q|=1~,~~~Q^2=0~.
    \end{equation}
    We call $\sCE(\frg)$ the \emph{Chevalley--Eilenberg algebra} of $\frg$.
    
    Similarly, the Chevalley--Eilenberg algebra of the grade-shifted tangent bundle $T[1]U$ of a local patch $U$ of some manifold $M$ can be identified with the de~Rham complex of $U$,
    \begin{equation}
        \sCE(T[1]U)=\big(\CC^\infty(T[1]U),\dd\big)=\big(\Omega^\bullet(U),\dd\big)~.
    \end{equation}
    Morphisms of differential graded algebras
    \begin{equation}\label{eq:dga_morph_CE}
        \CA:\sCE(\frg)\to \sCE(T[1] U)
    \end{equation}
    preserve the graded algebra structure and are therefore fixed by the image of $t^\alpha$,
    \begin{equation}
        \CA(t^\alpha) \eqqcolon A^\alpha\in \Omega^1(U)~,
    \end{equation}
    a Lie algebra-valued differential form or local connection 1-form $A\coloneqq A^\alpha \tau_\alpha$ on $U$. Compatibility with the differentials on $\sCE(\frg)$ and $\sCE(T[1]U)$ enforces flatness of this connection,
    \begin{equation}
        \begin{aligned}
            (\dd\circ \CA)(t^\alpha)&=(\CA\circ Q_\sCE)(t^\alpha)\\
            \dd A^\alpha&=\CA(-\tfrac12 f^\alpha_{\beta\gamma}t^\beta t^\gamma)=-\tfrac12 f^\alpha_{\beta\gamma}A^\beta \wedge A^\gamma\\
            &\implies F\coloneqq \dd A+\tfrac12 [A,A]=0~.
        \end{aligned}    
    \end{equation}
    Gauge transformations are encoded in partially flat homotopies between two morphisms $\CA$ and $\tilde \CA$ of type~\eqref{eq:dga_morph_CE}. 
    
    To describe non-flat connections, we enlarge the Chevalley--Eilenberg algebra $\sCE(\frg)$ to the {\em Weil algebra}
    \begin{equation}
        \sW(\frg)\coloneqq \left(\bigodot\nolimits^\bullet(\frg[1]^*\oplus \frg[2]^*),Q_\sW\right)=\big(\CC^\infty_{\rm pol}(\frg[1]^*\oplus \frg[2]^*),Q_\sW\big)~,
    \end{equation}
    which consists of polynomials in the coordinate functions $t^\alpha\in\frg[1]^*$ and $\hat t^\alpha=\sigma(t^\alpha)\in \frg[2]^*$, where $\sigma:\frg[1]^*\to \frg[2]^*$ is the shift isomorphism. We extend $\sigma$ trivially to $\frg[1]^*\oplus \frg[2]^*$ by $\sigma(\frg[2]^*)\coloneqq 0$ and as a derivation to $\bigodot\nolimits^\bullet(\frg[1]^*\oplus \frg[2]^*)$. We also extend $Q_\sCE$ to $\bigodot\nolimits^\bullet(\frg[1]^*\oplus \frg[2]^*)$ by demanding that
    \begin{equation}
        Q_\sCE \sigma\coloneqq -\sigma Q_\sCE~.
    \end{equation}
    The homological vector field $Q_\sW$ on $\frg[1]\oplus \frg[2]$ is then defined as
    \begin{equation}
        Q_\sW=Q_\sCE+\sigma~.
    \end{equation}
    Explicitly, we have
    \begin{equation}\label{eq:Weil_of_ordinary_Lie}
        Q_\sW\colon~t^\alpha \mapsto -\tfrac12 f^\alpha_{\beta\gamma} t^\beta t^\gamma + \hat t^\alpha \eand
        \hat t^\alpha\mapsto -f^\alpha_{\beta\gamma} t^\beta \hat t^\gamma~,
    \end{equation}
    where $f^\alpha_{\beta\gamma}$ are again the structure constants of $\frg$.
    
    Without going into further details, we note that the Chevalley--Eilenberg algebra of the tangent Lie algebroid $T[1]U$ can be seen as the Weil algebra of the manifold $U$ regarded as the trivial Lie algebroid over itself, $\sCE(T[1]U)=\sW(U)$.
    
    Non-flat connections are described as morphisms of differential graded algebras
    \begin{equation}\label{eq:dga_morph_Weil}
        \CA\colon \sW(\frg)\to \sW(U)~,
    \end{equation}
    which are fixed by their action on the generators $t^\alpha$ and $\hat t^\alpha$. We define
    \begin{equation}
        \begin{aligned}
            A&\coloneqq A^\alpha \tau_\alpha~,~~~&A^\alpha&\coloneqq \CA(t^\alpha)~,\\
            F&\coloneqq F^\alpha \tau_\alpha~,~~~&F^\alpha&\coloneqq \CA(\hat t^\alpha)~.
        \end{aligned}
    \end{equation}
    Compatibility with the differentials and the graded algebra structure implies that
    \begin{equation}
        F=\dd A+\tfrac12[A,A]\eand \dd F+[A,F]=0~.
    \end{equation}
    We thus recover the definition of the curvature and the Bianchi identity. As stated above, gauge transformations are obtained by partially flat homotopies. Recall that a homotopy between morphisms $\CA,\tilde \CA\colon \sW(\frg)\rightarrow \sW(U)$ is given by a morphism 
    \begin{equation}
        \hat \CA\colon\sW(\frg)\rightarrow \sW(U\times I)~~~\mbox{with}~~~\hat \CA(x,0)=\CA(x)~,~~~\hat \CA(x,1)=\tilde \CA(x)~,
    \end{equation}
    where $I=[0,1]$ and $x\in U$. Let $t$ be the coordinate on $I$. The potential 1-form and the curvature 2-form now naturally decompose into two parts:
    \begin{equation}
        \hat A=\hat A_x+\hat \alpha_t\dd t~,~~~\hat F=\hat F_x+\hat \varphi_t\wedge \dd t~,~~~\hat A_x\left(\der t\right)=\hat F_x\left(\der t\right)=0~.
    \end{equation}
    Partial flatness means $\varphi_t=0$, and we can compute
    \begin{equation}
        \delta A\coloneqq\der{t}\hat A(t)\Big|_{t=0}=\dd_x \hat \alpha_t+[\hat A_x,\hat \alpha_t]\Big|_{t=0}=:\dd_x \alpha+[A,\alpha]~,
    \end{equation}
    and we recover the usual form of infinitesimal gauge transformations.

    \subsection{Limitations of conventional higher gauge theories}\label{ssec:limitations}
    
    A particularly nice feature of Cartan's description of gauge potentials in terms of morphisms of differential graded algebras is its generality: one can easily replace both the domain and the codomain with more general differential graded algebras. In this paper, we are interested in more general domains~\cite{Kotov:2007nr,Sati:2008eg}; see e.g.~\cite{Ritter:2015zur} for more general codomains.
    
    An obvious generalization of the source $\sCE(\frg)$ is obtained by replacing the graded vector space $\frg[1]$ by a more general, $\RZ$-graded vector space 
    \begin{equation}
        E=\bigoplus_{i\in \RZ} E_i~,
    \end{equation}
    again endowed with a nilquadratic vector field $Q$ of degree~1. The resulting differential graded algebras are the Chevalley--Eilenberg algebras of {\em $L_\infty$-algebras}.\footnote{Generalizing $E$ to a vector bundle directly yields Chevalley--Eilenberg algebras of \emph{$L_\infty$-algebroids}.} These are graded vector spaces $\frL=E[-1]$ together with a set of {\em higher products}
    \begin{equation}
        \mu_i\colon\frL^{\wedge i} \to \frL
    \end{equation}
    of degree $|\mu_i|=2-i$. The explicit form of the higher products can be derived from $Q_\sCE$; see appendix~\ref{app:L_infinity_defs} for explicit formulas and our conventions. Because $Q^2=0$, the higher products $\mu_i$ satisfy a generalization of the Jacobi identity, the \emph{homotopy Jacobi identity}; see appendix~\ref{app:L_infinity_defs}.
    If $\frL$ is an $L_\infty$-algebra with underlying graded vector space of the form
    \begin{equation}
        \frL=\frL_{-n+1}\oplus \frL_{-n+2}\oplus \dotsb \oplus \frL_{-1}\oplus \frL_0~,
    \end{equation}
    we say that $\frL$ is an \emph{\(n\)-term \(L_\infty\)-algebra}; it is a model of a Lie \(n\)-algebra. We call an \(L_\infty\)-algebra \emph{strict} if \(\mu_i=0\) for \(i\ge3\).
    
    Morphisms of differential graded algebras from the Chevalley--Eilenberg algebra $\sCE(\frL)$ of an $L_\infty$-algebra $\frL$ to the Weil algebra $\sW(U)$ of a patch $U$ of some manifold yield flat higher connections. General connections can be described by morphisms from the Weil algebra of $\frL$ to $\sW(U)$. The definition of the Weil algebra of an \(L_\infty\)-algebra is a straightforward generalization of the Weil algebra of a Lie algebra:
    \begin{equation}
        \sW(\frL)\coloneqq \big(\bigodot\nolimits^\bullet(\frL[1]^*\oplus \frL[2]^*),Q_\sW\big)~,~~~Q_\sW\coloneqq Q_\sCE+\sigma~,
    \end{equation}
    where $\sigma$ is the trivial extension of the shift isomorphism $\frL[1]^*\to \frL[2]^*$ to $\frL[1]^*\oplus \frL[2]^*$ and further, as a derivation, to $\bigodot\nolimits^\bullet(\frL[1]^*\oplus \frL[2]^*)$, and where $Q_\sCE$ is the extension of the Chevalley--Eilenberg differential by the rule $Q_\sCE \sigma \coloneqq -\sigma Q_\sCE$.
    
    For definiteness, consider a Lie 2-algebra $\frL=\frL_{-1}\oplus \frL_0$. Morphisms of differential graded algebras $\sW(\frL)\to \sW(U)$, where $U$ is a local patch of some manifold $M$, encode the following kinematical data:
    \begin{subequations}\label{eq:unadjusted_fields}
        \begin{align}
            A&\in \Omega^1(U)\otimes \frL_0~,\\
            B&\in \Omega^2(U)\otimes \frL_{-1}~,\\
            F&=\dd A+\tfrac12 \mu_2(A,A)+\mu_1(B)\in \Omega^2(U)\otimes \frL_0~,\\
            \dd F&=-\mu_2(A,F)+\mu_1(H)~,\notag\\
            H&=\dd B+\mu_2(A,B)+\tfrac{1}{3!}\mu_3(A,A,A)\in \Omega^3(U)\otimes \frL_{-1}~,\label{eq:unadjusted_fields_H}\\
            \dd H&=-\mu_2(A,H)+\mu_2(F,B)-\tfrac12\mu_3(A,A,F)~.\notag
        \end{align}
    \end{subequations}
    The infinitesimal gauge transformations are again induced by partially flat infinitesimal homotopies between two morphisms of differential graded algebras. Here, they are parametrized by
    \begin{equation}\label{eq:gauge_parameters}
        (\alpha,\Lambda)\in (\Omega^0(U)\otimes \frL_0)\times (\Omega^1(U)\otimes \frL_{-1})
    \end{equation}
    and read as 
    \begin{subequations}
        \begin{align}
            \delta A&=\dd \alpha-\mu_1(\Lambda)+\mu_2(A,\alpha)~,\\
            \delta B&=\dd \Lambda+\mu_2(A,\Lambda)+\mu_2(B,\alpha)-\tfrac12\mu_3(A,A,\alpha)~,\label{eq:unadjusted_B_gauge_transformation}
\\
            \delta F &= \mu_2(F,\alpha)~,\\
            \delta H &=\mu_2(H,\alpha)+\mu_2(F,\Lambda)-\mu_3(F,A,\alpha)~.\label{eq:unadjusted_H_gauge_transformation}
        \end{align}
    \end{subequations}
    The commutator of two infinitesimal gauge transformations is
    \begin{equation}
        [\delta_{\alpha+\Lambda},\delta_{\alpha'+\Lambda'}]=\delta_{\mu_2(\alpha+\Lambda,\alpha'+\Lambda')+\mu_3(A,\alpha+\Lambda,\alpha'+\Lambda')}+\mu_3(F,\alpha,\alpha')~,
    \end{equation}
    and we have run into the following severe limitation. Gauge transformations only close if the theory is abelian (and thus $\mu_i=0$ for $i\geq 2$) or if the {\em fake curvature}\footnote{In a general higher gauge theory, a \emph{fake curvature} is any curvature form other than the top form.} $F$ vanishes. The situation is not improved by restricting to strict $L_\infty$-algebras (for which $\mu_i$ with $i\geq 3$ vanishes), since there the condition $F=0$ reappears when composing finite gauge transformations. 
    
    Fake flatness also arises in the conventional definition of higher parallel transport; see~e.g.~\cite{Schreiber:0705.0452,Schreiber:2008aa}. Section~\ref{ssec:derived_parallel_higher} further discusses this point.
    
    For all these reasons, fake flatness $F=0$ is a fixed part of the conventional definition of connections on principal 2-bundles in the literature, cf.~\cite{Breen:math0106083,Aschieri:2003mw,Baez:2004in,Baez:0511710}.
    
    The fake flatness condition $F=0$ is now highly problematic due to the following theorem~\cite{Saemann:2019dsl}; see also~\cite{Gastel:2018joi} for a detailed analysis of the involved gauges:
    \begin{theorem}
        A connection on a non-abelian principal 2-bundle is locally gauge equivalent to a connection on an abelian principal 2-bundle.
    \end{theorem}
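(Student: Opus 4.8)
The plan is to work entirely with the local $L_\infty$-data of~\eqref{eq:unadjusted_fields}, since the statement is local and the only extra input is fake flatness $F=0$, which is part of the definition of a connection on a principal $2$-bundle. I would first isolate the genuinely non-abelian degree-zero data and show that fake flatness trivializes it. The key structural observation, which follows from the homotopy Jacobi identities (equivariance of $\mu_1$ and the fact that the Jacobiator of $\mu_2$ on $\frL_0$ lands in $\im\mu_1$), is that $\im\mu_1\subseteq\frL_0$ is an ideal, so that $\pi_0\coloneqq\frL_0/\im\mu_1$ inherits an honest Lie-algebra structure and the projection $\bar A$ of $A$ to $\Omega^1(U)\otimes\pi_0$ is well defined. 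Because $\mu_1(B)\in\im\mu_1$, projecting $F=\dd A+\tfrac12\mu_2(A,A)+\mu_1(B)$ gives $\dd\bar A+\tfrac12[\bar A,\bar A]_{\pi_0}=0$; that is, fake flatness says precisely that $\bar A$ is a \emph{flat} $\pi_0$-connection.

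The second step exploits contractibility of $U$. A flat connection on a contractible patch is pure gauge, so by the Poincaré lemma (integrated to the structure group of $\pi_0$) I can use the ordinary, degree-zero gauge freedom $\alpha$, acting on $A$ by $\delta A=\dd\alpha+\mu_2(A,\alpha)-\mu_1(\Lambda)$ with $\Lambda=0$, to set $\bar A=0$. After this gauge fixing $A$ takes values in $\im\mu_1$, so $A=\mu_1(\tilde\Lambda)$ for some $\tilde\Lambda\in\Omega^1(U)\otimes\frL_{-1}$ obtained from a linear splitting of $\mu_1$. I would then turn on the higher shift symmetry: since $\delta A$ contains the term $-\mu_1(\Lambda)$, the (finite) transformation generated by $\Lambda=\tilde\Lambda$ cancels $A$ completely while leaving $\bar A=0$ unchanged, because the shift moves $A$ only within $\im\mu_1$. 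The two gauge moves are therefore compatible and bring us to the gauge $A=0$.

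In the third step I read off the abelian structure. With $A=0$, gauge invariance of $F$ (namely $\delta F=\mu_2(F,\alpha)$) keeps fake flatness intact, and $F=0$ reduces to $\mu_1(B)=0$, so $B\in\Omega^2(U)\otimes\ker\mu_1$. Since $\mu_2$ maps $\frL_{-1}\wedge\frL_{-1}$ into $\frL_{-2}=0$, the subspace $\ker\mu_1=\pi_1$ is abelian, and with $A=0$ the $3$-form curvature $H=\dd B+\mu_2(A,B)+\tfrac1{3!}\mu_3(A,A,A)$ collapses to $H=\dd B$. The surviving data $(A=0,\,B\in\Omega^2(U)\otimes\ker\mu_1,\,H=\dd B)$, together with the residual transformations $\delta B=\dd\Lambda$ for $\Lambda\in\ker\mu_1$, is exactly the connection data of an abelian gerbe with abelian structure Lie $2$-algebra $\ker\mu_1$, establishing the claim.

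The main obstacle I anticipate is not the infinitesimal bookkeeping above but the passage to \emph{finite} gauge transformations: gauging $\bar A$ to zero requires integrating the flat $\pi_0$-connection to a group-valued parameter, and the shift symmetry removing the $\im\mu_1$-part acts nonlinearly on $B$, so one must verify that the successive finite transformations compose consistently and do not reintroduce a degree-zero curvature. Keeping careful track of precisely which finite gauges are used — and checking that they exist globally over $U$ rather than only formally — is the delicate point, and it is exactly the analysis carried out in~\cite{Gastel:2018joi}.
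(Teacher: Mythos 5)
The paper does not actually prove this theorem: it is quoted from~\cite{Saemann:2019dsl} (see also~\cite{Gastel:2018joi}), so there is no internal proof to compare against. Your sketch is, however, a faithful reconstruction of the argument given in those references, and the infinitesimal bookkeeping is correct: $\operatorname{im}\mu_1\subseteq\frL_0$ is an ideal by the equivariance identity $\mu_1(\mu_2(a,b))=\mu_2(a,\mu_1(b))$, the Jacobiator obstruction lands in $\operatorname{im}\mu_1$, fake flatness projects to flatness of $\bar A$, the $\Lambda$-shift acts on $A$ linearly as $A\mapsto A-\mu_1(\Lambda)$ and hence kills the residual $\operatorname{im}\mu_1$-valued part while leaving $\bar A=0$ and $F=0$ intact (one checks $\delta F=0$ under pure $\Lambda$-transformations using equivariance of $\mu_1$), and the endpoint $(A=0,\ B\in\Omega^2(U)\otimes\ker\mu_1,\ H=\dd B,\ \delta B=\dd\Lambda)$ is indeed the local data of an abelian gerbe.

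Two technical points deserve to be made explicit beyond your closing caveat. First, gauging $\bar A$ to zero produces a group-valued function $\bar g\colon U\to\sG/\sft(\sH)$, but the available gauge freedom is $\sG$-valued, so you must lift $\bar g$ through the quotient $\sG\to\sG/\sft(\sH)$; this requires $\sft(\sH)$ to be a closed normal subgroup so that the quotient map is a fibration, whereupon contractibility of $U$ guarantees the lift. Second, the theorem concerns principal 2-bundles, whose structure 2-groups are strict crossed modules, so the $\mu_3$-terms you carry along vanish and the finite gauge transformations you need are the standard ones; in the genuinely semistrict setting the finite transformations themselves only compose consistently when $F=0$, which is available here by hypothesis but should be invoked explicitly. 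With those caveats your proof is sound and matches the route taken in~\cite{Saemann:2019dsl,Gastel:2018joi}.
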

    This is somewhat surprising. Topologically, ordinary principal bundles are easily interpreted as principal 2-bundles. A Lie group $\sG$ is readily seen as a Lie 2-group, e.g.~in the form of the crossed module of Lie groups\footnote{See appendix~\ref{app:hypercrossed_modules} for definitions.} $*\to\sG$. The cocycles defining a principal 2-bundle with structure 2-group $*\to\sG$ are precisely those of an ordinary principal bundle. As soon as we endow the principal bundle with a connection, however, this embedding breaks; only flat principal bundles can be 2-bundles.

    We also note that the form of the gauge transformations of $H$ makes it very hard to imagine a covariant equation of motion. In particular, a non-abelian (2,0)-theory would involve the self-duality equation in six dimensions; however, the equation $H=\star H$ is not covariant unless $F=0$.
    
    The above observations are \emph{not} specific to kinematical data derived from Lie 2-algebras, but rather constitute a generic feature of higher gauge theories; see~e.g.~the discussion of homotopy Maurer--Cartan theory in~\cite{Jurco:2018sby}. Thus, higher gauge theory as conventionally defined is fake flat and locally abelian. This is well-known in the context of BRST/BV quantization, where these higher gauge theories lead to an ``open'' complex, which closes only modulo equations of motion.

    \subsection{Examples of adjusted Weil algebras}\label{ssec:adjusted_Weils}
    
    The problems outlined in the previous section can be eliminated for some gauge $L_\infty$-algebras by deforming their Weil algebras. This deformation was first discussed in the context of the string Lie 2-algebra in~\cite{Sati:2008eg,Sati:2009ic}; see also~\cite{Schmidt:2019pks} and~\cite{Saemann:2019dsl}. 
    
    Given an $L_\infty$-algebra $\frL$, the Weil algebra $\sW(\frL)$ projects onto the Chevalley--Eilenberg algebra $\sCE(\frL)$. We call a deformation $\sW_{\rm adj}(\frL)$ of $\sW(\frL)$ an \emph{adjusted Weil algebra}~\cite{Saemann:2019dsl}, if the underlying graded algebra is isomorphic to $\sW(\frL)$, the projection onto the Chevalley--Eilenberg algebra is not deformed, and the resulting BRST complex is closed. The last condition amounts to closure of the gauge transformations without any restriction on gauge parameters or gauge fields.
    
    This deformation to an adjusted Weil algebra can already be motivated on purely algebraic grounds: the Weil algebra contains the vector space of invariant polynomials, whose definition is only compatible with quasi-isomorphism\footnote{This is the appropriate notion of isomorphism here; see appendix~\ref{app:Equivalence}.} after the deformation; see~\cite{Saemann:2019dsl}.
    
    \paragraph{Skeletal model of the string 2-algebra.}
    As a first example, consider the skeletal string Lie 2-algebra 
    \begin{equation}
        \begin{array}{ccccccc}
            \astringsk(\frg)&=&(& \FR &\xrightarrow{~\mu_1~} &\frg&)\\
            &&& r&\xmapsto{~\mu_1~} & 0
        \end{array}
    \end{equation}
    for some metric Lie algebra\footnote{A \emph{metric Lie algebra} is a Lie algebra equipped with a nondegenerate (but not necessarily positive-definite) bilinear form \(\langle-,-\rangle\) that is invariant under the adjoint action.} $\frg$ with
    \begin{subequations}
        \begin{align}
            &\mu_2\colon\frg\wedge \frg \to \frg~, &\mu_2(a_1,a_2)&=[a_1,a_2]~,\\
            &\mu_3\colon\frg\wedge \frg\wedge \frg \to \FR~, &\mu_3(a_1,a_2,a_3)&=\big\langle a_1,[a_2,a_3]\big\rangle~,
        \end{align}
    \end{subequations}
    where $\langle-,-\rangle \colon\frg\times\frg\to\mathbb R$ denotes the invariant metric on $\frg$. Let $e_\alpha$ be a basis of $\frg$ and $f^\alpha_{\beta\gamma}$ and $\kappa_{\alpha\beta}$ be the structure constants and the components of the metric, respectively. The unadjusted Weil algebra is generated by coordinate functions $t^\alpha, r$ of degrees~1 and~2 respectively on $\frL[1]$ as well as their shifted copies $\hat t^\alpha=\sigma t^\alpha$ and $\hat r=\sigma r$ of degrees~2 and~3 respectively. The differential acts according to 
    \begin{equation}
        \begin{aligned}
            Q_\sW~&:~&t^\alpha &\mapsto -\tfrac12 f^{\alpha}_{\beta\gamma} t^\beta  t^\gamma + \hat t^\alpha~,~~~&r &\mapsto \tfrac1{3!} f_{\alpha\beta\gamma} t^\alpha  t^\beta   t^\gamma + \hat r~,\\
            &&\hat t^\alpha &\mapsto -f^\alpha_{\beta\gamma} t^\beta   \hat t^\gamma~,~~~&\hat r &\mapsto -\tfrac12 f_{\alpha\beta\gamma} t^\alpha  t^\beta   \hat t^\gamma~,
        \end{aligned}
    \end{equation}
    where $f_{\alpha\beta\gamma}\coloneqq \kappa_{\alpha\delta}f^\delta_{\beta\gamma}$. An adjusted form of this Weil algebra which we shall denote by $\sW_{\rm adj}(\astringsk(\frg))$ has (by definition) the same generators, but the differential $Q_{\sW_{\rm adj}}$ acts as
    \begin{equation}
        \begin{aligned}
            Q_{\sW_{\rm adj}}~&:~&t^\alpha &\mapsto -\tfrac12 f^\alpha_{\beta\gamma} t^\beta  t^\gamma + \hat t^\alpha~,~~~&r &\mapsto \tfrac{1}{3!} f_{\alpha\beta\gamma} t^\alpha  t^\beta  t^\gamma -\kappa_{\alpha\beta}t^\alpha\hat t^\beta+ \hat r~,\\
            &&\hat t^\alpha &\mapsto -f^\alpha_{\beta\gamma} t^\beta  \hat t^\gamma~,~~~&\hat r&\mapsto \kappa_{\alpha\beta}\hat t^\alpha\hat t^\beta~.
        \end{aligned}
    \end{equation}
    The kinematical data for a gauge theory on a local patch $U$ is then given by 
    \begin{subequations}
        \begin{align}
            A&\in\Omega^1(U)\otimes\frg~,\\
            B&\in\Omega^2(U)~,\\
            F&\coloneqq \dd A+\tfrac12 [A,A]\in \Omega^2(U)\otimes \frg~,&
            \dd F+[A,F]&=0~,\\
            H&\coloneqq \dd B-\tfrac{1}{3!}\mu_3(A,A,A)+\langle A,F\rangle\in \Omega^3(U)~,&
            \dd H-\langle F,F\rangle&=0
        \end{align}
    \end{subequations}
    with gauge transformations
    \begin{subequations}
        \begin{align}
            \delta A&=\dd \alpha+\mu_2(A,\alpha)~,\\
            \delta B&=\dd \Lambda+\langle\alpha,F\rangle-\tfrac12 \mu_3(A,A,\alpha)~,\\
            \delta F&=-\mu_2(F,\alpha)~,\\
            \delta H&=0~,
        \end{align}
    \end{subequations}
    where \(\alpha\in\Omega^0(U)\otimes\frg\) and \(\Lambda\in\Omega^1(U)\) parametrize infinitesimal gauge transformations. The commutator of two gauge transformations now closes as expected, and the BRST complex of these fields is indeed closed; see~\cite{Saemann:2019dsl}. Moreover, writing down covariant field equations for $H$ has become easier.
    
    Such connections arise naturally in the context of heterotic supergravity, as well as in non-abelian self-dual strings and six-dimensional superconformal field theories~\cite{Saemann:2017rjm,Saemann:2017zpd,Saemann:2019dsl}. For references to the original literature on string structures and a detailed explanation of their relevance, see also~\cite{Saemann:2017zpd}.
    
    \paragraph{Loop model of the string 2-algebra.}
    Since we wish to discuss parallel transport, we need finite descriptions of gauge transformations and their actions. The skeletal model is not a strict \(L_\infty\)-algebra, and hence not well suited for integration. It is more convenient to work with the loop model, which is quasi-isomorphic\footnote{Quasi-isomorphism implies that the two models define physically equivalent kinematical data; see appendix~\ref{app:Equivalence} for definitions.} to the skeletal model:
    \begin{equation}
        \begin{array}{ccccccc}
            \astringl(\frg)&=&(& \hat L_0\frg &\xrightarrow{~\mu_1~} &P_0\frg&)\\
            &&& (\lambda,r)&\xmapsto{~\mu_1~} & \lambda
        \end{array}
    \end{equation}
    where $P_0\frg$ and $L_0\frg$ are based path and loop spaces, respectively, of $\frg$ and $\hat L_0\frg=L_0\frg\oplus\FR$ is the vector space underlying the Lie algebra obtained by the Kac--Moody extension; for technical details see appendix~\ref{app:path_groups}. The loop model is a strict 2-term \(L_\infty\)-algebra (i.e.~\(\mu_i=0\) for \(i\ge3\)); the unary product \(\mu_1\) was given above, and the binary product \(\mu_2\) is as follows:
    \begin{subequations}
        \begin{align}
            P_0\frg\wedge  P_0\frg&\to P_0\frg~,&(\gamma_1,\gamma_2)&\mapsto[\gamma_1,\gamma_2]~,\\
            P_0\frg\otimes\hat L_0 \frg[1]&\to \hat L_0 \frg[1]~,~~~&\big(\gamma,(\lambda,r)\big)&\mapsto\left([\gamma,\lambda]\; ,\; -2\int_0^1 \dd\tau \big\langle\gamma(\tau),\dot \lambda(\tau)\big\rangle\right)~,
        \end{align}
    \end{subequations}
    where $\dot{-}$ labels the derivative with respect to the path or loop parameter.
    
    The corresponding Weil algebra is generated by coordinate functions $t^{\alpha \tau}$, $r^{\alpha \tau}$, $r_0$ as well as their shifted counterparts. The differential \(Q_\sW\) acts as
    \begin{equation}\label{eq:Weil_string_Lie_2_loop}
        \begin{aligned}
            t^{\alpha\tau} &\mapsto -\tfrac12 f^{\alpha}_{\beta\gamma} t^{\beta\tau} t^{\gamma\tau} - r^{\alpha\tau} +\hat{t}^{\alpha\tau}~,~~~ 
            &\hat{t}^{\alpha\tau}&\mapsto -f^\alpha_{\beta\gamma}t^{\beta\tau}\hat{t}^{\gamma\tau}+\hat{r}^{\alpha\tau}~,\\
            r^{\alpha\tau} &\mapsto -f^\alpha_{\beta\gamma} t^{\beta\tau}r^{\gamma\tau} + \hat{r}^{\alpha\tau}~,~~~
            &\hat{r}^{\alpha\tau}&\mapsto-f^\alpha_{\beta\gamma} t^{\beta\tau}\hat{r}^{\gamma\tau} + f^\alpha_{\beta\gamma} \hat{t}^{\beta\tau}r^{\gamma\tau} ~,\\
            r_0 &\mapsto 2\int_0^1\dd \tau\, \kappa_{\alpha\beta}t^{\alpha\tau} \dot{r}^{\beta\tau}+\hat{r}_0 ~,~~~&\hat{r}_0&\mapsto 2\int_0^1\dd\tau\,\kappa_{\alpha\beta}\left(t^{\alpha\tau} \smash{\dot{\hat{r}}}^{\beta\tau}-\hat{t}^{\alpha\tau}\dot{r}^{\beta\tau}\right)~,
        \end{aligned}
    \end{equation}
    and here, an adjustment reads as
    \begin{equation}\label{eq:Weil_string_Lie_2_loop_adj}
        \begin{aligned}
            Q_{\sW_{\rm adj}}:~t^{\alpha\tau} &\mapsto -\tfrac12 f^{\alpha}_{\beta\gamma} t^{\beta\tau} t^{\gamma\tau} - r^{\alpha\tau} +\hat{t}^{\alpha\tau}~,~~~&\hat{t}^{\alpha\tau}&\mapsto -f^\alpha_{\beta\gamma}t^{\beta\tau}\hat{t}^{\gamma\tau}+\chi^{\alpha\tau}(t,\hat{t})+\hat{r}^{\alpha\tau}~,\\
            r^{\alpha\tau} &\mapsto -f^\alpha_{\beta\gamma} t^{\beta\tau}r^{\gamma\tau} +\chi^{\alpha\tau}(t,\hat t)+ \hat{r}^{\alpha\tau}~,~
            &\hat{r}^{\alpha\tau}&\mapsto0~,\\
            r_0 &\mapsto 2\int_0^1\dd\tau\,\kappa_{\alpha\beta}t^{\alpha\tau} \dot r^{\beta\tau}+\chi(\dot t,\hat t)+\hat{r}_0&\hat{r}_0&\mapsto -\chi(\dot{\hat t},\hat t)~,
        \end{aligned}
    \end{equation}
    where we introduced a function $\chi$ with components
    \begin{subequations}
    \begin{align}
        \chi^{\alpha\tau}(t,\hat t)&\coloneqq f^\alpha_{\beta\gamma}(t^{\beta\tau}\hat t^{\gamma\tau}-\ell(\tau)t^{\beta 1}\hat t^{\gamma 1})~,\\
        \chi(\dot t,\hat{t})&\coloneqq 2\int_0^1\dd\tau\,\kappa_{\alpha\beta} \dot t^{\alpha\tau} \hat t^{\beta\tau}~,\\
        \chi(\dot{\hat t},\hat{t})&\coloneqq 2\int_0^1\dd\tau\,\kappa_{\alpha\beta} \dot{\hat t}^{\alpha\tau} \hat t^{\beta\tau}~,
    \end{align}
    \end{subequations}
    and where $\ell(\tau)$ is an arbitrary smooth function $\ell\colon[0,1]\to [0,1]$ with $\ell(0)=0$ and $\ell(1)=1$. The kinematical data encoded in a morphism $\sW_{\rm adj}(\astringl(\frg))\to \sW(U)$ is then
    \begin{subequations}\label{eq:adjusted_loop_fields}
        \begin{align}
            A&\in \Omega^1(U)\otimes P_0\frg~,\\
            B&\in\Omega^2(U)\otimes \hat L_0\frg~,\\
            F&\coloneqq \dd A+\tfrac12 [A,A]+\mu_1(B)~, &
            \dd F+[A,F]-\mu_1(\chi(A,F))&=\mu_1(H)~,\label{eq:fake_curvature_bianchi_identity}\\
            H&\coloneqq \dd B+\mu_2(A,B)-\chi(A,F)~, &
            \dd H+\chi(F,F)&=0\label{eq:H_bianchi_identity}
        \end{align}
    \end{subequations}
    with gauge transformations
    \begin{subequations}
        \begin{align}
            \delta A&=\dd \alpha+\mu_2(A,\alpha)+\mu_1(\Lambda)~,\\
            \delta B&=\dd \Lambda+\mu_2(A,\Lambda)+\mu_2(\alpha,B)-\chi(\alpha,F)~,\\
            \delta F&=-\mu_1(\chi(\alpha,F))-\mu_2(F,\alpha)~,\\
            \delta H&=0~,
        \end{align}
    \end{subequations}
    where the gauge transformations are parametrized by elements
    \begin{equation}
        \alpha\in\Omega^0(U)\otimes P_0\frg\eand
        \Lambda\in\Omega^1(U)\otimes \hat L_0 \frg
    \end{equation}
    and where $\chi$ is here the function
    \begin{equation}\label{eq:ext-kappa-definition}
        \begin{aligned}
            \chi & \colon &P_0\frg \times P_0\frg &\to \hat L_0\frg \\
            &&(\gamma_1,\gamma_2)&\mapsto \left([\gamma_1,\gamma_2] - \ell(\tau)\partial([\gamma_1,\gamma_2]), 2\int_0^1\dd\tau \langle\dot\gamma_1,\gamma_2\rangle \right).
        \end{aligned}
    \end{equation}
    
    If we now look at just the transformations parametrized by $\alpha$ and trivial $\Lambda$, the various fields transform under different \(P_0\sG\)-representations, as a result of the adjustment. For example, \(H\), which before adjustment transformed under the adjoint representation of \(\sG\), is now invariant. Similarly, the fake curvature \(F\) now transforms differently, and the covariant derivative acts on it as
    \begin{equation}
        \mathrm d_AF \coloneqq \dd F+[A,F] - \mu_1(\chi(A,F))~,
    \end{equation}
    which can be seen from~\eqref{eq:fake_curvature_bianchi_identity}. The 2-form potential \(B\), which used to transform on its own, now forms a multiplet with \(F\), unlike in the unadjusted case.\footnote{After adjustment, the fake curvature \(F\) still transforms as a representation on its own, but \(B\) only forms a representation together with \(F\).} This reflects the fact that the adjustment of the Weil algebra requires an adjustment of the 2-crossed module (in which the parallel transport functor takes value) encoding the representations.
    
    The advantage of the crossed module of Lie algebras $\astring_{\rm lp}(\frg)$ over the 2-term $L_\infty$-algebra $\astring_{\rm sk}(\frg)$ is now that it readily integrates to the crossed module of Lie groups
    \begin{equation}
        \sString_{\rm lp,\cm}(\sG)=(L_0\sG\rightarrow P_0\sG)~.
    \end{equation}
    The integration of $\astring_{\rm sk}(\frg)$ is much harder; see~\cite{Schommer-Pries:0911.2483,Demessie:2016ieh}.

    \paragraph{The loop model of a Lie algebra.}
    There is an interesting truncation in the loop model of the string Lie 2-algebra, namely the 2-term \(L_\infty\)-algebra
    \begin{equation}\label{eq:loop_model_of_Lie_algebra}
        \frg_{\rm lp}\coloneqq  (L_0\frg\hooklongrightarrow P_0\frg)~.
    \end{equation}
    This 2-term \(L_\infty\)-algebra is quasi-isomorphic to the Lie algebra $\frg$; see appendix~\ref{app:Equivalence}. Together with the adjustment, it also allows us to interpret the connection on an ordinary principal fiber bundle as a connection on a principal 2-bundle~\cite{Saemann:2017rjm}.
    We can construct a $\frg_{\rm lp}$-valued connection \((A_\text{lp},B_\text{lp})\) from a $\frg$-connection \(A\in\Omega^1(M)\otimes\frg\) as
    \begin{equation}
        A_{\rm lp}=A\ell(\tau)~,~~~B_{\rm lp}=\tfrac12[A,A](\ell(\tau)-\ell^2(\tau))~.
    \end{equation}
    Note that 
    \begin{equation}
        F_{\rm lp}=\dd A_{\rm lp}+\tfrac12[A_{\rm lp},A_{\rm lp}]+\sft(B_{\rm lp})=\ell(\tau)F_{\rm sk}~. 
    \end{equation}
    Infinitesimal gauge transformations translate according to
    \begin{equation}
        \alpha_{\rm lp}=\alpha_{\rm sk}\ell(\tau)\eand \Lambda_{\rm lp}=[\alpha_{\rm sk},A_{\rm sk}](\ell(\tau)-\ell^2(\tau))~.
    \end{equation}
    Thus, gauge transformations are mapped to gauge transformations and gauge orbits are mapped to gauge orbits. The inverse map is the endpoint evaluation map $\dpar\colon P_0\frg\to \frg$:
    \begin{equation}
        A_{\rm sk}=\dpar A_{\rm lp}\eand \alpha_{\rm sk}=\dpar \alpha_{\rm lp}~.
    \end{equation}
    
    We use both 2-term \(L_\infty\)-algebras $\astringl(\frg)$ and $\frg_{\rm lp}$ as examples for our further discussion leading to an adjusted parallel transport.

    \section{Weil algebras and inner derivations}
    
    The Weil algebra can be interpreted as the inner derivation 2-crossed module of Lie algebras, and the exponentials of potentials and curvatures take values in the Lie 2-group corresponding to this 2-crossed module. After we adjust the Weil algebra, we need to construct the corresponding adjusted 2-crossed module and the Lie 2-group. This is a prerequisite to discussing the parallel transport functor, which takes values in this 2-group. 
    
     \subsection{Inner automorphisms of Lie groups}\label{ssec:inner_group}
    
    The Weil algebra $\sW(\frg)$ of a Lie algebra $\frg$ encodes a 2-term \(L_\infty\)-algebra with underlying graded vector space $\frg\oplus\frg[1]$, which is isomorphic to the 2-term \(L_\infty\)-algebra of inner derivations, $\ainn(\frg)$; see~appendix~\ref{app:inner_weil}. The latter sits in the short exact sequence of graded vector spaces 
    \begin{equation}
        *\longrightarrow 
        \begin{array}{c} *\\[-0.75ex] \downarrow\\[-0.75ex]\frg \end{array}\hooklongrightarrow \ainn(\frg)\longrightarrow 
        \begin{array}{c} \frg[1]\\[-0.3ex] \downarrow\\[-0.75ex]* \end{array}\longrightarrow *~.
    \end{equation}
    For a Lie group \(\sG\) integrating $\frg$, this sequence is the infinitesimal version of the short exact sequence of groupoids,
    \begin{equation}\label{ses:groupoids_Lie}
        *\longrightarrow \begin{array}{c} \sG\\[-0.75ex] \downdownarrows\\[-0.3ex]\sG\end{array} \hooklongrightarrow \sInn(\sG)\longrightarrow \begin{array}{c} \sG\\[-0.75ex] \downdownarrows\\[-0.75ex]*\end{array}\longrightarrow *~.
    \end{equation}
    Here, $\sG\rightrightarrows \sG$ is the Lie group $\sG$, trivially regarded as a groupoid, while $(\sG\rightrightarrows *)=\sB\sG$ is the one-object groupoid with $\sG$ as the group of morphisms. Moreover, $\sInn(\sG)$ is the action groupoid of $\sG$ onto itself by left-multiplication. This is a 2-group, the {\em 2-group of inner automorphisms} of $\sG$. The embedding in the sequence~\eqref{ses:groupoids_Lie} is in fact a morphism of Lie 2-groups, while the second map is merely a groupoid morphism; the groupoid $\sG\rightrightarrows *$ does not admit a 2-group structure unless $\sG$ is abelian.
    
    These structures have important topological interpretations. The geometric realization $|\sB\sG|$ of the nerve of $\sB\sG$ is the classifying space of $\sG$. Applying the same operations to $\sInn(\sG)$, we recover the universal bundle $|\sE\sG|$ of $\sG$ over $|\sB\sG|$. Also, the action groupoid $\sInn(\sG)=(\sG\rtimes \sG\rightrightarrows \sG)$ is equivalent to the trivial 2-group $(*\rightrightarrows *)$; equivalently, $\ainn(\frg)=(\frg[1]\xrightarrow{~\id~} \frg)$ is quasi-isomorphic to the 0-term \(L_\infty\)-algebra\footnote{by the minimal model theorem; see~appendix~\ref{app:Equivalence}}. This corresponds to the universal bundle \(|\sE\sG|\) being contractible.

    \subsection{Inner automorphisms of strict Lie 2-groups}\label{ssec:inner_derivations}
    
    The generalization to the case of a strict Lie 2-group is discussed in detail in~\cite{Roberts:0708.1741}. The inner automorphisms of a strict Lie 2-group $\CCG$ with corresponding crossed module of Lie groups $\CCG_{\rm \cm}=(\sH\xrightarrow{~\tilde \sft~}\sG,\tildeacton)$ form a Lie 3-group, which is conveniently encoded in the following 2-crossed module\footnote{see again appendix~\ref{app:hypercrossed_modules} for the definition} of Lie groups \((\sInn_{\rm \cm}(\CCG),\acton,\{-,-\})\):
    \begin{equation}
        \begin{matrix}
            \sInn_{\rm \cm}(\CCG)&=&(&\sH&\xrightarrow{~\sft~}&\sH\rtimes \sG&\xrightarrow{~\sft~}&\sG&)\\
            &&&h&\xmapsto{~\sft~}&( h^{-1},\tilde\sft(h))\\
            &&& & &(h,g)&\xmapsto{~\sft~}&\tilde \sft(h)g
        \end{matrix}
    \end{equation}
    where the products and actions are evident, in particular
    \begin{equation}
        (h_1,g_1)(h_2,g_2)=\big(h_1(g_1\tildeacton h_2),g_1g_2\big)~,~~~(h_1,g_1)^{-1}=(g_1^{-1}\tildeacton h_1^{-1},g_1^{-1})~,
    \end{equation}
    and the Peiffer lifting is
    \begin{equation}
        \{(h_1,g_1),(h_2,g_2)\}=(g_1g_2g_1^{-1}\tildeacton h_1) h_1^{-1}
    \end{equation}
    for all $g_1,g_2\in \sG$ and $h_1,h_2\in \sH$.    
    
    There is now a higher analogue of sequence~\eqref{ses:groupoids_Lie} involving 2-groupoids. The crossed module of Lie groups $\CCG_{\rm \cm}$ corresponds to a monoidal category $\CCG=(\sH\rtimes \sG\rightrightarrows \sG)$ (see equation~\eqref{eq:mon_cat_from_2_group}), which is trivially regarded as a strict 2-category with only identity 2-morphisms. Moreover, $\sInn_{\rm \cm}(\sH\xrightarrow{\tilde \sft}\sG)$ corresponds to a monoidal 2-category $\sInn(\CCG)$ encoding a 3-group\footnote{more precisely, a \emph{Gray group}; see equation~\eqref{eq:mon_cat_from_3_group}}. We present its globular structure for use in section~\ref{sec:parallel_transport}.
    \begin{subequations}
    \begin{equation}
        \sInn(\CCG)\coloneqq\big(\sH\rtimes((\sH\rtimes \sG)\rtimes \sG)  \rightrightarrows  (\sH\rtimes \sG)\rtimes \sG  \rightrightarrows \sG\big)
    \end{equation}
    \begin{equation}
        \begin{tikzcd}[column sep=4.0cm,row sep=large]
            \tilde \sft(h^2_2)g_2^2g_2^1 & \ar[l, bend left=45, "{\big((h_2^2,g_2^2),g_2^1\big)}", ""{name=U,inner sep=1pt,above}] \ar[l, bend right=45, "{\big((h_2^2(h_2^1)^{-1}\,,\,\tilde \sft(h_2^1)g_2^2),g_2^1\big)}", swap, ""{name=D,inner sep=1pt,below}] g_2^1
            \arrow[Rightarrow,from=U, to=D, "{\big(h_2^1,(h_2^2,g_2^2),g_2^1\big)}",swap]
        \end{tikzcd}
    \end{equation}
    \end{subequations}
    Finally, we have the 2-groupoid\footnote{The component \(\sH\times\sG\) in~\eqref{eq:2-groupoid_BG} is merely a manifold, not a Lie group, since \(\sB\CCG\) is not a 3-group, but merely a 2-groupoid.}
    \begin{equation}\label{eq:2-groupoid_BG}
        \sB\CCG=\big((\sH\times \sG)  \rightrightarrows \sG \rightrightarrows *\big)~.
    \end{equation}
    These three 2-groupoids now fit in the short exact sequence
    \begin{subequations}\label{eq:ses_2-groupoids}
        \begin{equation}
            * \longrightarrow \CCG \xrightarrow{~\Upsilon~} \sInn(\CCG) \xrightarrow{~\Pi~} \sB\CCG \longrightarrow *~,
        \end{equation}
        whose components are as follows:
        \begin{equation}
            \begin{tikzcd}
                \sH\rtimes\sG \rar["\Upsilon_2"] \dar[shift left] \dar[shift right] & \sH\rtimes\big((\sH\rtimes\sG)\rtimes\sG\big) \rar["\Pi_2"] \dar[shift left] \dar[shift right] & \sH\times \sG \dar[shift left] \dar[shift right] \\
                \sH\rtimes\sG \rar["\Upsilon_1"] \dar[shift left] \dar[shift right] & (\sH\rtimes\sG)\rtimes\sG \rar["\Pi_1"] \dar[shift left] \dar[shift right] & \sG \dar[shift left] \dar[shift right] \\
                \sG \rar["\Upsilon_0"] & \sG \rar["\Pi_0"] & * 
            \end{tikzcd}
        \end{equation}
        where the strict 2-functors $\Upsilon$ and $\Pi$ are given by
        \begin{equation}\label{eq:3-group_short_exact_sequence_components}
            \begin{aligned}
                \Upsilon_2\colon&&(h,g) &\mapsto (\unit_\sH,h,g,\unit_\sG)~,~~~&\Pi_2\colon&&(h_1,h_2,g_1,g_2)&\mapsto (h_1,g_2)~,\\
                \Upsilon_1\colon&&(h,g) &\mapsto (h,g,\unit_\sG)~,~~~&\Pi_1\colon&&(h,g_1,g_2)&\mapsto g_2~,\\
                \Upsilon_0\colon&&g &\mapsto g~,~~~&\Pi_0\colon&&g&\mapsto *~.
            \end{aligned}
        \end{equation}
    \end{subequations}
    Again, $\Upsilon$ is also a morphism of strict 3-groups.
    
    At an infinitesimal level, $\CCG$ (and, more evidently, $\CCG_{\rm \cm}$) differentiates to the crossed module of Lie algebras $(\frh\xrightarrow{~\tilde \sft~}\frg)$, where $\frg$ and $\frh$ are the Lie algebras of $\sG$ and $\sH$. Its 2-crossed module of inner derivations has the underlying complex~\cite{Roberts:0708.1741}
    \begin{equation}\label{eq:complex_inn_2cm}
        \begin{array}{ccccccccc}
            \ainn(\frh\xrightarrow{\tilde \sft}\frg)&=&(& 
            \frh &\xrightarrow{~\sft~}&\frh \rtimes \frg&\xrightarrow{~\sft~}&\frg&)~,\\
            &&&b &\xmapsto{~\sft~} & \big(-b,\tilde\sft(b)\big)\\
            &&&&& (b,a) &\xmapsto{~\sft~} & \tilde\sft(b)+a
        \end{array}
    \end{equation}
    with the $\frg$-actions 
    \begin{equation}
        a\acton b \coloneqq  a \tildeacton b\eand a_1 \acton (b,a_2)\coloneqq (a_1\tildeacton b,[a_1,a_2])
    \end{equation}
    and the usual Lie bracket on $\frh \rtimes \frg$, viz. 
    \begin{equation}
        [(b_1,a_1),(b_2,a_2)]\coloneqq  \big([b_1,b_2]+a_1\tildeacton b_2-a_2\tildeacton b_1,[a_1,a_2]\big)~,
    \end{equation}
    leading to the Peiffer lifting
    \begin{equation}
        \{(b_1,a_1),(b_2,a_2)\}\coloneqq a_2\tildeacton b_1
    \end{equation}
    for all $a_1,a_2\in\frg$, $b_1,b_2\in \frh$.
    
    The infinitesimal version of the short exact sequence of 2-groupoids~\eqref{eq:ses_2-groupoids} is the following short exact sequence of graded vector spaces:
    \begin{equation}\label{eq:ses_3-vector}
        \begin{tikzcd}
            * \rar["\upsilon_2"] \dar & \frh\rar["\pi_2"] \dar & \frh \dar \\
            \frh\rar["\upsilon_1"] \dar & \frh\rtimes \frg \rar["\pi_1"] \dar & \frg \dar\\
            \frg \rar["\upsilon_0"] & \frg \rar["\pi_0"] & * 
        \end{tikzcd}
        ~,\hspace{2cm}
        \begin{aligned}
         \upsilon_2\colon&&*&\mapsto 0~,~~~&\pi_2\colon&&b&\mapsto b~,\\[0.5cm]
         \upsilon_1\colon&&b&\mapsto (b,0)~,~~~&\pi_1\colon&&(b,a)&\mapsto a~,\\[0.5cm]
         \upsilon_0\colon&&a&\mapsto a~,~~~&\pi_0\colon&&a&\mapsto *~.
        \end{aligned}
    \end{equation}
    
    Every 2-crossed module of Lie algebras defines a strict 3-term \(L_\infty\)-algebra, while a strict 3-term \(L_\infty\)-algebra \emph{almost} determines a 2-crossed module, with the missing data being the antisymmetric part \(\llbracket-,-\rrbracket\) of the Peiffer lifting \(\{-,-\}\); see appendix~\ref{app:Lie3_2_crossed}. Are there 2-crossed modules corresponding to the unadjusted and adjusted Weil algebras? In both cases, the answer is yes. The unadjusted Weil algebra corresponds to the inner derivation 2-crossed module; see appendix~\ref{app:inner_weil}. The case of the adjusted Weil algebra is treated in section~\ref{ssec:adjusted_inner_derivation}.

    \subsection{Simplification by coordinate transformation}\label{ssec:simplification}
    
    It is convenient to slightly simplify the description of the inner automorphism 3-group and related Lie 3-algebras. This does not change the definitions, but merely the descriptions.
    
    In the semidirect product \(\frh\rtimes\frg\) in the complex~\eqref{eq:complex_inn_2cm}, we define the Lie subalgebras
    \begin{align}
        \frh' &\coloneqq \operatorname{im}\sft \subseteq\frh\rtimes\frg~,&\frg' &\coloneqq \ker\sft\subseteq\frh\rtimes\frg~,
    \end{align}
    which are isomorphic to \(\frh\) and \(\frg\), respectively.
    The inner semidirect product \(\frh'\rtimes\frg'\) equals the whole Lie algebra \(\frh\rtimes\frg\).  So we can use the primed coordinates to talk about \(\frh\rtimes\frg=\frh'\rtimes\frg'\).
    This amounts to a coordinate transformation (or reparametrization) of $\frh\rtimes \frg$ to $\frh'\rtimes \frg'$,
    \begin{equation}
       (b,a)\mapsto (b',a')\coloneqq(-b,a+\tilde\sft(b))~,
    \end{equation}
    and it simplifies the differentials in the complex~\eqref{eq:complex_inn_2cm} as follows:
    \begin{equation}\label{eq:inner_algebra_alternative_basis}
        \begin{matrix}
            \ainn(\frh\xrightarrow{\tilde \sft}\frg)&=&(& 
            \frh &\xrightarrow{~\sft~}&\frh' \rtimes \frg'&\xrightarrow{~\sft~}&\frg&)~,\\
            &&&b &\xmapsto{~\sft~} & \big(b,0\big)\\
            &&&&& (b,a) &\xmapsto{~\sft~} & a
        \end{matrix}
    \end{equation}
    The changes to the 2-crossed module structure maps under this reparametrization are readily derived; we merely note that the semidirect product structure is preserved. Under this coordinate change, the presentations of the chain maps $\upsilon$ and $\pi$ in~\eqref{eq:ses_3-vector} change to
    \begin{equation}
        \begin{tikzcd}
            * \rar["\upsilon_2"] \dar & \frh\rar["\pi_2"] \dar & \frh \dar \\
            \frh\rar["\upsilon_1"] \dar & \frh'\rtimes \frg' \rar["\pi_1"] \dar & \frg \dar\\
            \frg \rar["\upsilon_0"] & \frg \rar["\pi_0"] & * 
        \end{tikzcd}
        ~,\hspace{1cm}
        \begin{aligned}
         \upsilon_2\colon&&*&\mapsto 0~,~~~&\pi_2\colon&&b&\mapsto b~,\\[0.5cm]
         \upsilon_1\colon&&b&\mapsto (-b,\tilde \sft(b))~,~~~&\pi_1\colon&&(b,a)&\mapsto \tilde \sft(b)+a~,\\[0.5cm]
         \upsilon_0\colon&&a&\mapsto a~,~~~&\pi_0\colon&&a&\mapsto *~.
        \end{aligned}
    \end{equation}

    At the finite level, i.e.~the level of the 2-crossed module of Lie groups $\sInn_{\rm \cm}(\CCG)$, we have corresponding Lie closed subgroups
    \begin{align}
        \sH' &\coloneqq \exp\frh' \le \sH \rtimes\sG ~,&
        \sG' &\coloneqq \exp\frg' \le \sH \rtimes\sG ~,
    \end{align}
    and a corresponding reparametrization of $\sH\rtimes \sG$ as $\sH'\rtimes \sG'$,
    \begin{equation}
        (h,g)\mapsto(h',g')\coloneqq(h^{-1},\tilde \sft(h)g)~,
    \end{equation}
    leading to the normal complex
    \begin{equation}\label{eq:inner_group_alternative_basis}
        \begin{matrix}
            \sInn_{\rm \cm}(\CCG)&=&(&\sH&\xrightarrow{~\sft~}&\sH'\rtimes \sG'&\xrightarrow{~\sft~}&\sG&)~,\\
            &&&h&\xmapsto{~\sft~}&(h,\unit_\sH)\\
            &&& & &(h,g)&\xmapsto{~\sft~}& g
        \end{matrix}
    \end{equation}
    The presentations of the functors in the short exact sequence~\eqref{eq:ses_2-groupoids} change in the obvious manner; in particular,
    \begin{equation}\label{eq:ses_2-groupoids_reparametrization}
        \Upsilon_1\colon(h,g)\mapsto(h^{-1},g,\tilde \sft(h))\eand \Pi_1\colon(h,g_1,g_2)\mapsto \tilde\sft(h)g_2~.
    \end{equation}

    \subsection{Example: loop model of a Lie algebra}\label{ssec:loop_model_of_Lie_algebra}
    
    Before treating the adjusted Weil algebra of the string Lie 2-algebra, we first consider the simpler example of the adjusted and unadjusted Weil algebras of the 2-term \(L_\infty\)-algebra $\frg_{\rm lp}\coloneqq L_0\frg\overset\sft\to P_0\frg$, which is quasi-isomorphic to the Lie (1-)algebra $\frg$; see~\eqref{eq:quasi-iso-morphs}.
    
    The Weil algebra $\sW(\frg_{\rm lp})$ is generated by coordinate functions $(t^{\alpha \tau},r^{\alpha \tau},\hat t^{\alpha \tau},\hat r^{\alpha \tau})$, cf.~the similar parametrization of $\sW(\astring_{\rm lp}(\frg))$ in~\eqref{eq:Weil_string_Lie_2_loop}. We first perform the reparametrization explained in the previous section, which amounts to the coordinate change 
    \begin{equation}\label{eq:coord_change}
        (t^{\alpha \tau},r^{\alpha \tau},\hat t^{\alpha \tau},\hat r^{\alpha \tau})\rightarrow(t^{\alpha \tau},r^{\alpha \tau},\tilde t^{\alpha \tau},\hat r^{\alpha \tau})\ewith\tilde t^{\alpha \tau}=\hat t^{\alpha \tau}-r^{\alpha \tau}~.
    \end{equation}
    This simplifies the differential of the unadjusted Weil algebra to
    \begin{equation}
        \begin{aligned}
            Q_\sW~&\colon~&t^{\alpha\tau} &\mapsto -\tfrac12 f^{\alpha}_{\beta\gamma} t^{\beta\tau} t^{\gamma\tau} +\tilde{t}^{\alpha\tau}~,~~~ 
            &\tilde{t}^{\alpha\tau}&\mapsto -f^\alpha_{\beta\gamma}t^{\beta\tau}\tilde{t}^{\gamma\tau}~,\\
            &&r^{\alpha\tau} &\mapsto -f^\alpha_{\beta\gamma} t^{\beta\tau}r^{\gamma\tau} + \hat{r}^{\alpha\tau}~,~~~
            &\hat{r}^{\alpha\tau}&\mapsto-f^\alpha_{\beta\gamma} t^{\beta\tau}\hat{r}^{\gamma\tau} + f^\alpha_{\beta\gamma} \tilde{t}^{\beta\tau}r^{\gamma\tau} ~.
        \end{aligned}
    \end{equation}
    The differential of the adjusted Weil algebra also simplifies to
    \begin{equation}
        \begin{aligned}
            Q_{\sW_{\rm adj}}\colon~t^{\alpha\tau} &\mapsto -\tfrac12 f^{\alpha}_{\beta\gamma} t^{\beta\tau} t^{\gamma\tau} + \tilde t^{\alpha \tau}~,~~~&\tilde{t}^{\alpha\tau}&\mapsto -f^\alpha_{\beta\gamma}t^{\beta\tau}\tilde {t}^{\gamma\tau}~,\\
            r^{\alpha\tau} &\mapsto f^\alpha_{\beta\gamma}(t^{\beta\tau}\tilde t^{\gamma\tau}-\ell(\tau)t^{\beta 1}\tilde t^{\gamma 1})+ \hat{r}^{\alpha\tau}~,~
            &\hat{r}^{\alpha\tau}&\mapsto0~.
        \end{aligned}
    \end{equation}
    
    We now focus on the adjusted Weil algebra, as the unadjusted case is trivially constructed following the discussions in section~\ref{ssec:inner_derivations} and appendix~\ref{app:inner_weil}. Dualization to a 3-term $L_\infty$-algebra yields the complex of Lie algebras 
    \begin{equation}
        \begin{array}{cccccccc}
            \sW_\text{adj}(L_0\frg \to P_0\frg) &=(&L_0 \frg &\xrightarrow{~\mu_1~} & L_0\frg'\rtimes P_0\frg' & \xrightarrow{~\mu_1~} & P_0\frg&)\\
            &&\lambda &\xrightarrow{~\mu_1~} &(\lambda,0)\\
            &&&&(\lambda,\gamma)&\xrightarrow{~\mu_1~}&\gamma
        \end{array}
    \end{equation}
    endowed with binary products
    \begin{subequations}
        \begin{align}
            \mu_2&\colon&P_0\frg\wedge P_0\frg&\to P_0 \frg~, &
            (\gamma_1,\gamma_2)&\mapsto [\gamma_1,\gamma_2]~,
            \\
            &&P_0\frg\wedge(L_0\frg'\rtimes P_0\frg')&\to L_0\frg'\rtimes P_0\frg'~, &
            (\gamma_1,(\lambda_2,\gamma_2))&\mapsto\big(-\chi(\gamma_1,\gamma_2),[\gamma_1,\gamma_2]\big)~,\\
            &&P_0\frg\wedge L_0\frg &\to L_0\frg~, &
            (\gamma_1,\lambda_2)&\mapsto0~,
            \\
            &&(L_0\frg'\rtimes P_0'\frg)^{\times2}&\to L_0 \frg~, &
            \big((\lambda_1,\gamma_1),(\lambda_2,\gamma_2)\big)&\mapsto0~,
        \end{align}
    \end{subequations}
    where
    \begin{align}\label{eq:chi_loop_model_of_Lie_algebra}
        \chi&\colon & P_0\mathfrak g \times P_0\mathfrak g &\to L_0\mathfrak g~, & (\gamma_1,\gamma_2) & \mapsto [\gamma_1,\gamma_2] - \ell \cdot \partial[\gamma_1,\gamma_2]
    \end{align}
    is the projection of the Lie bracket of two paths to based loops.
    
    Just as in the unadjusted case (see~appendix~\ref{app:inner_weil}), the adjusted Weil algebra admits a lift to a 2-crossed module. There are, in fact, two possible 2-crossed modules of Lie algebras. Both options have the same underlying complex of graded vector spaces,
    \begin{equation}
        L_0 \frg \quad \overset\sft\longrightarrow \quad L_0\frg\oplus P_0\frg \quad\overset\sft\longrightarrow\quad P_0\frg~,
    \end{equation}
    but their Lie brackets, induced by a choice of the antisymmetric parts of the Peiffer liftings \(\llbracket-,-\rrbracket\), differ.
    
    The first option is fixed by imposing the ordinary Lie brackets on \(L_0\frg\) and \(L_0\frg\rtimes P_0\frg\). This determines the Peiffer brackets uniquely by~\eqref{eq:h_Lie_bracket} and~\eqref{eq:l_Lie_bracket}, as \(\mu_1\colon L_0\frg\to L_0\frg \rtimes P_0\frg\) is injective. All other compatibility relations hold, and the required Peiffer bracket is
    \begin{equation}
        \{(\lambda_1,\gamma_1),(\lambda_2,\gamma_2)\}
        =
        \llbracket(\lambda_1,\gamma_1),(\lambda_2,\gamma_2)\rrbracket
        =
        \chi(\lambda_1+\gamma_1,\lambda_2+\gamma_2)~,
    \end{equation}
    leading to the 2-crossed module
    \begin{equation}
        \ainn_{\rm adj}(\frg_{\rm lp})\coloneqq (L_0 \frg \quad \overset\sft\longrightarrow \quad L_0\frg\rtimes P_0\frg \quad\overset\sft\longrightarrow\quad P_0\frg)~.
    \end{equation}
    The two 2-crossed modules of Lie algebras $\ainn(\frg_{\rm lp})$ and $\ainn_{\rm adj}(\frg_{\rm lp})$ are not isomorphic as 2-crossed modules, but their underlying complexes of Lie algebras agree, including the Lie brackets. They differ in the Peiffer lifting and the actions of $P_0\frg$ on $L_0\frg\rtimes P_0\frg$ and $L_0\frg$.

    The second option arises from setting 
    \begin{equation}
        \{-,-\}=\llbracket-,-\rrbracket=0~,
    \end{equation}
    which is possible according to~corollary~\ref{cor:trivial_peiffer_lifting}, since \(\mu_2\colon (L_0\frg \rtimes P_0\frg)^{\wedge2}\to L_0\frg\) vanishes. This case is simpler, but it changes the Lie brackets of the components considerably. Let \(\overset\circ\frg\) denote the abelian Lie algebra over the vector space \(\frg\). Then this case corresponds to the 2-crossed module of Lie algebras
    \begin{equation}
        L_0 \overset\circ\frg \quad \overset\sft\longrightarrow \quad L_0\overset\circ\frg\rtimes P_0\frg \quad\overset\sft\longrightarrow\quad P_0\frg~.
    \end{equation}
    The Lie bracket on $L_0\frg$ vanishes by~\eqref{eq:l_Lie_bracket}, and the Lie bracket on $L_0\frg\oplus P_0\frg$ also becomes “more abelian” by~\eqref{eq:h_Lie_bracket}. 
    
    While both options are mathematically consistent, the second option “forgets” the natural structure of the path and loop spaces, and deviates too far from our original $L_\infty$-algebra. More importantly, only the first option seems possible after we extend \(L_0\frg\) to \(\hat L_0\frg\) for the string 2-algebra; see section~\ref{ssec:adjusted_inner_derivation}.
    Finally, it seems very significant that for the “correct” option, the antisymmetric part of the Peiffer lifting is \emph{precisely} the map \(\chi\), required for adjusting the Weil algebra, that also appears during the lifting of 3-term \(L_\infty\)-algebras to 2-crossed modules. This fact hints at a deeper connection between \(\llbracket-,-\rrbracket\) and $\chi$.
    
    We now integrate the 2-crossed module obtained from the first option,\footnote{The second option can also be straightforwardly integrated; this produces the 2-crossed module of Lie groups \((L_0\sG\to L_0\overset\circ\frg\rtimes P_0\sG\to P_0\sG)\), where the vector space \(L_0\overset\circ\frg\) is now interpreted as an abelian Fréchet--Lie group.} which is essentially straightforward\footnote{While general 3-term $L_\infty$-algebras are very hard to integrate, there is no difficulty or obstruction to integrating 2-crossed modules of Lie algebras~\cite[Theorem~10]{Martins:2009aa}. The fact that we deal with 2-crossed modules of Fréchet–Lie algebras is not a problem, since all of the components, being path or loop algebras on the Lie algebra \(\frg\), admit obvious integrations to path or loop algebras on the Lie group \(\sG\). See also~\cite{Baez:2005sn} for more details on Fr\'echet--Lie algebras and groups. The only possible ambiguity is the usual one involving the center/fundamental group, which amounts to consistently using the same integration \(\sG\) of the Lie algebra \(\frg\).}: : we simply have to integrate the Lie algebras in each component of the crossed module. The integration of the actions is then automatically compatible. A verification of the successful integration is then the straightforward differentiation. 
    
    For example, the crossed module of Lie algebras $\frg_{\rm lp}$ integrates to the crossed module of Lie groups $\sG_{\rm lp,\rm \cm}=(L_0\sG\xrightarrow{~\sft~} P_0\sG)$, with pointwise multiplication, pointwise action of $P_0\sG$ on $L_0\sG$ and $\sft$ being the embedding. Differentiation (by applying the tangent functor) directly recovers $\frg_{\rm lp}$. Correspondingly the 2-crossed module of Lie groups resulting from the integration of $\ainn_{\rm adj}(\frg_{\rm lp})$ is
    \begin{equation}
        \sInn_{\rm adj,\cm}(\sG_{\rm lp})\coloneqq(L_0 \sG\xrightarrow{~\sft~}L_0\sG\rtimes P_0\sG\xrightarrow{~\sft~}P_0\sG)
    \end{equation}
    with the given product structure and the evident pointwise $P_0\sG$-actions. The Peiffer lifting is fixed by the relation
    \begin{equation}
     \sft(\{h_1,h_2\})=h_1 h_2 h_1^{-1}(\sft(h_1)\acton h_2^{-1})~,
    \end{equation}
    cf.~\eqref{eq:global_peiffer_lifting_identity}, because $\sft$ is injective. The fact that $\sInn_{\rm adj,\cm}(\sG_{\rm lp})$ integrates $\ainn_{\rm adj}(\frg_{\rm lp})$ follows from straightforward differentiation.
    
    Without adjustment, we would have arrived at the 2-crossed module of Lie groups $\sInn_{\rm \cm}(\sG_{\rm lp})$. The difference between the latter and $\sInn_{\rm adj,\cm}(\sG_{\rm lp})$ is seen from the difference of the corresponding 2-crossed modules of Lie algebras: While the underlying normal complexes and the products in each degree agree, the Peiffer lifting and the action of $P_0\sG$ on $L_0 \sG\rtimes P_0 \sG$ and $L_0\sG$ are different. Since $\sft\colon L_0\sG\rightarrow L_0\sG\rtimes P_0\sG$ is injective, the \(P_0\sG\)-actions fix the Peiffer lifting. At the level of the corresponding monoidal 2-categories encoding the Gray groups $\sInn_{\rm \cm}(\sG_{\rm lp})$ and $\sInn_{\rm adj,\cm}(\sG_{\rm lp})$, we thus encounter the same globular structure. Also, there is no modification to the short exact sequence~\eqref{eq:ses_2-groupoids}.

    \subsection{Adjusted inner derivations of the string Lie 2-algebra}\label{ssec:adjusted_inner_derivation}

    We now readily construct the main example: the adjusted Weil algebras of the string Lie 2-algebra $\astringl(\frg)$ defined by the differential~\eqref{eq:Weil_string_Lie_2_loop_adj}. The coordinate change~\eqref{eq:coord_change} leads to the differential graded algebra
    \begin{equation}\label{eq:Weil_string_Lie_2_loop_adj_cc}
        \begin{aligned}
            Q_{\sW_{\rm adj}}\colon~t^{\alpha\tau} &\mapsto -\tfrac12 f^{\alpha}_{\beta\gamma} t^{\beta\tau} t^{\gamma\tau} + \tilde t^{\alpha \tau}~,~~~&\tilde{t}^{\alpha\tau}&\mapsto -f^\alpha_{\beta\gamma}t^{\beta\tau}\tilde {t}^{\gamma\tau}~,\\
            r^{\alpha\tau} &\mapsto f^\alpha_{\beta\gamma}(t^{\beta\tau}\tilde t^{\gamma\tau}-\ell(\tau)t^{\beta 1}\tilde t^{\gamma 1})+ \hat{r}^{\alpha\tau}~,~
            &\hat{r}^{\alpha\tau}&\mapsto0~,\\
            r_0 &\mapsto 2\int_0^1\dd\tau\,\kappa_{\alpha\beta} \dot t^{\alpha\tau} \tilde t^{\beta\tau}+\hat{r}_0~,&\hat{r}_0&\mapsto -2\int_0^1\dd\tau\,\kappa_{\alpha\beta} \dot{\tilde t}^{\alpha\tau}\tilde t^{\beta\tau}~.
        \end{aligned}
    \end{equation}
    Dually, we have the 3-term $L_\infty$-algebras $\sW^*_{\mathrm{adj}}(\astringl(\frg))$ with cochain complex
    \begin{equation}
        \begin{matrix}
            (&
            \hat L_0 \frg & \xrightarrow{~\mu_1~} & \hat L_0\frg'\rtimes P_0\frg' & \xrightarrow{~\mu_1~} & P_0\frg &)\\
            &\lambda+r & \xmapsto{~\mu_1~} &(\lambda+r,0)\\
            &&& (\lambda+r,\gamma) & \xmapsto{~\mu_1~} &\gamma
        \end{matrix} 
    \end{equation}
    which is endowed with the binary products \(\mu_2\)
    \begin{subequations}
        \begin{align}
            P_0\frg\wedge P_0\frg&\to P_0 \frg~, &
            (\gamma_1,\gamma_2)&\mapsto[\gamma_1,\gamma_2]~, \\
            P_0\frg\wedge(\hat L_0\frg'\rtimes P_0\frg')&\to \hat L_0\frg'\rtimes P_0\frg'~, &
            \big(\gamma_1,(\lambda_2+r_2,\gamma_2)\big)&\mapsto\left(-\chi(\gamma_1,\gamma_2),[\gamma_1,\gamma_2]\right)~, \\
            P_0\frg\wedge \hat L_0\frg &\to \hat L_0\frg~, &
            (\gamma_1,\lambda_2+r_2)&\mapsto0~, \\
            (\hat L_0\frg'\rtimes P_0\frg')^{\times2}&\to \hat L_0 \frg~, &
            \!\!\!\!\!\!\!\!\!\!\!\!\!\!
            \big((\lambda_1+r_1,\gamma_1),(\lambda_2+r_2,\gamma_2)\big)&\mapsto
            -\chi(\gamma_1,\gamma_2)-\chi(\gamma_2,\gamma_1)~,
        \end{align}
    \end{subequations}
    where \(\chi\) was defined in~\eqref{eq:ext-kappa-definition}.\footnote{This \(\chi\) is analogous to, but naturally different from, the \(\chi\) in~\eqref{eq:chi_loop_model_of_Lie_algebra} used in section~\ref{ssec:loop_model_of_Lie_algebra}.} The extension to a 2-crossed module of Lie algebras 
    \[
    \ainn_\text{adj}(\hat L_0\frg\to P_0\frg)=\big(\ainn_\text{adj}(\hat L_0\frg\to P_0\frg),\sft,\acton,\{-,-\}\big)
    \]
    has underlying cochain complex of Lie algebras
    \begin{equation}
        \begin{matrix}
            \ainn_\text{adj}(\hat L_0\frg\to P_0\frg) &=& (&
            \hat L_0\frg & \overset\sft\longrightarrow & \hat L_0\frg'\rtimes P_0\frg' & \overset\sft\longrightarrow & P_0\frg &)\\
            &&&\lambda+r & \xmapsto{~\sft~} & (\lambda+r,0) \\
            &&&&& (\lambda+r,\gamma) & \xmapsto{~\sft~} & \gamma
        \end{matrix} 
    \end{equation}
    with
    \begin{subequations}
        \begin{align}
            [\gamma_1,\gamma_2]_{P_0\frg}&=\mu_2(\gamma_1,\gamma_2)~,\\
            [(\lambda_1+r_1,\gamma_1),(\lambda_2+r_2,\gamma_2)]_{\hat L_0\frg\rtimes P_0\frg}&=
            \big([\lambda_1,\lambda_2]+[\gamma_1,\lambda_2]+[\lambda_1,\gamma_2]
            ,~[\gamma_1,\gamma_2]\big)\notag\\
            &=\big(\chi(\lambda_1+\gamma_1,\lambda_2+\gamma_2)-\chi(\gamma_1,\gamma_2)
            ,~[\gamma_1,\gamma_2]\big)~,\\
            [\lambda_1+r_1,\lambda_2+r_2]_{\hat L_0\frg}&=\chi(\lambda_1,\lambda_2)~,\\
            \gamma_1\acton(\lambda_2+r_2,\gamma_2)&=\mu_2\big(\gamma_1,(\lambda_2+r_2,\gamma_2)\big)~,\\
            \gamma_1\acton(\lambda_2+r_2)&=0~,\\
            \{(\lambda_1+r_1,\gamma_1),(\lambda_2+r_2,\gamma_2)\}&=\chi(\lambda_1+\gamma_1,\lambda_2+\gamma_2)~.
        \end{align}        
    \end{subequations}
    The Peiffer bracket is again precisely the function $\chi$ encoding the adjustment of the Weil algebra. Unlike the case of \(\frg_\text{lp}\) in section~\ref{ssec:loop_model_of_Lie_algebra}, here $\chi$ (and thus the Peiffer lifting $\{-,-\}$) is no longer purely antisymmetric, due to a boundary term. The symmetric part of the Peiffer bracket corresponds to the non-vanishing higher product \(\mu_2\colon(\hat L_0\frg\rtimes P_0\frg)^{\times2}\to\hat L\frg\). The antisymmetric part of the Peiffer bracket is the additional structure map $\llbracket-,-\rrbracket$ of the 2-crossed module of Lie algebras.
    
    Integrating $\ainn_{\rm adj}(\astring_{\rm lp}(\frg))$, we arrive at the 2-crossed module of Lie groups 
    \begin{equation}
        \begin{matrix}
            \sInn_{\text{adj},\cm}(\sString_{\rm lp}(\sG)) &=& (&
            \hat L_0 \sG & \overset\sft\longrightarrow & \hat L_0\sG'\rtimes P_0\sG' & \overset\sft\longrightarrow & P_0\sG  &)\\
            &&&(l,r) & \xmapsto{~\sft~} &\big((l,r),\unit_{P_0\sG}\big)\\
            &&&&& ((l,r),p) & \xmapsto{~\sft~} & p
        \end{matrix} 
    \end{equation}
    The $P_0\sG$-actions on the bases $L_0\sG$ and $L_0\sG\rtimes P_0\sG$ of the principal \(\sU(1)\)-bundles $\hat L_0 \sG$ and $\hat L_0\sG\rtimes P_0\sG$ are the same as in $\sInn_{\rm adj,\cm}(\sG_{\rm lp})$. The $P_0\sG$-action on the $\sU(1)$-fibers are the canonical ones as in the loop model of the string Lie 2-group. Explicit expressions are best constructed indirectly, after trivializing the circle bundles;\footnote{that is, the technique of constructing a nontrivial \(\sU(1)\)-bundle on a Fréchet–Lie group \(\sG\) as a quotient of the trivial \(\sU(1)\)-bundle on the path group \(P_0\sG\)} see~\cite{Baez:2005sn}.
    The Peiffer lifting is fixed by~\eqref{eq:global_peiffer_lifting_identity}:
    \begin{multline}
        \Big\{\big((l_1,r_1),p_1\big),\big((l_2,r_2),p_2\big)\Big\}=\\
        \big((l_1,r_1),p_1\big)\big((l_2,r_2),p_2\big)
        \big((l_1,r_1),p_1\big)^{-1}
        \left(p_1\acton\big((p_2,r_2),p_2\big)^{-1}\right)~,
    \end{multline}
    where all products are taken in the semidirect product \(\hat L_0\sG'\rtimes P_0\sG'\).
    
    The 3-group constructed from $\sInn_{\text{adj},\cm}(\sString_{\rm lp}(\sG))$ as in~\eqref{eq:mon_cat_from_3_group} is then 
    \begin{subequations}
    \begin{equation}
        \sInn_\text{adj}(\sString_{\rm lp}(\sG))=\left( 
            \hat L_0\sG\rtimes\big((\hat L_0'\sG\rtimes P_0\sG')\rtimes P_0\sG\big)   \rightrightarrows   (\hat L_0\sG'\rtimes P_0\sG')\rtimes P_0\sG)  \rightrightarrows P_0\sG \right)
    \end{equation}
    with the following globular structure.
    \begin{equation}
        \begin{tikzcd}[column sep=4.0cm,row sep=large]
            p_1p_2 & \ar[l, bend left=45, "{\big(\ell_2,p_1,p_2\big)}", ""{name=U,inner sep=1pt,above}] \ar[l, bend right=45, "{\big(\ell_1\ell_2,p_1,p_2\big)}", swap, ""{name=D,inner sep=1pt,below}] p_2
            \arrow[Rightarrow,from=U, to=D, "{\big(\ell_1,\ell_2,p_1,p_2\big)}",swap]
        \end{tikzcd}
    \end{equation}
    \end{subequations}
    
    We have a short exact sequence of 2-groupoids,
    \begin{equation}\label{eq:ses_groupoids_adjusted}
        *\xrightarrow{~~~}\CCG \xrightarrow{~\Upsilon~} \sInn_{\rm adj}(\CCG) \xrightarrow{~\Pi~}\sB\CCG \xrightarrow{~~~} *~,
    \end{equation}
    where the functors $\Upsilon$ and $\Pi$ are again given by the obvious embedding and projection functors. This is the adjusted analogue of~\eqref{eq:ses_2-groupoids}. As complexes of globular sets, this complex is identical to that in~\eqref{eq:ses_2-groupoids};\footnote{Of course, as 2-functors between 2-groupoids, the 2-functors \(\Upsilon\) and \(\Pi\) in~\eqref{eq:ses_groupoids_adjusted} are different from the 2-functors \(\Upsilon\) and \(\Pi\) in~\eqref{eq:ses_2-groupoids}, simply because the (co)domains are inequivalent 2-groupoids. As we are not much concerned with 2-groupoids beyond their globular structure, however, we abuse notation and do not notate the two pairs differently.} in particular the presentation~\eqref{eq:3-group_short_exact_sequence_components}, as well as the presentation~\eqref{eq:ses_2-groupoids_reparametrization} with the reparametrization~\eqref{eq:inner_algebra_alternative_basis}, continue to be valid after adjustment.
    
    \section{Parallel transport}\label{sec:parallel_transport}
    
    We now discuss the main topic of this paper: the consistent definition of a higher, truly non-abelian parallel transport. The key features are already visible over local patches, and gluing the construction to a global one is, in principle, a mere technicality; see~e.g.~\cite{Wang:1512.08680}. For clarity of our discussion, we always work on local patches or, equivalently, a contractible manifold $U$.

    \subsection{Ordinary parallel transport and connections}\label{ssec:parallel}
    
    The fact that the holonomies around all smooth loops encode a connection has been known in the literature since at least the 1950s~\cite{Kobayashi:1954aa}. The picture we use is inspired by the treatment of loops in~\cite{Barrett:1991aj} (see also~\cite{Gambini:1980yz}), and generalized to paths in~\cite{Caetano:1993zf} (see also~\cite{Schreiber:0705.0452}).
    
    Let \(\sG\) be a Lie group. Parallel transport encoded in a connection on a principal $\sG$-bundle $P$ over the contractible manifold $U$ amounts to an assignment of a group element $g\in \sG$ to each path $\gamma\colon[0,1]\rightarrow U$ in the base manifold. Composition of paths translates to multiplication of the corresponding group elements.
    The paths and points of the base manifold naturally combine to the path groupoid $\CP U$. This is the category which has paths as morphisms, their endpoints being the domains and the codomains. Since we can invert paths by reversing their orientation, $\CP U$ is indeed a groupoid.\footnote{For the detailed definition of $\CP U$, see appendix~\ref{app:path_space}.} Regarding $\sG$ as the one-object groupoid $\sB\sG=(\sG\rightrightarrows *)$, we see that parallel transport is precisely a functor
    \begin{equation}
        \Phi\colon \CP U\longrightarrow \sB\sG~,\hspace{2cm}
        \begin{tikzcd}
            \mbox{paths} \arrow[d,shift left] \arrow[d,shift right]\arrow[r]{}{\Phi_1} & \sG \arrow[d,shift left] \arrow[d,shift right]\\
            U \arrow[r]{}{\Phi_0} & * 
        \end{tikzcd}
    \end{equation}
    
    Given a connection in terms of a $\frg\coloneqq \sLie(\sG)$-valued 1-form $A$ on $U$, we can construct the parallel transport functor as
    \begin{equation}
        \Phi_1(\gamma)=\Pexp\int_\gamma A\in \sG~,
    \end{equation}
    where $\Pexp(\dots)$ is the path-ordered exponential well-known in physics. Mathematically, $\Phi_1(\gamma)=g(1)$, where $g$ is the (unique) solution $g(t)$ to the differential equation\footnote{given here for clarity for matrix Lie groups, the abstract analogue being evident}
    \begin{equation}\label{eq:path_ordered_ode}
        \left(\frac{\dd}{\dd t} g(t)\right)g(t)^{-1}=-\iota_{\dot \gamma(t)}A(\gamma(t)) ~,~~~g(0)=\unit_\sG~,
    \end{equation}
    where $\iota_{\dot \gamma(t)}$ denotes the contraction with the tangent vector to $\gamma$ at $\gamma(t)$. Conversely, given a functor $\Phi$, the corresponding connection \(A\in\Omega^1(U)\otimes\frg\) is obtained as follows. Let $x\in U$ be a point and $v\in T_xU$ a tangent vector at $x$. We choose a path $\gamma[0,1]\rightarrow U$ such that $\gamma(\tfrac12)=x$ and $\dot \gamma(\tfrac12)=v$. Just as any general path, $\gamma$ gives rise to a function $g(t)=\Phi(\gamma_t)\colon[0,1]\rightarrow \sG$, where $\gamma_t$ is the truncation of $\gamma$ at $\gamma(t)$ (with appropriate reparametrization). We can then use equation~\eqref{eq:path_ordered_ode} and define
    \begin{equation}
        -\iota_v A(x)=\left.\left(\frac{\dd}{\dd t} g(t)\right)g(t)^{-1}\right|_{t=\tfrac12}~,
    \end{equation}
    where $A$ is independent of the choice of $\gamma$ and the reparametrization in the truncation of $\gamma$ to $\gamma_t$. Thus the parallel transport functor \(\Phi\) contains exactly the same information as the connection \(A\).    
    
    Since connections correspond to functors, it is rather obvious that gauge transformations correspond to natural transformations.\footnote{In general, for functors between general categories, one distinguishes between \emph{natural transformations} and \emph{natural isomorphisms}, where the latter is a natural transformation whose components are all isomorphisms. For functors between groupoids, as in our case, all natural transformations are natural isomorphisms.}\footnote{Gauge transformations can be thought of in two different but equivalent perspectives: the “physicist’s”, where the gauge fields and Wilson lines are objects defined on the manifold that are acted upon by gauge transformations; and the “mathematician’s”, where the gauge fields are invariant objects defined on the total space of principal bundles, where the apparent gauge transformations correspond to different local trivializations of the pincipal bundle. In this paper, we work with an already locally trivialized bundle, so that the formulae appear as actions of the gauge transformations; but they can be equally well interpreted as the result of the gauge transformations’ changing the choice of local trivialization.} A natural transformation $\eta\colon\Phi\Rightarrow\tilde \Phi$ between two functors of Lie groupoids $\Phi,~\tilde\Phi\colon\CP U\to \sB\sG$ is encoded in a function $\eta\colon U\to \sG$ such that 
    \begin{equation}
        \tilde \Phi_1(\gamma)=\eta(\gamma(1))^{-1}\Phi_1(\gamma)\eta(\gamma(0))
    \end{equation}
    for each path $\gamma$. This is precisely the gauge transformation law for a Wilson line. Let $A$ and $\tilde A$ be the connection 1-forms associated with $\Phi$ and $\tilde \Phi$, respectively. The functions $g(t)$ and $\tilde g(t)$ appearing in equation~\eqref{eq:path_ordered_ode}, are related by
    \begin{equation}
        \tilde g(t)=\eta(\gamma(t))^{-1} g(t)\eta(\gamma(0))~,
    \end{equation}
    and equation~\eqref{eq:path_ordered_ode} for $\tilde A$ induces then the usual gauge transformations,
    \begin{equation}
        \tilde A(x)=\eta(x)^{-1} A(x) \eta(x)+\eta(x)^{-1} \dd \eta(x)~.
    \end{equation}
    
    Altogether, the parallel transport functor is kinematically omniscient: it contains all information about gauge configurations and gauge orbits.

    \subsection{Ordinary parallel transport and the derivative parallel transport functor}\label{ssec:derived_parallel_ordinary}
    
    To see the curvature 2-form $F=\dd A+\tfrac12[A,A]$ of $A$ arise from parallel transport, we trivially extend $\Phi$ to a strict 2-functor $\varPhi$ as follows. First, we extend the path groupoid $\CP U$ to a path 2-groupoid $\CP_{(2)} U$, whose objects are the points of $U$, whose 1-morphisms are the paths, and whose 2-morphisms between two paths $\gamma_1,\gamma_2\colon x\to y$ are {\em bigons}, i.e.~surfaces bounded by $\gamma_1\circ \gamma^{-1}_2$.\footnote{For technical details, see appendix~\ref{app:path_space}.}
    Similarly, we extend $\sB\sG$ to 
    \begin{equation}
        \sB\sInn(\sG) = (\sG \rtimes \sG \rightrightarrows \sG \rightrightarrows *)~,
    \end{equation}
    which is a 2-groupoid with one object $*$, over which we have the morphism 2-group $\sInn(\sG)$. As explained in section~\ref{ssec:inner_group}, this is the action groupoid for the action of $\sG$ onto itself by left-multiplication with morphisms
    \begin{equation}
        g_1g_2 \xleftarrow{~~(g_1,g_2)~~} g_2~.
    \end{equation}
    Then we can construct the \emph{derivative parallel transport 2-functor}\footnote{not to be confused with the unrelated concept of derived functors in homological algebra}~\cite{Schreiber:0802.0663}, which is a strict 2-functor
    \begin{equation}
        \varPhi\colon \CP_{(2)} U\longrightarrow \sB\sInn(\sG)~,\hspace{2cm}
        \begin{tikzcd}
            \text{surfaces} \rar["\varPhi_2"] \dar[shift left] \dar[shift right] & \sG \rtimes \sG \dar[shift left] \dar[shift right] \\
            \text{paths} \arrow[d,shift left] \arrow[d,shift right]\arrow[r]{}{\varPhi_1} & \sG \arrow[d,shift left] \arrow[d,shift right]\\
            U \arrow[r]{}{\varPhi_0} & * 
        \end{tikzcd}
    \end{equation}
    It assigns to each path $\gamma$ an element $\varPhi_1(\gamma)=g_\gamma$ in $\sG$ and to each surface $\sigma$ an element $\varPhi(\sigma)=(g^1_\sigma,g^2_\sigma)$ in $\sG\rtimes \sG$, as follows: 
    \begin{equation}
        \begin{tikzcd}[column sep=2.5cm,row sep=large]
            x_2 & \ar[l, bend left=45, "\gamma_1", ""{name=U,inner sep=1pt,above}] \ar[l, bend right=45, "\gamma_2", swap, ""{name=D,inner sep=1pt,below}] x_1
            \arrow[Rightarrow,from=U, to=D, "\sigma",swap]
        \end{tikzcd}~~~\overset\varPhi\longmapsto~~~
        \begin{tikzcd}[column sep=2.5cm,row sep=large]
            * & \ar[l, bend left=45, "g_{\gamma_1}", ""{name=U,inner sep=1pt,above}] \ar[l, bend right=45, "g_{\gamma_2}", swap, ""{name=D,inner sep=1pt,below}] *
            \arrow[Rightarrow,from=U, to=D, "{(g^1_\sigma,g^2_\sigma)}",swap]
        \end{tikzcd}
    \end{equation}
    Compatibility with the domain and codomain maps \(\dom,\codom\) implies that
    \begin{equation}
        g_{\gamma_1}=\dom(\varPhi(\sigma))=g^2_\sigma\eand g_{\gamma_2}=\codom(\varPhi(\sigma))=g^1_\sigma g^2_\sigma~.
    \end{equation}
    Thus $\varPhi(\sigma)$ is fully fixed by the $g_\gamma$, and the strict 2-functor $\varPhi$ is determined by the \mbox{(1-)functor} $\Phi$.
    
    At an infinitesimal level, the additional data for surfaces encodes the curvature, and $\varPhi$ being determined by $\Phi$ amounts to a non-abelian version of Stokes' theorem~\cite[Section~3.2]{Schreiber:0802.0663}. In terms of the component fields \(A\in\Omega^1(U)\otimes\frg\) and its curvature \(F\in\Omega^2(U)\otimes\frg=\dd A+\tfrac12[A,A]\), we can write $g_\gamma$ and $g_\sigma$ as 
    \begin{equation}
        g_\gamma = \Pexp\int_\gamma A\eand g_\sigma = \Pexp\ichint_{A}(-F)~.
    \end{equation}
    The additional minus sign in front of the curvature $F$ is explained in appendix~\ref{app:inner_weil}. The second integral is a path-ordered integral over a path in path space, and $\chint_A(-F)$ is a \emph{Chen form} as described in appendix~\ref{app:chen_forms}. Briefly, we view \(\sigma\) as a path \(\check\sigma\) on the space of paths between two points on the boundary of $\sigma$, \(x_0\) and \(x_1\), and the 2-form \(F\) as a 1-form \(-\check F = \chint_A(-F)\) on the space of paths between \(x_0\) and \(x_1\). Then
    \[
    \Pexp\ichint_{A} (-F) \coloneqq \Pexp\int_{\check\sigma}(-\check F)~.
    \]
    This is now of course equivalent to a differential equation on the path space.
    
    Given a bigon \(\sigma\colon\gamma_1\to\gamma_2\), since \(\dpar\sigma = \gamma_1\cup\bar\gamma_2\), the globular identity
    \begin{equation}
        g_{\gamma_1}g_{\gamma_2}^{-1} = g_\sigma^{-1}
    \end{equation}
    becomes
    \begin{equation}
        \Pexp\oint_{\dpar\sigma}A=\Pexp\ichint_{A} F~,
    \end{equation}
    where \(F=\dd A+\tfrac12[A,A]\) is the ordinary curvature and
    \begin{equation}
        g_\sigma = \Pexp\ichint_{A}(-F)~.
    \end{equation}
    
    Conversely, we can recover the fields and curvatures from the derivative parallel transport 2-functor \(\varPhi\). We have already explained how to recover \(A\) as above. As for \(F\), since \(\varPhi\) assigns elements of \(\sG\) to parametrized surfaces \(\sigma\), i.e.~paths \(\check\sigma\) in the space of paths between \(x_1\) and \(x_2\), we can do the same procedure as for \(A\) to recover the corresponding 1-form \(\check F\) on path space, and translate it to a 2-form \(F\) on \(U\).
    
    We now discuss gauge transformations. Just as for the plain parallel transport functor, we should identify gauge transformations with natural transformations. The general notion of 2-natural transformations between 2-functors between 2-groupoids is that of \emph{pseudonatural transformations}.\footnote{For 2-natural transformations between 2-functors between general 2-categories, one distinguishes between \emph{lax natural 2-transformations}, whose component 2-cells need not be invertible, and \emph{weak natural 2-transformations} or {\em pseudonatural transformations}, whose component 2-morphisms must be invertible (but not necessarily trivial) by definition; see e.g.~\cite{Jurco:2014mva}. However, for 2-groupoids, all 2-morphisms are invertible, and the two classes coincide.} A pseudonatural transformation $\eta\colon\varPhi\rightarrow \tilde \varPhi$ between two strict 2-functors $\varPhi,~\tilde \varPhi\colon \CP_{(2)}U\rightarrow \sB\sInn(\sG)$ is encoded in maps
    \begin{equation}
        \eta_1\colon \CP_{(2)}U_0=U\mapsto \sG\eand \eta_2\colon \CP_{(2)}U_1\mapsto \sG\rtimes \sG~,
    \end{equation}
    where $\CP_{(2)}U_1$ are the paths or 1-morphisms in $\CP_{(2)}U$, such that for each path $x_1\xleftarrow{~~\gamma~~}x_0$, we have the commuting diagram
    \begin{equation}
        \begin{tikzcd}[column sep=2.5cm,row sep=large]
            * \ar[d,swap]{}{\eta_1(x_1)} & * \ar[l,swap]{}{\varPhi(\gamma)}\ar[d]{}{\eta_1(x_0)}\\
            * & \ar[l]{}{\tilde\varPhi(\gamma)}* \ar[Rightarrow,ul,swap]{}{\eta_2(\gamma)}
        \end{tikzcd}
    \end{equation}
    implying that 
    \begin{equation}
        \eta_2(\gamma)=\big(\eta_1(x_1)\varPhi(\gamma)\eta_1^{-1}(x_0)\tilde \varPhi^{-1}(\gamma) ,\tilde \varPhi(\gamma)\eta_1(x_0)\big)\in \sG\rtimes \sG~.
    \end{equation}
    The coherence axioms for a pseudonatural transformation are then automatically satisfied. 
    
    The additional freedom in the gauge transformations allows for a pseudonatural transformation $\eta$ between any strict 2-functor $\varPhi$ and the trivial strict 2-functor $\unit$
    \begin{equation}
        \unit(x)=*~,~~~\unit(\gamma)=\unit_\sG~,~~~\unit(\sigma)=(\unit_\sG,\unit_\sG)
    \end{equation}
    for all $x\in \CP_{(2)}U_0$, $\gamma\in \CP_{(2)}U_1$ and $\gamma\in \CP_{(2)}U_2$. Explicitly, $\eta$ is given by 
    \begin{equation}
        \eta_1(x)=\unit_\sG\eand\eta_2(\gamma)=(\varPhi(\gamma),\unit_\sG)~.
    \end{equation}
    This transformation reflects the fact that $\sInn(\sG)$ is equivalent to the trivial 2-group and that $\sB\sInn(\sG)$ is equivalent to the trivial 3-groupoid.
    
    We thus need to restrict the allowed gauge transformations in an obvious way. The short exact sequence~\eqref{ses:groupoids_Lie} leads to the following commutative diagram:
    \begin{equation}
        \begin{tikzcd}
            \CP U\ar[d,"\Phi",swap] \ar[r,hookrightarrow] & \CP_{(2)}U  \ar[d,"\varPhi"] \\
            \sB \sG \ar[r,hookrightarrow] & \sB \sInn(\sG)
        \end{tikzcd}
    \end{equation}    Furthermore, if we fix endpoints \(x_0,x_1\in U\), we can decategorify\footnote{in the sense of taking hom-categories, thus shifting 1-morphisms to be objects and 2-morphisms to be 1-morphisms} the above diagram, and add a new functor \(\varPhi_\mathrm{curv}(x_0,x_1)\), which is a truncation of \(\varPhi(x_0,x_1)\) to surfaces only:
    \begin{equation}
        \begin{tikzcd}
            \CP U(x_0,x_1)\ar[d,"{\Phi(x_0,x_1)}",swap] \ar[r,hookrightarrow] & \CP_{(2)}U(x_0,x_1)  \ar[d,"{\varPhi(x_0,x_1)}"']  \ar[dr,"{\varPhi_\mathrm{curv}(x_0,x_1)}"] \\
            \sG \ar[r,hookrightarrow] & \sInn(\sG) \ar[r,twoheadrightarrow,"\Pi"] & \sB\sG
        \end{tikzcd}
    \end{equation}
    Here, $\sG$ is regarded as the discrete category $\sG\rightrightarrows\sG$. Note that the decategorification is necessary because \(\sB\sB\sG\) does not make sense as a 2-category in general: there is no compatible monoidal product for non-abelian $\sG$ due to the Eckmann--Hilton argument. The functor \(\varPhi_\text{curv}(x_0,x_1)\) therefore does {\em not} extend to a 2-functor \(\varPhi\colon\CP_{(2)}U\to\sB\sB\sG\): 
    
    Clearly, we are only interested in transformations of $\varPhi$ that originate from transformations of $\Phi$ and which become trivial\footnote{This does {\em not} imply that the curvatures do not transform under gauge transformations.} on $\varPhi_{\rm curv}$. That is, for any two derivative parallel transport 2-functors $\varPhi,\tilde \varPhi$ connected by such a transformation, we have
    \begin{equation}
        \varPhi_{\rm curv}(x_0,x_1)=\Pi\circ \varPhi(x_0,x_1)=\Pi\circ \tilde \varPhi(x_0,x_1)=\tilde \varPhi_{\rm curv}(x_0,x_1)~.
    \end{equation}
    Equivalently, these natural transformations are rendered trivial by the whiskering\footnote{{\em Whiskering} is the horizontal composition of a trivial 2-morphism, here $\id_\Pi\colon\Pi\Rightarrow \Pi$ in the higher category of 2-groupoids, 2-functors and 2-natural isomorphisms, with another 2-morphism, here $\eta$.}
    \begin{equation}
        \begin{tikzcd}
            \sB\sG & \sInn(\sG) \lar["\Pi",swap]  &  \CP_{(2)} U(x_0,x_1) \lar[bend left=50, "{\varPhi(x_0,x_1)}", ""'{name=A}] \lar[bend right=50, "{\tilde \varPhi(x_0,x_1)}"', ""{name=B}] \ar[from=A, to=B, Rightarrow,"\eta"]
        \end{tikzcd}
    \end{equation}
    This is simply achieved by demanding that $\eta_2$ be trivial:
    \begin{equation}
        \eta_2(\gamma)=\big(\unit,\tilde \Phi(\gamma)\eta(x_0)\big)~.
    \end{equation}
    Such natural transformations are known as {\em strict 2-natural transformations}.

    \subsection{Unadjusted higher parallel transport and connections}\label{ssec:derived_parallel_higher}
    
    Higher-dimensional generalizations of parallel transport have been studied since the 1990s. First discussions for higher principal bundles are found in~\cite{0817647309}; appropriate higher path spaces where discussed in~\cite{caetano1998family}. The higher-dimensional parallel transport for abelian higher principal bundles was then fully developed in~\cite{Gajer:1997:155-207,Gajer:1999:195-235,Mackaay:2000ac}. The non-abelian extension was discussed in~\cite{Chepelev:2001mg},~\cite{Baez:2002jn}~\cite{Girelli:2003ev},~\cite{Baez:2004in} and further, in great detail, in the papers~\cite{Schreiber:0705.0452,Schreiber:0802.0663,Schreiber:2008aa}; see also~\cite{Alvarez:1997ma} for earlier considerations and~\cite{Li:2019nqu} for a recent discussion. We also need the structures underlying the higher parallel transport along volumes, discussed in~\cite{Martins:2009aa}.
    
    Let $\CCG$ be a strict Lie 2-group with underlying monoidal category $(\sH\rtimes \sG\rightrightarrows \sG)$ with morphisms 
    \begin{equation}
        \sft(h)g\xleftarrow{~~(h,g)~~}g~;
    \end{equation}
    the corresponding crossed module of Lie groups is $\CCG_{\rm \cm}=(\sH\xrightarrow{~\sft~}\sG,\acton)$, cf.~appendix~\ref{app:hypercrossed_modules}. Parallel transport over a local patch $U$ with gauge 2-group $\CCG$ is then described by strict 2-functors from the path 2-groupoid $\CP_{(2)} U$ to $\sB\CCG$,
    \begin{equation}
        \Phi\colon \CP_{(2)}U \to \sB\CCG~,\hspace{2cm}
        \begin{tikzcd}
            \text{surfaces} \rar["\Phi_2"] \dar[shift left] \dar[shift right] & \sH\rtimes\sG \dar[shift left] \dar[shift right] \\
            \text{paths} \rar["\Phi_1"] \dar[shift left] \dar[shift right] & \sG \dar[shift left] \dar[shift right] \\
            U \rar["\Phi_0"] & *
        \end{tikzcd}
    \end{equation}
    which assign to each path $\gamma$ a group element $g_\gamma\in \sG$ and to each surface $\sigma$ a group element $\Phi(\sigma)=(h_\sigma,g_\sigma)\in \sH\rtimes \sG$:
    \begin{equation}
        \begin{tikzcd}[column sep=2.5cm,row sep=large]
            x_2 & \ar[l, bend left=45, "\gamma_1", ""{name=U,inner sep=1pt,above}] \ar[l, bend right=45, "\gamma_2", swap, ""{name=D,inner sep=1pt,below}] x_1
            \arrow[Rightarrow,from=U, to=D, "\sigma",swap]
        \end{tikzcd}~~~\overset\Phi\longmapsto~~~
        \begin{tikzcd}[column sep=2.5cm,row sep=large]
            * & \ar[l, bend left=45, "g_{\gamma_1}", ""{name=U,inner sep=1pt,above}] \ar[l, bend right=45, "g_{\gamma_2}", swap, ""{name=D,inner sep=1pt,below}] *
            \arrow[Rightarrow,from=U, to=D, "{(h_\sigma,g_\sigma)}",swap]
        \end{tikzcd}
    \end{equation}
    Compatibility with domain and codomain maps in the morphism categories amounts to
    \begin{equation}
        g_{\gamma_1}=g_\sigma\eand g_{\gamma_2}=\sft(h_\sigma)g_\sigma~.
    \end{equation}
    Let \(\frg\) and \(\frh\) be the Lie algebras of \(\sG\) and \(\sH\), respectively. Then, the kinematical data consists of fields
    \begin{equation}
        A\in\Omega^1(U)\otimes\frg\eand B\in\Omega^2(U)\otimes\frh
    \end{equation}
    and their relation to the parallel transport functor is given by
    \begin{equation}
        g_\gamma=\Pexp\int_\gamma A\eand
        h_\sigma = \Pexp\ichint_A B~,
    \end{equation}
    where $\check B=\chint_A B$ is again a Chen form; see~appendix~\ref{app:chen_forms}. Part of the data defining this Chen form is the \(P_0\sG\)-representation of \(B\) (which is part of the data of the crossed module \(\hat L_0\sG\xrightarrow\sft P_0\sG\)) as well as the \(P_0\sG\)-connection \(A\).
    
    In general, the globular structure of the codomain of the 2-functor (in this case, the crossed module of Lie groups $\CCG$) translate to (possibly non-abelian) Stokes’ theorems on the curvatures. In this case, the globular structure requires that the condition known as \emph{fake flatness} holds, namely
    \begin{equation}
        F = \mathrm dA+\tfrac12[A,A]+\mu_1(B) = 0~. 
    \end{equation}
    
    To derive this, one needs some technical setup. The crux of the argument, however, is simple to describe. The identity
    \begin{equation}
        g_{\gamma_1}g_{\gamma_2}^{-1} = \sft(h_\sigma^{-1})
    \end{equation}
    for $\dpar\sigma=\gamma_1\cup\bar\gamma_2$ translates to
    \begin{equation}
        \Pexp\int_{\dpar \sigma}A = \sft\left(\Pexp\ichint_A B\right).
    \end{equation}
    By the non-abelian Stokes' theorem,
    \begin{equation}
        \Pexp\int_{\dpar\sigma}A = \Pexp\ichint_A(\mathrm dA+\tfrac12[A,A])~.
    \end{equation} 
    Since our closed surface was arbitrary, we get
    \begin{equation}\label{eq:fake_flatness}
        \mathrm dA+\tfrac12[A,A] = -\sft(B)~,
    \end{equation}
    and thus \(F \coloneqq \mathrm dA+\tfrac12[A,A] + \sft(B) = 0\). This sketch can be made rigorous~\cite{Baez:2004in} (see also~\cite{Schreiber:2008aa}) using Chen forms; see appendix~\ref{app:chen_forms} for details.
    
    In other words, the globular structure of the crossed module means that the parallel transport 2-functor induces a Stokes’ theorem that, unfortunately, renders all physical theories based on it essentially abelian, as reviewed in section~\ref{ssec:limitations}.
    
    Gauge transformations between two strict 2-functors $\Phi,~\tilde \Phi\colon \CP_{(2)}U\rightarrow \sB\CCG$ are again given by appropriate natural transformations, which are here the general pseudonatural transformations $\eta\colon\Phi\rightarrow \tilde \Phi$. These are encoded in maps
    \begin{equation}
        \eta_1\colon \CP_{(2)}U_0=U\mapsto \sG\eand \eta_2=(\eta_2^1,\eta_2^2)\colon \CP_{(2)}U_1\mapsto \sH\rtimes \sG~,
    \end{equation}
    where $\CP_{(2)}U_1$ are the paths or 1-morphisms in $\CP_{(2)}U$, such that for each path $x_1\xleftarrow{~~\gamma~~}x_0$, we have the commutative diagram
    \begin{equation}
        \begin{tikzcd}[column sep=2.5cm,row sep=large]
            * \ar[d,swap]{}{\eta_1(x_1)} & * \ar[l,swap]{}{\Phi(\gamma)}\ar[d]{}{\eta_1(x_0)}\\
            * & \ar[l]{}{\tilde\Phi(\gamma)}* \ar[Rightarrow,ul,swap]{}{\eta_2(\gamma)}
        \end{tikzcd}~~~\Rightarrow\hspace{1.5cm}
        \begin{aligned}
            \eta_2^2(\gamma)&=\tilde \Phi(\gamma)\eta_1(x_0)~,\\
            \sft(\eta_2^1(\gamma))\tilde \Phi(\gamma)\eta_1(x_0)&=\eta_1(x_1)\Phi(\gamma)~.
        \end{aligned}
    \end{equation}
    
    We also have higher-order natural transformations (sometimes called \emph{modifications}) between the pseudonatural transformations \(\eta,\tilde\eta\colon\Phi\Rightarrow\tilde\Phi\); these correspond to the fact that the gauge parameters themselves gauge-transform. 
    
    \subsection{Unadjusted higher derivative parallel transport}    

    To make the curvatures visible, we can again categorify once more and consider a strict 3-functor $\varPhi$ from the path 3-groupoid $\CP_{(3)} U$ to $\sB\sInn(\CCG)$. The path 3-groupoid $\CP_{(3)} U$ is the evident extension of the path 2-groupoid $\CP_{(2)}U$ by adding 3-morphisms consisting of 3-dimensional homotopies between pairs of bigons; for details see appendix~\ref{app:path_space}. The 3-groupoid $\sB\sInn(\CCG)$ has one object and its morphism 2-category is $\sInn(\CCG)$, as defined in section~\ref{ssec:inner_derivations}.
    \begin{equation}
        \varPhi\colon \CP_{(3)}U \to \sB\sInn(\CCG)~,\qquad
        \begin{tikzcd}
            \text{volumes} \rar["\varPhi_3"] \dar[shift left] \dar[shift right] & \sH \rtimes \big((\sH \rtimes \sG) \rtimes \sG\big) \dar[shift left] \dar[shift right] \\
            \text{surfaces} \rar["\varPhi_2"] \dar[shift left] \dar[shift right] & (\sH \rtimes \sG) \rtimes \sG \dar[shift left] \dar[shift right]\\
            \text{paths} \rar["\varPhi_1"] \dar[shift left] \dar[shift right] & \sG \dar[shift left] \dar[shift right]\\
            U \rar["\varPhi_0"] & *
        \end{tikzcd}
    \end{equation}
    
    Explicitly, the strict 3-functor $\varPhi$ therefore amounts to assignments
    \tikzset{Rightarrow/.style={double equal sign distance,>={Implies},->},triple/.style={-,preaction={draw,Rightarrow}},quad/.style={preaction={draw,Rightarrow,shorten >=0pt},shorten >=1pt,-,double,double distance=0.2pt}}
    \begin{equation}
        \begin{tikzcd}[column sep=3cm,row sep=large]
            x_2 & \ar[l, bend left=60, "\gamma_1", ""{name=U,inner sep=1pt,above}] \ar[l, bend right=60, "\gamma_2", swap, ""{name=D,inner sep=1pt,below}] x_1
            \arrow[Rightarrow, bend left=60,from=U, to=D, "\sigma_2",""{name=L,inner sep=1pt,right}]
            \arrow[Rightarrow, bend right=60,from=U, to=D, "\sigma_1",""{name=R,inner sep=1pt,left},swap]
            \arrow[triple,from=R,to=L,"\rho",swap]
        \end{tikzcd}~~~\overset{\varPhi}\longmapsto~~~
        \begin{tikzcd}[column sep=5cm,row sep=large]
            * & \ar[l, bend left=60, "\varPhi(\gamma_1)", ""{name=U,inner sep=1pt,above}] \ar[l, bend right=60, "\varPhi(\gamma_2)", swap, ""{name=D,inner sep=1pt,below}] *
            \arrow[Rightarrow, bend left=70,from=U, to=D, "{\varPhi(\sigma_2)}",""{name=L,inner sep=1pt,right}]
            \arrow[Rightarrow, bend right=70,from=U, to=D, "{\varPhi(\sigma_1)}",""{name=R,inner sep=1pt,left},swap]
            \arrow[triple,from=R,to=L,"{\varPhi(\rho)}",swap]
        \end{tikzcd}
    \end{equation}
    where, using the reparametrization introduced in section~\ref{ssec:simplification},
    \begin{equation}
        \begin{gathered}
            \varPhi(\gamma)=g_\gamma\in \sG~,~~~\varPhi(\sigma)=(h_\sigma,g^1_\sigma,g^2_\sigma)\in (\sH'\rtimes \sG')\rtimes \sG~,\\
            \varPhi(\rho)=(h^1_\rho,h^2_\rho,g^1_\rho,g^2_\rho)\in \sH\rtimes \big((\sH' \rtimes\sG')\rtimes \sG\big)
        \end{gathered}
    \end{equation}
    with
    \begin{subequations}
        \begin{align}
            g_{\gamma_1}&=g^2_{\sigma_1}=g^2_{\sigma_2}~,\\
            g_{\gamma_2}&=g^1_{\sigma_1}g_{\gamma_1}=g^1_{\sigma_2}g_{\gamma_1}~,\\
            \big(h_{\sigma_1},g^1_{\sigma_1},g^2_{\sigma_1}\big)&=\big(h^2_\rho,g^1_\rho,g^2_\rho\big)~,\\
            \big(h_{\sigma_2},g^1_{\sigma_2},g^2_{\sigma_2}\big)&=\big(h^1_\rho h^2_\rho,g^1_\rho,g^2_\rho\big)~.
        \end{align}
    \end{subequations}
    Now, $h_\sigma$ fixes $h^1_\rho$ and $h^2_\rho$, and $g_\gamma$ fixes \(g^1_\sigma\) and $g^2_\sigma$, which in turn fix \(g^1_\rho\) and \(g^2_\rho\). Altogether, the strict 3-functor $\varPhi\colon\CP_{(3)}U\to \sB\sInn(\CCG)$ is fully determined by the strict 2-functor $\Phi\colon\CP_{(2)}U\to \sB\CCG$.
    
    In terms of the gauge potential and curvature forms~\eqref{eq:unadjusted_fields}, the 3-functor \(\varPhi\) can be parametrized according to
    \begin{subequations}
        \begin{align}
            &&g_\gamma &= \Pexp\int_\gamma A~,&& \\
            &&h_\sigma = \Pexp\ichint_{A}&B~,~~~
            g_\sigma = \Pexp\ichint_{A}(-\tilde F)~, \\
            &&h_\rho &= \Pexp\ichichint_{A,B} (-H)~,\label{eq:443c}
        \end{align}
    \end{subequations}
    where \(\tilde F \coloneqq F - \mu_1(B) = \dd A+\tfrac12[A,A]\) is the ordinary Yang--Mills curvature. This assignment is fixed by the mapping between the Weil algebra and the inner automorphism 2-crossed module; see appendix~\ref{app:inner_weil}, which also explains the origin of the minus signs appearing in front of the curvatures.
    
    The Chen form (see appendix~\ref{app:chen_forms}) relating $H$ to $h_{\rho}$ is obtained by lifting $H$ first to a 2-form $\chint H$ on path space using the $\sG$-connection $A$ and then, further to a 1-form $\chichint H$ on surface space. The last step requires that $H$ form an $\sH$-representation, which is only the case if $F=0$, according to equation~\eqref{eq:unadjusted_H_gauge_transformation}. Under a \(\sH\)-gauge transformation parametrized by \(\Lambda\), \(H\) mixes with \(F\), and cannot form an \(\sH\)-representation by itself. Fake flatness enters the picture yet again. 
    
    Similarly, in defining \(\chint_A B\), we must require \(\sB\) to form a \(\sG\)-representation, which is only the case if \(\mu_3(A,A,-)=0\) according to equation~\eqref{eq:unadjusted_B_gauge_transformation}.
        
    The globular structure of $\sB\sInn(\CCG)$ now induces Stokes’ theorems as follows. Given a 1-morphism \(\sigma\colon\gamma_1\to\gamma_2\) and a 2-morphism \(\rho\colon\sigma_1\to\sigma_2\), we have the globular identities
    \begin{equation}\label{eq:derivative_globular_identities}
            g_{\gamma_1}g_{\gamma_2}^{-1} = g_\sigma^{-1}~,~~~h_{\sigma_1}h_{\sigma_2}^{-1} = h_\rho^{-1}~,~~~
            g_{\sigma_1}g_{\sigma_2}^{-1} = \unit~.        
    \end{equation}
    The first identity fixes
    \begin{subequations}\label{eq:derivative_stokes_theorems}
        \begin{align}
            \dd A+\tfrac12[A,A] &= \tilde F \coloneqq F-\mu_1(B) ~.\label{eq:derivative_stokes_theorem_1}
            \intertext{The second and third translate to the identities}
            \dd_AB &= H~, \label{eq:derivative_stokes_theorem_2}\\
            \dd_A\tilde F &= 0~. \label{eq:derivative_stokes_theorem_3}
        \end{align}
    \end{subequations}
    Equations~\eqref{eq:derivative_stokes_theorem_1} and~\eqref{eq:derivative_stokes_theorem_3} hold automatically; equation~\eqref{eq:derivative_stokes_theorem_2}, however, only holds if \(\tfrac1{3!}\mu_3(A,A,A) = 0\), according to equation~\eqref{eq:unadjusted_fields_H}.

    The derivative parallel transport 3-functor now fits into the following commutative diagram:
    \begin{equation}\label{diag:commut_functors}
        \begin{tikzcd}
            \CP_{(2)} U\ar[d,"\Phi",swap] \ar[r,hookrightarrow] & \CP_{(3)}U  \ar[d,"\varPhi"] \\
            \sB \CCG \ar[r,hookrightarrow] & \sB \sInn(\CCG)
        \end{tikzcd}
    \end{equation}
    which makes it clear how gauge transformations should be defined. As in the case of ordinary gauge theory, we can fix endpoints \(x_0,x_1\in U\) and decategorify, considering the hom 2-categories. Then we can add the 2-functor \(\varPhi_\text{curv}\), which is a truncation of \(\varPhi\) to integrals of field strengths only:
    \begin{equation}
        \begin{tikzcd}
            \CP_{(2)} U(x_0,x_1)\ar[d,"{\Phi(x_0,x_1)}",swap] \ar[r,hookrightarrow] & \CP_{(3)}U(x_0,x_1)  \ar[d,"{\varPhi(x_0,x_1)}"']  \ar[dr,"{\varPhi_\mathrm{curv}(x_0,x_1)}"] & \\
            \CCG \ar[r,hookrightarrow] & \sInn(\CCG) \ar[r,twoheadrightarrow,"\Pi"] & \sB\CCG
        \end{tikzcd}
    \end{equation}
    Gauge transformations are then 3-natural transformations \(\varPhi\Rightarrow\tilde\varPhi\), which are general enough to include the pseudonatural transformations of 2-categories, and whose induced 2-natural transformations become trivial on the induced curvature 2-functors:
    \begin{equation}
        \varPhi_{\rm curv}(x_0,x_1)=\Pi\circ \varPhi(x_0,x_1)=\Pi\circ \tilde \varPhi(x_0,x_1)=\tilde \varPhi_{\rm curv}(x_0,x_1)~.
    \end{equation}

    \subsection{Adjusted higher parallel transport}\label{ssec:adjusted_parallel_transport}
    
    Above, we saw that we have two equivalent definitions of parallel transport. For an ordinary parallel transport based on a Lie group $\sG$ over a contractible manifold $U$, we can use either a functor $\Phi\colon\CP U\to \sB\sG$ or a strict 2-functor $\varPhi\colon\CP_{(2)}U\to \sB\sInn(\sG)$ with a restricted set of (higher) natural isomorphisms. This picture clearly generalizes to higher categorifications\footnote{As remarked in the introduction, one should use simplicial models for the higher categories in order to avoid the technicalities arising from higher coherence conditions.}.
    
    In the case of a strict gauge 2-group $\CCG=(\sH\rtimes\sG\rightrightarrows \sG)$, the globular structure of the 2-crossed module $\sInn(\CCG)$ induces fake flatness~\eqref{eq:fake_flatness}, which renders the theory essentially abelian. We have seen before that an adjustment of the Weil algebra, if it exists, can remove the necessity for fake flatness~\eqref{eq:fake_flatness}. The same is true in the case of higher parallel transport: the adjusted Weil algebra leads to an adjusted 2-crossed module of Lie groups, whose adjusted globular structure obviates the need for fake flatness.
    
    Since we need an adjustment, we must start from a gauge 2-group that admits one. Adjusted parallel transport for an adjustable crossed module of Lie groups $\CCG$ is then defined as a 3-functor
    \begin{equation}\label{eq:adj_parallel_transport_functor}
        \varPhi^\text{adj}\colon\CP_{(3)}U\to \sB\sInn_{\rm adj}(\CCG)~,
    \end{equation}
    which is the analogue of the derivative parallel transport 3-functors. There is no analogue of the 2-functor 
    \begin{equation}
        \Phi\colon\CP_{(2)}U\rightarrow \sB\sString_{\rm lp}(\sG)
    \end{equation}
    for $\varPhi^{\rm adj}$, unlike the other cases discussed so far in this section. This is as expected: adjustment is crucial to the existence of a well-defined notion of non-abelian higher parallel transport, and this is only visible at the level of the Weil algebra \(\sW_\text{adj}(\sLie(\CCG))\) or, correspondingly, the inner automorphism 2-group \(\sInn_\text{adj}(\CCG)\). It \emph{is}, however, possible to truncate the 3-functor to a 2-functor sensitive only to the curvatures, and for every pair of endpoints \(x_0,x_1\in U\) we have a commutative diagram
    \begin{equation}
        \begin{tikzcd}
            & \CP_{(3)}U(x_0,x_1)  \ar[d,"{\varPhi^\mathrm{adj}(x_0,x_1)}"']  \ar[dr,"{\varPhi^\mathrm{adj}_\mathrm{curv}(x_0,x_1)}"] & \\
            \CCG \ar[r,hookrightarrow] & \sInn_{\rm adj}(\CCG) \ar[r,twoheadrightarrow,"\Pi"] & \sB\CCG
        \end{tikzcd}
    \end{equation}
    where the bottom row is the short exact sequence~\eqref{eq:ses_groupoids_adjusted}. This diagram is the adjusted analogue of diagram~\eqref{diag:commut_functors}, without the nonexistent 2-functor $\Phi$. Similarly to the previous cases, the admissible gauge transformations are those natural transformations \(\eta\colon\varPhi\to\varPhi\) that are rendered trivial by the following whiskering.
    \begin{equation}
        \begin{tikzcd}
            \sB\CCG & \sInn(\CCG) \lar["\Pi",swap]  &  \CP_{(3)} U(x_0,x_1)\lar[bend left=50, "{\varPhi(x_0,x_1)}", ""'{name=A}] \lar[bend right=50, "{\tilde \varPhi(x_0,x_1)}"', ""{name=B}] \ar[from=A, to=B, Rightarrow,"\eta"]
        \end{tikzcd}
    \end{equation}
    
    To explain the 3-functor in more detail, let us focus on the archetypical example: (the generalization of) the loop model of the string group,
    \begin{equation}
        \sString_{\rm lp}(\sG)=\big(\hat L_0\sG\rtimes P_0\sG\rightrightarrows P_0\sG\big)\ewith \sLie(\CCG)=\big(\hat L_0 \frg\rightarrow P_0\frg\big)~,
    \end{equation}
    where $\sG$ is a finite-dimensional Lie group whose Lie algebra $\frg$ is metric.\footnote{Thus, $\sG$ admits a bi-invariant (pseudo-)Riemannian metric.} The Weil algebra of $\sLie(\CCG)$ admits an adjustment as discussed in section~\ref{ssec:adjusted_Weils}, and thus $\CCG$ admits an adjusted 3-group of inner automorphisms as explained in section~\ref{ssec:adjusted_inner_derivation}. Other examples of 2-groups admitting and adjustment can be treated similarly; in particular the discussion for the group $\sG_{\rm lp}=(L_0\sG\rtimes P_0\sG\rightrightarrows P_0\sG)$ discussed in section~\ref{ssec:loop_model_of_Lie_algebra} follows by truncation. 
    
    The 3-functor~\eqref{eq:adj_parallel_transport_functor} is then of the following form:
    \begin{subequations}
        \begin{equation}
            \varPhi^\text{adj}\colon\CP_{(3)}U\to \sB\sInn_{\rm adj}(\sString_{\rm lp}(\sG))
        \end{equation}
        with components consisting of the following maps:
        \begin{equation}
            \begin{tikzcd}
                \text{volumes} \dar[shift left] \dar[shift right] \rar["\varPhi^\text{adj}_3"] & \hat L_0\sG \rtimes \big((\hat L_0\sG \rtimes P_0\sG) \rtimes P_0\sG \big) \dar[shift left] \dar[shift right] \\
                \text{surfaces} \dar[shift left] \dar[shift right] \rar["\varPhi^\text{adj}_2"] & (\hat L_0\sG \rtimes P_0\sG) \rtimes P_0\sG \dar[shift left] \dar[shift right] \\
                \text{paths}  \dar[shift left] \dar[shift right] \rar["\varPhi^\text{adj}_1"] & P_0\sG \dar[shift left] \dar[shift right] \\
                U \rar["\varPhi^\text{adj}_0"] & *
            \end{tikzcd}
        \end{equation}
    \end{subequations}
    
    In terms of the fields~\eqref{eq:adjusted_loop_fields} taking values in the adjusted Weil algebra, all components of the 3-functor \(\varPhi\) can be covariantly defined:
    \begin{subequations}\label{eq:adj_hol_3_functor}
        \begin{align}
            g_\gamma &= \Pexp\int_\gamma A~, \\
            (h_\sigma,g_\sigma) &= \Pexp\ichint_{A}\binom B{-\tilde F}~, \\
            h_\rho^{-1} &= \Pexp\chichint_{A,B} (-H)~,
        \end{align}
    \end{subequations}
    where \(\tilde F\) is the ordinary Yang--Mills field strength
    \begin{equation}
        \tilde F = F - \mu_1(B) = \dd A+\tfrac12[A,A]~.
    \end{equation}
    Notice that the field \(B\) does not form a \(P_0\sG\)-representation by itself, which is similar to the problem with \(H\) in the unadjusted case. Happily, in the 2-crossed module \(B\) occurs together with \(\tilde F\), and \((B,\tilde F)\) does form a \(P_0\sG\)-representation, which can be exponentiated. Also, now \(H\) is gauge-\emph{in}variant, so that there is no problem defining it.
    We do not have any freedom to choose how to define the components of the parallel transport 3-functor; this is determined by the mapping between the adjusted Weil algebra and the adjusted inner derivation 2-crossed module.
    
    It remains to check that the required Stokes’ theorems hold. The globular identities~\eqref{eq:derivative_globular_identities} are unchanged from the unadjusted case and these correspond to the same Stokes’ theorems~\eqref{eq:derivative_stokes_theorems}, which we rewrite for clarity:
    \begin{subequations}
        \begin{align}
            \dd A+\tfrac12[A,A] &= \tilde F \coloneqq F-\mu_1(B)~,\\
            \dd_A\binom B{-\tilde F} &= \binom H0~.
        \end{align}
    \end{subequations}
    We write it thus to emphasize that \(B\) only forms a \(P_0\sG\)-representation together with \(\tilde F\). The first is the non-abelian Stokes’ theorem as before, and one can easily check that the second equation corresponds to the correct Bianchi identities~\eqref{eq:fake_curvature_bianchi_identity} and~\eqref{eq:H_bianchi_identity} for the adjusted Weil algebra.

    We make a few final remarks. The assignment~\eqref{eq:adj_hol_3_functor} indeed defines a strict 3-functor; verifying functoriality mostly consists of drawing elaborate diagrams, meditating on them, and concluding that they are trivial, especially since this 3-functor is strict. We leave this to the vigilant reader with free time (much as Cervantes dedicates \emph{Don Quijote} to the \emph{desocupado lector}).
    
    Technically, our path 3-groupoids are equivalence class of paths, surfaces, and volumes under \emph{thin homotopy}, which are homotopies of “zero volume” (see appendix~\ref{app:path_space}). Once we grant that the 3-functors are well-defined without this quotienting, a transformation by thin homotopy corresponds to a parallel transport along a zero-volume homotopy, which are given by the integral of the relevant curvatures, but this vanishes because the volume is zero. (In the case of the top curvature \(H\), one uses the Bianchi identity~\eqref{eq:H_bianchi_identity} for it.)

    In retrospect, the assertion of~\cite{Baez:2004in} that fake flatness is required for thin homotopy invariance was but an avatar of the fact that, without adjustment, gauge transformations only close if fake curvature vanishes. In the adjusted case, this defect is absent.
    
    \section*{Acknowledgments}

    This work is partially supported by the Leverhulme Research Project Grant RPG-2018-329 ``The Mathematics of M5-Branes.'' We would like to thank the anonymous referees for their helpful comments and technical corrections.

    \appendices

    \subsection{\texorpdfstring{\(L_\infty\)}{L∞}-algebras}\label{app:L_infinity_defs}
    In this appendix, we give definitions for \(L_\infty\)-algebras and explain our conventions. We only need to work over the field of real numbers. The original references on \(L_\infty\)-algebras are~\cite{Zwiebach:1992ie,Lada:1992wc,Lada:1994mn}; we follow the conventions in~\cite{Jurco:2018sby}, which may also be helpful.
    
    An \emph{\(L_\infty\)-algebra} $\frL$ consists of a $\RZ$-graded\footnote{In this paper, all \(L_\infty\)-algebras used are graded in nonpositive integers only.} vector space
    \begin{equation}
        \frL = \dotsb \oplus \frL_{-2} \oplus \frL_{-1} \oplus \frL_0 \oplus \frL_1\oplus \frL_2 \oplus \dotsb
    \end{equation}
    equipped with a set of \(i\)-ary multilinear totally graded-antisymmetric operations or {\em higher products}
    \begin{equation}
        \mu_i\colon \frL^{\wedge i}\to \frL
    \end{equation}
    for each positive integer \(i\), of degree $|\mu_i| = 2-i$, that satisfy the \emph{homotopy Jacobi identities}
    \begin{equation}\label{eq:hom_rel}
        \sum_{i+j=n}\sum_{\sigma\in S_{i|j}}\chi(\sigma;a_1,\dotsc,a_{n})(-1)^{j}\mu_{j+1}(\mu_i(a_{\sigma(1)},\dotsc,a_{\sigma(i)}),a_{\sigma(i+1)},\dotsc,a_{\sigma(n)})=0~.
    \end{equation}
    Here, the \emph{unshuffles} $S_{i|j}$ consist of permutations of $i+j$ elements in which the first $i$ and last $j$ elements are ordered and $\chi(\sigma;a_1,\dotsc,a_n)$ is the Koszul sign
    \begin{equation}
        a_1\wedge\dotsb\wedge a_n = \chi(\sigma;a_1,\dotsc,a_n)a_{\sigma(1)} \dotsm a_{\sigma(n)}~.
    \end{equation}
    The rather involved identities~\eqref{eq:hom_rel} are in fact simply an alternative way of writing the nilquadraticity of the homological vector field $Q$ of the Chevalley--Eilenberg algebra. To translate between both, let $\tau_A$ be a basis of $\frL$ and $\xi^A$ dual coordinate functions on $E=\frL[1]$. Then
    \begin{equation}
        \begin{aligned}
            Q(\xi^A\otimes \tau_A)&=-\sum_{i\geq 1} \tfrac{1}{i!}\mu_i\big(\xi^{A_1}\otimes \tau_{A_1},\dotsc,\xi^{A_i}\otimes \tau_{A_i}\big)\\
            &=-\sum_{i\geq 1} \tfrac{1}{i!}\zeta(A_1,\dots,A_i)\xi^{A_1}\dotsm \xi^{A_i}\otimes\mu_i\big(\tau_{A_1},\dotsc,\tau_{A_i}\big)~,
        \end{aligned}
    \end{equation}
    where the Koszul sign $\zeta(A_1,\dots,A_i)=\pm 1$ arises from permuting odd elements $\xi^{A_j}$ past odd elements $\tau_{A_k}$ or taking them out of odd higher products $\mu_k$. Expanding $Q^2=0$ then reproduces the homotopy Jacobi identities~\eqref{eq:hom_rel}.
    
    A \emph{strict \(L_\infty\)-algebra}, such as the loop model of the string Lie 2-algebra, is one in which \(\mu_i = 0\) for \(i\ge3\). That is, it is simply a differential graded Lie algebra, and the formidable homotopy Jacobi identities~\eqref{eq:hom_rel} simply reduce to the following:
    \begin{itemize}
        \setlength\itemsep{0em}
        \item[$\triangleright$] the differential \(\mu_1\) is nilquadratic;
        \item[$\triangleright$] the differential \(\mu_1\) acts as a graded derivation with respect to the graded bracket \(\mu_2\);
        \item[$\triangleright$] the graded bracket \(\mu_2\) satisfies the Jacobi identity.
    \end{itemize}

    \subsection{Categorified groups and hypercrossed modules}\label{app:hypercrossed_modules}
    
    Below, we describe Lie 2- and 3-groups in terms of (1-)crossed modules and 2-crossed modules of Lie groups, which are special cases of hypercrossed modules.
    
    \paragraph{Crossed modules.} Crossed modules of Lie groups provide particularly accessible and workable definitions of strict Lie 2-groups. Since every Lie 2-group is categorically equivalent to a strict Lie 2-group~\cite[Prop.~45]{Baez:0307200}, crossed modules are sufficient for most purposes.

    A {\em crossed module of Lie groups} is a pair of Lie groups, together with a group homomorphism $\sft$,
    \begin{equation}
        \sH \xrightarrow{~\sft~} \sG
    \end{equation}
    and a smooth action $\acton$ of $\sG$ on $\sH$ by automorphisms such that $\sft$ is a $\sG$-homomorphism and the Peiffer identity holds:
    \begin{equation}
        \sft(g\acton h_1)=g\sft(h_1)g^{-1}\eand \sft(h_1)\acton h_2=h_1h_2h_1^{-1}
    \end{equation}
    for all $g\in \sG$ and $h_1,h_2\in \sH$. 
    
    Any crossed module of Lie groups encodes a {\em strict Lie 2-group} in the sense of~\cite{Baez:0307200}, which is a strict monoidal category with strictly invertible objects and morphisms. In particular, the crossed module \((\sH\xrightarrow{~\sft~}\sG)\) gives rise to the monoidal category
    \begin{equation}\label{eq:mon_cat_from_2_group}
        \CCC(\sH\xrightarrow\sft\sG)\coloneqq (\sH\rtimes \sG \rightrightarrows \sG )
    \end{equation}
    with morphisms and structure maps
    \begin{equation}
        \begin{gathered}
            \sft(h)g\xleftarrow{~(h,g)~} g~,~~~(h_1,\sft(h_2)g)\circ (h_2,g)=(h_1h_2,g)~,\\
            \id_g=(\unit_\sH,g)~,~~~(h_1,g_1)\otimes (h_2,g_2)=\big(h_1(g_1\acton h_2),g_1g_2\big)~.
        \end{gathered}
    \end{equation}
    Inversely, any Lie 2-group encoded by a monoidal category $\CCG$ gives rise to a crossed module of Lie groups $\CCG_{\rm \cm}$.
    
    Applying the tangent functor to a crossed module and restricting to the units in $\sH$ and $\sG$, we arrive at the notion of a {\em crossed module of Lie algebras}. This is a pair of Lie algebras together with a Lie algebra homomorphism $\sft$,
    \begin{equation}
        \frh \xrightarrow{~\sft~} \frg
    \end{equation}
    and a representation $\acton$ of $\frg$ on $\frh$ such that 
    \begin{equation}
        \sft(a \acton b_1)=[a,\sft(b_1)]\eand \sft(b_1)\acton b_2=[b_1,b_2]
    \end{equation}
    for all $a\in \frg$ and $b_1,b_2\in \frh$. 
    
    \paragraph{2-Crossed modules.} There are several, obvious categorifications of crossed modules of Lie groups. Here, we focus on 2-crossed modules~\cite{Conduche:1984:155,Conduche:2003}, which encode semistrict Lie 3-groups called {\em Gray groups}, i.e.~Gray groupoids with a single object; see~\cite{Kamps:2002aa}.
    
    A {\em 2-crossed module of Lie groups} is a triple of Lie groups, arranged in the normal complex
    \begin{equation}
        \sL\ \xrightarrow{~\sft~}\ \sH\ \xrightarrow{~\sft~}\ \sG~,
    \end{equation}
    and endowed with smooth $\sG$-actions on $\sH$ and $\sL$ by automorphisms such that the maps $\sft$ are $\sG$-equivariant:
    \begin{equation}
        \sft(g\acton \ell)=g\acton \sft(\ell)\eand \sft(g\acton h)=g\sft(h)g^{-1}
    \end{equation}
    for all $g\in\sG$, $h\in\sH$, and $\ell\in\sL$. The Peiffer identity of crossed modules of Lie groups is violated, but this violation is controlled by the {\em Peiffer lifting}, which is a $\sG$-equivariant smooth map
    \begin{equation}
        \begin{gathered}
            \{-,-\}\colon\sH\times \sH\to \sL~,\\
        \end{gathered}
    \end{equation}
    satisfying the following relations:
    \begin{subequations}
        \begin{align}
            \sft(\{h_1,h_2\})&=h_1 h_2 h_1^{-1}(\sft(h_1)\acton h_2^{-1})~, \label{eq:global_peiffer_lifting_identity}\\
            \{\sft(\ell_1),\sft(\ell_2)\}&=\ell_1\ell_2\ell_1^{-1}\ell_2^{-1}~, \\
            \{h_1 h_2,h_3\}&=\{h_1,h_2h_3h_2^{-1}\}(\sft(h_1)\acton\{h_2,h_3\})~,\\
            \{h_1,h_2h_3\}&=\{h_1,h_2\}\{h_1,h_3\}\{\langle h_1,h_3\rangle^{-1},\sft(h_1)\acton h_2\}~,\\
            \ell_1 \left(\sft(h_1)\acton \ell^{-1}_1\right)&=\{\sft(\ell_1),h_1\}\{h_1,\sft(\ell_1)\}
        \end{align}
    \end{subequations}
    for all $h_i\in \sH$ and $\ell_i\in \sL$. 
    
    Given a 2-crossed module of Lie groups \(\sL\to\sH\to\sG\), we can construct a monoidal 2-category 
    \begin{subequations}\label{eq:mon_cat_from_3_group}
        \begin{equation}
            \CCC(\sL\to\sH\to\sG) \coloneqq (
            \sL \rtimes \sH \rtimes \sG \rightrightarrows  \sH \rtimes \sG \rightrightarrows  \sG  )~,
        \end{equation}
        whose globular structure is
        \begin{equation}
            \begin{tikzcd}[column sep=2.5cm,row sep=large]
                \sft(h)g & \ar[l, anchor=center,bend left=45, "{(h,g)}", ""{name=U,inner sep=1pt,above}] \ar[l,anchor=center, bend right=45, "{(\sft(\ell)h,g)}", swap, ""{name=D,inner sep=1pt,below}] 
                g \arrow[Rightarrow,from=U, to=D, "{(\ell,h,g)}",swap]
            \end{tikzcd}
        \end{equation}
    \end{subequations}
    see e.g.~\cite[Section 1.4]{Kamps:2002aa}. Shifting the degrees of all morphisms by one, we define the 3-groupoid $\sB(\CCC(\sL\to\sH\to\sG))$, which is a Gray groupoid. 
    
    Conversely, given a monoidal 2-category $\CCG$ encoding a 3-group, we denote the corresponding 2-crossed module of Lie groups by $\CCG_{\rm \cm}$.
    
    The infinitesimal counterpart of a 2-crossed module of Lie groups is a {\em 2-crossed module of Lie algebras}, which consists of a triple of Lie algebras arranged in the complex
    \begin{equation}
        \frl\ \xrightarrow{~\sft~}\ \frh\ \xrightarrow{~\sft~}\ \frg~.
    \end{equation}
    Additionally, we have $\frg$-actions $\acton$ onto $\frh$ and $\frl$ by derivations. The maps $\sft$ are equivariant with respect to these actions,
    \begin{equation}
        \sft(a\acton c)=a\acton\sft(c)\eand \sft(a\acton b)=[a,\sft(b)]
    \end{equation}
    for all $a\in \frg$, $b\in \frh$, and $c\in \frl$. The violation of the Peiffer identity is controlled by a differential version of the Peiffer lifting, which is a $\frg$-equivariant bilinear map
    \begin{equation}
        \{-,-\}\colon \frh\times \frh\to \frl~,
    \end{equation}
    which also satisfies the following relations:
    \begin{subequations}\label{eq:ax:2-crossed_Lie}
        \begin{align}
            \sft(\{b_1,b_2\})&=[b_1,b_2]-\sft(b_1)\acton b_2~, \label{eq:2cm_axiom2}\\
            \{\sft(c_1),\sft(c_2)\}&=[c_1,c_2]~, \\
            \{b_1,[b_2,b_3]\}&=\{\sft(\{b_1,b_2\}),b_3\}-\{\sft(\{b_1,b_3\}),b_2\}~,\\
            \{[b_1,b_2],b_3\}&=\sft(b_1)\acton\{b_2,b_3\}+\{b_1,[b_2,b_3]\}-\sft(b_2)\acton\{b_1,b_3\}-\{b_2,[b_1,b_3]\}~,\\
            -\sft(b_1)\acton c_1&=\{\sft(c_1),b_1\}+\{b_1,\sft(c_1)\} \label{eq:2cm_axiom6}
        \end{align}
    \end{subequations}
    for all $b_1,b_2,b_3\in \frh$ and $c_1,c_2\in \frl$.
    
    Given a 2-crossed module of Lie algebras $\frl\xrightarrow{~\sft~}\frh \xrightarrow{~\sft~}\frg$, the subcomplexes $\frl\xrightarrow{~\sft ~}\frh$ with action
    \begin{equation}
        b\acton c\coloneqq -\{\sft(c),b\}~,~~~b\in \frh~,~~~c\in \frl
    \end{equation}
    as well as $\sft(\frl)\setminus\frh \xrightarrow{~\sft~}\frg$ with the unmodified action of $\frg$ on $\sft(\frl)\setminus\frh$ also form crossed modules of Lie algebras. 
    
    We explain the relationship between Lie 1-, 2-, and 3-algebras and certain hypercrossed modules of Lie algebras in appendix~\ref{app:Lie3_2_crossed}.

    \subsection{Path and loop groups}\label{app:path_groups}
    
    The construction of the strict 2-group model of the string group~\cite{Baez:2005sn} requires a particular technical choice of path groups and loop groups. In short, path groups are smooth and based; loop groups are based, and consist of loops that are smooth everywhere except at the base point, where they are merely continuous.
    
    Given a finite-dimensional Lie group \(\sG\), the \emph{path group} \(P_0\sG\) is the Fréchet--Lie group of smooth paths \(\gamma\colon [0,1]\to \sG\) such that \(\gamma(0) = \unit_\sG\). The group operation is pointwise multiplication. The \emph{loop group} \(L_0\sG\) is the subgroup of those paths \(\gamma\) such that \(\gamma(0)=\gamma(1)\). We do \emph{not} require any further smoothness at the base point. Thus there is a non-split short exact sequence
    \begin{equation}
        * \to L_0\sG \to P_0\sG \overset\partial\to \sG \to *~,
    \end{equation}
    where \(\partial\colon P_0\sG \to \sG\) is the endpoint evaluation map.
    Given the Lie algebra \(\frg\) of \(\sG\), the corresponding Lie algebras are \(P_0\frg\) and \(L_0\frg\), with obvious definitions and the corresponding non-split short exact sequence
    \begin{equation}
        * \to L_0\frg \to P_0\frg \overset\partial\to \frg \to *~.
    \end{equation}
    
    The Fréchet–Lie group \(\hat L_0\sG\) is the usual Kac--Moody central extension of \(L_0\sG\). Its Lie algebra is
    \begin{equation}
        \hat L_0\frg = L_0\frg \oplus \mathbb R~,
    \end{equation}
    where \(\mathbb R\) is the 1-dimensional abelian Lie algebra and \(\oplus\) is a direct sum of Lie algebras.
    While at the level of Lie algebras \(\hat L_0\frg\) is just a trivial direct sum, at the level of Lie groups \(\hat L_0\sG\) is a nontrivial principal $\sU(1)$-bundle over \(L_0\sG\). We thus have the exact sequences
    \begin{equation}
        * \to \operatorname \sU(1) \to \hat L_0\sG \to P_0\sG \overset\partial\to \sG \to *
    \end{equation}
    and
    \begin{equation}
        * \to \mathbb R\to \hat L_0\frg \to P_0\frg \overset\partial\to \frg \to *~.
    \end{equation}

    \subsection{Strict Lie 3-algebras and 2-crossed modules of Lie algebras}\label{app:Lie3_2_crossed}
    
    Semistrict Lie 3-algebras can be described both by 2-crossed modules of Lie algebras as well as 3-term \(L_\infty\)-algebras. For our purposes, the precise relation between these is important. Because we could not find the relevant statements in the literature, we give them below.
    
    We first mention the comparison theorems between \(n\)-term \(L_\infty\)-algebras and \((n-1)\)-crossed modules of Lie algebras for \(n\le 2\):
    \begin{theorem}
        A 1-term \(L_\infty\)-algebra is the same thing as a 0-crossed module of Lie algebras (i.e.~a Lie algebra).
    \end{theorem}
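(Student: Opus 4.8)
The plan is to prove this by directly unpacking the definitions, the entire content being a degree count on the higher products. First I would recall that a 1-term $L_\infty$-algebra is, by definition, an $L_\infty$-algebra whose underlying graded vector space is concentrated in a single degree, which in our conventions is degree~$0$, so $\frL = \frL_0$. The key observation is that each product $\mu_i\colon \frL^{\wedge i}\to \frL$ carries degree $|\mu_i| = 2-i$; hence on arguments drawn entirely from $\frL_0$ the output $\mu_i(a_1,\dotsc,a_i)$ sits in degree $2-i$. Since $\frL$ vanishes outside degree~$0$, the product $\mu_i$ is forced to vanish unless $2-i=0$. Therefore $\mu_1 = 0$ (its output would lie in degree~$1$) and $\mu_i = 0$ for all $i\ge 3$ (outputs in strictly negative degrees), leaving only the binary operation $\mu_2\colon \frL_0\wedge\frL_0\to\frL_0$.

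Next I would verify that the surviving datum is exactly a Lie algebra. Because every element of $\frL_0$ is even, all Koszul signs trivialize, so the graded antisymmetry of $\mu_2$ reduces to ordinary antisymmetry and $\mu_2$ is an honest antisymmetric bilinear bracket. It then remains to read off the homotopy Jacobi identities~\eqref{eq:hom_rel}: with $\mu_1 = 0$ and $\mu_{i\ge3}=0$, the only relation that survives is the one for $n=3$ assembled from two copies of $\mu_2$, and once more the Koszul signs trivialize on degree-$0$ inputs, so this collapses to the ordinary Jacobi identity for $\mu_2$. Thus $(\frL_0,\mu_2)$ is precisely a Lie algebra.

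For the converse I would observe that any Lie algebra $\frg$ yields a 1-term $L_\infty$-algebra by placing $\frg$ in degree~$0$, setting $\mu_2$ equal to the Lie bracket, and taking all other higher products to vanish; the homotopy Jacobi identities then reduce to the Jacobi identity, which holds by hypothesis. These two assignments are manifestly mutually inverse, giving the asserted identification, and a $0$-crossed module of Lie algebras is a Lie algebra by definition.

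I do not expect any genuine obstacle: the statement is essentially definitional, and the only step requiring a moment of care is confirming that the Koszul signs really do trivialize. Since all generators are concentrated in a single even degree, this is immediate, so the proof is a short bookkeeping argument rather than a substantive computation.
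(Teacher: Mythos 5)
Your proof is correct and is exactly the definitional unpacking one would expect; the paper itself simply declares the result ``Trivial'' and gives no argument, so your degree-counting and Koszul-sign bookkeeping just fills in the details the authors omitted. There is nothing to compare beyond that.
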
\begin{proof}
        Trivial.
    \end{proof}
    
    \begin{theorem}
        A strict 2-term \(L_\infty\)-algebra is the same thing as a (1-)crossed module of Lie algebras.
    \end{theorem}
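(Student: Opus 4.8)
The plan is to exhibit explicit, mutually inverse constructions in both directions and check that the defining axioms match up term by term. Throughout I would use that a \emph{strict} 2-term $L_\infty$-algebra is precisely a differential graded Lie algebra concentrated in degrees $-1$ and $0$: the data is $\frL=\frL_{-1}\oplus\frL_0$ together with $\mu_1\colon\frL_{-1}\to\frL_0$ and a graded-antisymmetric degree-$0$ bracket $\mu_2$, with $\mu_i=0$ for $i\ge 3$, subject to the three conditions recalled in appendix~\ref{app:L_infinity_defs}: nilquadraticity of $\mu_1$ (automatic here, since $\mu_1^2$ lands in $\frL_1=0$), the graded Leibniz rule for $\mu_1$ over $\mu_2$, and the graded Jacobi identity for $\mu_2$.

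For the direction from $L_\infty$-algebras to crossed modules, I would set $\frg\coloneqq\frL_0$, $\frh\coloneqq\frL_{-1}$, $\sft\coloneqq\mu_1$, and define the $\frg$-action by $a\acton b\coloneqq\mu_2(a,b)$ for $a\in\frg$ and $b\in\frh$. The Lie bracket on $\frg$ is $\mu_2$ restricted to $\frL_0\wedge\frL_0$, whose Jacobi identity is the $n=3$ homotopy Jacobi identity on three degree-$0$ elements (all $\mu_3$-terms vanishing by strictness). Since $\mu_2(\frh,\frh)\subseteq\frL_{-2}=0$, the algebra carries no bracket on $\frh$ a priori, so I would define one through the Peiffer relation, $[b_1,b_2]_\frh\coloneqq\mu_2(\mu_1 b_1,b_2)$. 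That the action is a representation by derivations follows from the $n=3$ identity with two degree-$0$ and one degree-$(-1)$ insertion, and $\frg$-equivariance $\sft(a\acton b)=[a,\sft(b)]$ reduces to the Leibniz rule $\mu_1\mu_2(a,b)=\mu_2(a,\mu_1 b)$; the first Peiffer identity $\sft(b_1)\acton b_2=[b_1,b_2]_\frh$ then holds by definition.

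Conversely, given a crossed module $(\frh\xrightarrow{\sft}\frg,\acton)$ I would build $\frL_0\coloneqq\frg$, $\frL_{-1}\coloneqq\frh$, $\mu_1\coloneqq\sft$, with $\mu_2$ equal to the bracket of $\frg$ on $\frL_0\wedge\frL_0$, equal to $\acton$ on the mixed slot $\frL_0\wedge\frL_{-1}$, and zero on $\frL_{-1}\wedge\frL_{-1}$, and $\mu_i=0$ for $i\ge3$. One then checks that the homotopy Jacobi identities reduce, slot by slot according to the degrees of the insertions, to exactly the crossed-module axioms (equivariance, the representation property, and the Peiffer identities). After this, verifying that the two constructions are mutually inverse on objects is immediate, and extending both to morphisms in the obvious way yields the asserted isomorphism of categories.

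The step I expect to be the genuine obstacle is verifying that the bracket $[b_1,b_2]_\frh=\mu_2(\mu_1 b_1,b_2)$ reconstructed in the forward direction is a bona fide Lie bracket, i.e.\ antisymmetric and satisfying Jacobi. Antisymmetry is \emph{not} a formal consequence of the graded antisymmetry of $\mu_2$ (which is vacuous on $\frh\wedge\frh$), but instead follows from the Leibniz rule: applying $\mu_1$ to $\mu_2(b_1,b_2)=0$ gives $\mu_2(\mu_1 b_1,b_2)=\pm\mu_2(b_1,\mu_1 b_2)$, and the graded antisymmetry of $\mu_2$ on the mixed slot then rewrites the right-hand side as $-\mu_2(\mu_1 b_2,b_1)$. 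The only delicate point is keeping the Koszul signs consistent; the cleanest route is to phrase everything in the differential graded Lie algebra language, writing $[\ccdot,\ccdot]$ for $\mu_2$, where $0=\mu_1[b_1,b_2]=[\mu_1 b_1,b_2]-[b_1,\mu_1 b_2]$ and graded antisymmetry immediately deliver $[\mu_1 b_1,b_2]=-[\mu_1 b_2,b_1]$. Jacobi for $[\ccdot,\ccdot]_\frh$ and the remaining Peiffer identity then follow from these same two axioms, so no further input is needed.
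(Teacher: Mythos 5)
Your proposal is correct and follows essentially the same route as the paper: identify $\frg=\frL_0$, $\frh=\frL_{-1}$, $\sft=\mu_1$, $a\acton b=\mu_2(a,b)$ and $[b_1,b_2]_\frh=\mu_2(\mu_1(b_1),b_2)$, with the evident inverse construction. The paper states these assignments without verification, so your additional check that the induced bracket on $\frh$ is antisymmetric (via the Leibniz rule applied to $\mu_2(b_1,b_2)=0$) is a welcome elaboration rather than a divergence.
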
\begin{proof}
        Given a strict Lie 2-algebra
        \begin{equation}
            \frL=\big(\frL_{-1} \xrightarrow{~\mu_1~} \frL_0\big)~,
        \end{equation}
        we can construct the crossed module of Lie algebras 
        \begin{equation}
            \begin{matrix}
                \frh & \overset{\sft}\longrightarrow & \frg \\
                b & \overset{\sft}\longmapsto & \mu_1(b)
            \end{matrix}
        \end{equation}
        with \(\frg=\frL_0\) and \(\frh=\frL_{-1}\) and
        \begin{equation}
            [a_1,a_2]_\frg=\mu_2(a_1,a_2)~,~~~
            a_1\acton b_1=\mu_2(a_1,b_1)~,~~~
            [b_1,b_2]_\frh=\mu_2(\mu_1(b_1),b_2)
        \end{equation}
        for all $a_1,a_2\in\frg$ and $b_1,b_2\in \frh$. The inverse construction is also evident.
    \end{proof}
    The next step up in the categorification process turns out to be a bit more complicated.
    \begin{theorem}\label{thm:2cm_to_L_infty}
        The complex of Lie algebras underlying a 2-crossed module of Lie algebras comes with a strict 3-term \(L_\infty\)-algebra structure.
    \end{theorem}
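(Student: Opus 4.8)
The plan is to exhibit an explicit strict $L_\infty$-structure on the three-term complex and then check the (strongly reduced) homotopy Jacobi identities~\eqref{eq:hom_rel}. Concretely, given a 2-crossed module $\frl\xrightarrow{~\sft~}\frh\xrightarrow{~\sft~}\frg$, I would place $\frL_0=\frg$, $\frL_{-1}=\frh$, $\frL_{-2}=\frl$, set $\mu_1\coloneqq\sft$ on $\frL_{-1}$ and $\frL_{-2}$ (and zero on $\frL_0$), and define $\mu_2$ degreewise: the Lie bracket on $\frL_0\wedge\frL_0$, the $\frg$-actions on $\frL_0\wedge\frL_{-1}$ and $\frL_0\wedge\frL_{-2}$, and — the only genuinely new ingredient — a map $\frL_{-1}\wedge\frL_{-1}\to\frL_{-2}$ built from the Peiffer lifting. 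Since two elements of $\frL_{-1}$ carry odd degree, graded antisymmetry forces $\mu_2$ to be \emph{symmetric} here, so only the symmetric combination $\{b_1,b_2\}+\{b_2,b_1\}$ can enter; its sign and normalization are fixed below. All remaining components of $\mu_2$ land in degrees $\le-3$ and hence vanish, and I set $\mu_i=0$ for $i\ge3$, so the result is a differential graded Lie algebra.

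Because the structure is strict, the discussion in appendix~\ref{app:L_infinity_defs} collapses the homotopy Jacobi identities to three conditions. First, $\mu_1^2=0$ is exactly the statement that $\frl\to\frh\to\frg$ is a complex, which is part of the data. Second, I must verify that $\mu_1$ is a graded derivation of $\mu_2$, which I would do degree by degree: on $\frL_0\wedge\frL_{-1}$ and $\frL_0\wedge\frL_{-2}$ it is precisely the $\sft$-equivariance of the $\frg$-actions; on $\frL_{-1}\wedge\frL_{-1}$ it reduces, after applying $\sft$, to axiom~\eqref{eq:2cm_axiom2}; and on $\frL_{-1}\wedge\frL_{-2}$ (whose product vanishes) it reduces to axiom~\eqref{eq:2cm_axiom6}. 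The simultaneous consistency of these last two is what pins the sign of $\mu_2$ on $\frL_{-1}\wedge\frL_{-1}$ to be minus the symmetrization of the Peiffer lifting.

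Third, I would check the graded Jacobi identity for $\mu_2$. The crucial simplification is a degree count: the Jacobiator of three homogeneous elements has total degree equal to the sum of their degrees, so it can be nonzero only when that sum lies in $\{0,-1,-2\}$. This leaves $\frL_0^{\wedge3}$ (ordinary Jacobi of $\frg$), the cases $\frL_0\wedge\frL_0\wedge\frL_{-1}$ and $\frL_0\wedge\frL_0\wedge\frL_{-2}$ (which say that $\acton$ is a genuine $\frg$-representation), and $\frL_0\wedge\frL_{-1}\wedge\frL_{-1}$, which reduces to the $\frg$-equivariance of the Peiffer lifting. Notably, the naively worst case $\frL_{-1}^{\wedge3}$ has total degree $-3$ and therefore vanishes identically, as does every case involving $\frL_{-2}$ nontrivially.

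The main obstacle is thus not any single hard identity but the careful Koszul-sign bookkeeping in~\eqref{eq:hom_rel} together with fixing the normalization of $\mu_2$ on $\frh\wedge\frh$; once the sign is chosen so that $\mu_1$ is a derivation, the remaining identities follow from the equivariance of $\sft$, the representation property of $\acton$, axioms~\eqref{eq:2cm_axiom2} and~\eqref{eq:2cm_axiom6}, and the $\frg$-equivariance of the Peiffer lifting. It is worth noting — consistent with the remark that the converse requires extra data — that this forward direction never invokes the brackets of $\frh$ or $\frl$, nor the remaining Peiffer relations of~\eqref{eq:ax:2-crossed_Lie}: the $L_\infty$-structure only remembers the symmetric part of $\{-,-\}$, discarding its antisymmetric part $\llbracket-,-\rrbracket$.
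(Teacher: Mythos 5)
Your proposal is correct and follows essentially the same route as the paper: the same assignment of components, the same $\mu_2$ on $\frh\wedge\frh$ as minus the symmetrization of the Peiffer lifting, and the same case-by-case matching of the reduced homotopy Jacobi identities to the 2-crossed-module axioms (complex condition, $\sft$-equivariance, the symmetric part of~\eqref{eq:2cm_axiom2}, axiom~\eqref{eq:2cm_axiom6}, the representation properties, and the $\frg$-equivariance of the Peiffer lifting), which is precisely the content of the paper's Table~\ref{table:2cm_Lie3_correspondence}. Your explicit degree-counting argument for why the remaining Jacobiator components vanish, and the observation that the construction discards $\llbracket-,-\rrbracket$ and the brackets on $\frh$ and $\frl$, are both accurate and consistent with the paper's subsequent Theorem~\ref{thm:L_infty_to_2cm}.
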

    \begin{proof}
        Given a 2-crossed module of Lie algebras
        \begin{equation}
            (\frl \xrightarrow{~\sft~} \frh \xrightarrow{~\sft~}\frg,\acton,\{-,-\})~,
        \end{equation}
        there is a strict 3-term \(L_\infty\)-algebra
        \begin{equation}
            \begin{matrix}
                \frL\quad=\quad(&\frL_{-2}&\overset{\mu_1}\longrightarrow&\frL_{-1}&\overset{\mu_1}\longrightarrow&\frL_0&)~,\\
                &c&\overset{\mu_1}\longmapsto&\sft(c)\\
                &&&b&\overset{\mu_1}\longmapsto&\sft(b)
            \end{matrix}
        \end{equation}
        where \(\frL_{-2}=\frl\) and \(\frL_{-1}=\frh\) and \(\frL_0=\frg\),
        with non-trivial higher products
        \begin{subequations}
            \begin{align}
                \mu_2(a_1,a_2) &\coloneqq [a_1,a_2]_\frg~, \\
                \mu_2(a_1,b_1)&\coloneqq a_1\acton b_1~, \\
                \mu_2(a_1,c)&\coloneqq a_1\acton c~,\\
                \mu_2(b_1,b_2)&\coloneqq -\{b_1,b_2\}-\{b_2,b_1\}\label{eq:Peiffeb_sym_para_mu2_relation}
            \end{align}
        \end{subequations}
        for all $a_1,a_2\in \frg$, $b_1,b_2\in \frh$, and $c\in \frl$. One readily verifies that the homotopy Jacobi identity is satisfied for these higher products, as in Table~\ref{table:2cm_Lie3_correspondence}.
        \begin{table}
            \begin{center}
                \begin{tabular}{cc}
                    \toprule
                    2-crossed module &homotopy Jacobi identity \\ \midrule
                    \(\sft\circ\sft = 0\) & \(\mu_1(\mu_1(\frL_{-2}))\) \\
                    \(\mathfrak g\)-equivariance of map \(\sft\colon\mathfrak h\to\mathfrak g\) & \(\mu_2(\frL_0,\mu_1(\frL_{-1}))\) \\
                    \(\mathfrak g\)-equivariance of map \(\sft\colon\mathfrak l\to\mathfrak h\) & \(\mu_2(\frL_0,\mu_1(\frL_{-2}))\) \\
                    symmetric part of~\eqref{eq:2cm_axiom2} & \(\mu_2(\mu_1(\frL_{_-1}),\frL_{-1})\) \\
                    \eqref{eq:2cm_axiom6}&\(\mu_2(\mu_1(\frL_{-1}),\frL_{-2})\) \\
                    Jacobi identity for \(\mathfrak g\) Lie bracket & \(\mu_2(\mu_2(\frL_0,\frL_0),\frL_0)\) \\
                    \(\mathfrak g\)-action on \(\mathfrak h\) & \(\mu_2(\mu_2(\frL_0,\frL_0),\frL_{-1})\) \\
                    \(\mathfrak g\)-action on \(\mathfrak l\) & \(\mu_2(\mu_2(\frL_0,\frL_0),\frL_{-2})\) \\
                    symmetric part of \(\mathfrak g\)-equivariance of Peiffer lifting & \(\mu_2(\mu_2(\frL_{-1},\frL_{-1}),\frL_0)\) \\
                    \bottomrule 
                \end{tabular}
            \end{center}
            \caption{Proof that a 2-crossed module of Lie algebras defines an \(L_\infty\)-algebra}\label{table:2cm_Lie3_correspondence}
        \end{table}
    \end{proof}
    \begin{theorem}\label{thm:L_infty_to_2cm}
        A strict Lie 3-algebra \(\frL = \frL_{-2} \oplus \frL_{-1} \oplus \frL_0\) equipped with a choice of graded-symmetric (i.e.~antisymmetric) bilinear map
        \begin{subequations}\label{eq:add_map}
            \begin{equation}
                \llbracket-,-\rrbracket\colon \frL_{-1} \times \frL_{-1} \to \frL_{-2}
            \end{equation}
            which satisfies the identities
            \begin{equation}\label{eq:req_identities}
                \begin{aligned}
                    \llbracket  b_2,\mu_2( b_3,\mu_1( b_1))\rrbracket-\llbracket  b_3,\mu_2( b_2,\mu_1( b_1))\rrbracket+\mu_2(\mu_1( b_1),\llbracket  b_2, b_3\rrbracket)&=0~,\\
                    \llbracket  b_1,\mu_1(\llbracket  b_2, b_3\rrbracket)\rrbracket-
                    \llbracket  b_2,\mu_1(\llbracket  b_1, b_3\rrbracket)\rrbracket+
                    \llbracket  b_3,\mu_1(\llbracket  b_1, b_2\rrbracket)\rrbracket-~~&\\
                    -\tfrac14 \mu_2( b_1,\mu_2( b_2,\mu_1( b_3)))
                    +\tfrac14 \mu_2( b_3,\mu_2( b_2,\mu_1( b_1)))&=0
                \end{aligned}
            \end{equation}
        \end{subequations}
        for all $ b_1,b_2,b_3\in \frL_{-1}$ comes with the structure of a 2-crossed module on its underlying graded vector space, where the Peiffer lifting reads as
        \begin{equation}\label{eq:peiffer_thm}
            \{ b_1, b_2\}=\llbracket b_1, b_2\rrbracket-\tfrac12\mu_2( b_1, b_2)
        \end{equation}
        for all $ b_1, b_2\in \frL_{-1}$.
    \end{theorem}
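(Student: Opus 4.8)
The plan is to read the statement as the converse of Theorem~\ref{thm:2cm_to_L_infty}. There, the passage from a 2-crossed module to a strict 3-term \(L_\infty\)-algebra records only the \emph{symmetric} part of the Peiffer lifting, through the relation \(\mu_2(b_1,b_2)=-\{b_1,b_2\}-\{b_2,b_1\}\) of~\eqref{eq:Peiffeb_sym_para_mu2_relation}, and forgets the antisymmetric part. Since both arguments of \(\mu_2|_{\frL_{-1}\times\frL_{-1}}\) are odd, graded antisymmetry of \(\mu_2\) makes this restriction \emph{symmetric}; hence~\eqref{eq:peiffer_thm} is exactly the decomposition of \(\{-,-\}\) into its symmetric summand \(-\tfrac12\mu_2\) and an antisymmetric summand \(\llbracket-,-\rrbracket\), the latter being the data the forgetful passage destroyed. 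The task is therefore to show that \emph{any} antisymmetric \(\llbracket-,-\rrbracket\) obeying~\eqref{eq:req_identities} reassembles a consistent 2-crossed module.

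First I would fix the remaining structure maps. With \(\frl=\frL_{-2}\), \(\frh=\frL_{-1}\), \(\frg=\frL_0\), set \(\sft\coloneqq\mu_1\), \([a_1,a_2]_\frg\coloneqq\mu_2(a_1,a_2)\), and define the \(\frg\)-actions by \(a\acton b\coloneqq\mu_2(a,b)\) and \(a\acton c\coloneqq\mu_2(a,c)\). The brackets on \(\frh\) and \(\frl\) are then \emph{forced}: axiom~\eqref{eq:2cm_axiom2} dictates
\[
    [b_1,b_2]_\frh\coloneqq\mu_1(\{b_1,b_2\})+\mu_2(\mu_1(b_1),b_2)~,
\]
while the second relation of~\eqref{eq:ax:2-crossed_Lie}, \(\{\sft(c_1),\sft(c_2)\}=[c_1,c_2]_\frl\), dictates \([c_1,c_2]_\frl\coloneqq\{\mu_1(c_1),\mu_1(c_2)\}\). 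With these choices the first two lines of~\eqref{eq:ax:2-crossed_Lie} hold by construction, so the content lies in the remaining axioms and in checking that \(\frh\) and \(\frl\) are genuinely Lie algebras.

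Next I would clear away everything whose content reduces to the \(\frg\)-sector, the actions and \(\sft\): nilpotency \(\sft\circ\sft=0\), the equivariances \(\sft(a\acton c)=a\acton\sft(c)\) and \(\sft(a\acton b)=[a,\sft(b)]\), the derivation property of the actions, the Jacobi identity on \(\frg\), and the last Peiffer relation~\eqref{eq:2cm_axiom6}. Each follows directly from the strict \(L_\infty\)-relations of appendix~\ref{app:L_infinity_defs}---nilquadraticity of \(\mu_1\), the Leibniz rule of \(\mu_1\) over \(\mu_2\), and the graded Jacobi identity of \(\mu_2\)---and is literally one row of Table~\ref{table:2cm_Lie3_correspondence} read backwards. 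Low-degree bookkeeping trivialises several checks: because \(\frL_{-3}=0\), applying Leibniz to the vanishing products \(\mu_2(c_1,\mu_1 c_2)=0\) and \(\mu_2(c_1,b)=0\) yields \(\mu_2(\mu_1 c_1,\mu_1 c_2)=0\) and \(\mu_2(\mu_1 c_1,b)=\mu_2(\mu_1 b,c_1)\); the former makes \([-,-]_\frl\) antisymmetric and the latter is precisely~\eqref{eq:2cm_axiom6}. Graded antisymmetry of \([-,-]_\frh\) follows by symmetrising its definition and using that \(\llbracket-,-\rrbracket\) is antisymmetric while \(\mu_2|_{\frh\times\frh}\) is symmetric.

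The genuinely new content sits in the axioms coupling \(\llbracket-,-\rrbracket\) to \(\mu_2\): the higher Peiffer relations (the third and fourth lines of~\eqref{eq:ax:2-crossed_Lie}) and the Jacobi identity for \([-,-]_\frh\). My strategy is to split each under the relevant argument exchange; the symmetric parts again collapse to \(L_\infty\)-relations (the corresponding rows of Table~\ref{table:2cm_Lie3_correspondence}), while the antisymmetric, \(\llbracket-,-\rrbracket\)-dependent parts are supplied by~\eqref{eq:req_identities}. Rewriting \(\mu_2(b,\mu_1 b')=-\mu_1(b')\acton b\), the first identity in~\eqref{eq:req_identities} is exactly the equivariance of \(\llbracket-,-\rrbracket\) along \(\frg\)-directions in \(\operatorname{im}\mu_1\)---which is all the higher Peiffer relations ever invoke, since there the acting element is always \(\sft(b)=\mu_1(b)\). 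I expect the Jacobi identity on \(\frh\) to be the main obstacle: substituting the definition gives
\[
    [b_1,b_2]_\frh=\mu_1(\llbracket b_1,b_2\rrbracket)+\tfrac12\mu_2(\mu_1(b_1),b_2)+\tfrac12\mu_2(b_1,\mu_1(b_2))~,
\]
and the cyclic sum of \([[b_1,b_2]_\frh,b_3]_\frh\) produces terms of type \(\llbracket b,\mu_1(\llbracket\cdot,\cdot\rrbracket)\rrbracket\) together with triple-\(\mu_2\) expressions whose cancellation, once all Koszul signs are tracked, is precisely the second (weight-\(\tfrac14\)) identity of~\eqref{eq:req_identities}, combined with the \(\mu_2\)-Jacobi identity and the first identity. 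Recognising that exact combination out of the morass of sign-laden terms is where all the care is required.
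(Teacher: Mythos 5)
Your proposal follows the same route as the paper: fix $\sft=\mu_1$, the $\frg$-bracket and actions via $\mu_2$, let the axioms \eqref{eq:2cm_axiom2} and $\{\sft(c_1),\sft(c_2)\}=[c_1,c_2]_\frl$ force the brackets on $\frh$ and $\frl$, and then verify the remaining 2-crossed-module axioms, with the identities \eqref{eq:req_identities} supplying exactly the $\llbracket-,-\rrbracket$-dependent parts. The paper compresses that verification into ``straightforward but lengthy algebraic computations''; your decomposition into symmetric parts (absorbed by the strict $L_\infty$ relations, i.e.\ Table~\ref{table:2cm_Lie3_correspondence} read backwards) and antisymmetric parts (matched to the two identities) is a correct and somewhat more explicit account of the same argument.
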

    \begin{proof}
        Given a 3-term \(L_\infty\)-algebra $\frL=\frL_{-2}\oplus \frL_{-1}\oplus \frL_0$, we construct the complex underlying the 2-crossed module of Lie algebras
        \begin{equation}
            (\frl \xrightarrow{~\sft~} \frh \xrightarrow{~\sft~}\frg,\acton,\{-,-\})
        \end{equation}
        with 
        \begin{equation}
            \frl=\frL_{-2}~,~~~\frh=\frL_{-1}~,~~~\frg=\frL_0~,\eand \sft=\mu_1~.
        \end{equation}
        The Lie bracket on $\frg$ is given by 
        \begin{equation}
            [-,-]_\frg=\mu_2\colon\frg\wedge \frg\to \frg~,
        \end{equation}
        and the actions of $\frg$ on $\frh$ and $\frl$ read as 
        \begin{equation}
            a\acton b\coloneqq \mu_2(a,b)\eand a\acton c\coloneqq \mu_2(a,c)
        \end{equation}
        for $a\in \frg$, $b\in \frh$, $c\in \frl$. The Peiffer lifting~\eqref{eq:peiffer_thm} fixes the Lie brackets on $\frh$ and $\frl$ as
        \begin{subequations}
            \begin{align}
                \label{eq:h_Lie_bracket}
                [b_1,b_2]_{\frh} &\coloneqq \mu_1(\llbracket b_1,b_2\rrbracket) + \tfrac12\big(\mu_2(\mu_1(b_1),b_2)-\mu_2(\mu_1(b_2),b_1)\big)~,\\
                \label{eq:l_Lie_bracket}
                [c_1,c_2]_{\frl} &\coloneqq  \{\mu_1(c_1),\mu_2(c_2)\} = \llbracket\mu_1(c_1),\mu_1(c_2)\rrbracket
            \end{align}
        \end{subequations}
        for all \(b_1,b_2\in\frh\) and $c_1,c_2\in \frl$. Straightforward but lengthy algebraic computations show that these structures satisfy the axioms of a 2-crossed module of Lie algebras~\eqref{eq:ax:2-crossed_Lie} if and only if~\eqref{eq:req_identities} are satisfied.
    \end{proof}
    
    \begin{corollary}\label{cor:trivial_peiffer_lifting}
        Under the correspondence given by theorems~\ref{thm:2cm_to_L_infty} and~\ref{thm:L_infty_to_2cm}, the class of 2-crossed modules of Lie algebras with vanishing Peiffer lifting corresponds precisely to the class of 3-term \(L_\infty\)-algebras with vanishing \(\mu_2\colon \frL_{-1}\wedge\frL_{-1} \to \frL_{-2}\).
    \end{corollary}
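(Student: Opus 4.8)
The plan is to trace the Peiffer lifting through the explicit formulas supplied by Theorems~\ref{thm:2cm_to_L_infty} and~\ref{thm:L_infty_to_2cm}. The key observation is that the Peiffer lifting $\{-,-\}$ splits into a symmetric part, rigidly determined by $\mu_2|_{\frL_{-1}}$ through~\eqref{eq:Peiffeb_sym_para_mu2_relation}, and an antisymmetric part $\llbracket-,-\rrbracket$, which is precisely the free datum in the lift~\eqref{eq:peiffer_thm}. Consequently $\{-,-\}=0$ holds if and only if \emph{both} of these parts vanish, and I would organize the argument around this split.

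For the direction from 2-crossed modules to $L_\infty$-algebras, I start with a 2-crossed module whose Peiffer lifting vanishes and apply Theorem~\ref{thm:2cm_to_L_infty}. Among the higher products listed there, the only one mapping $\frL_{-1}\wedge\frL_{-1}$ into $\frL_{-2}=\frl$ is $\mu_2(b_1,b_2)=-\{b_1,b_2\}-\{b_2,b_1\}$ from~\eqref{eq:Peiffeb_sym_para_mu2_relation}, which is identically zero under the hypothesis. Hence the induced strict 3-term $L_\infty$-algebra has $\mu_2\colon\frL_{-1}\wedge\frL_{-1}\to\frL_{-2}$ equal to zero.

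For the converse, I start with a 3-term $L_\infty$-algebra satisfying $\mu_2|_{\frL_{-1}}=0$ and make the canonical choice $\llbracket-,-\rrbracket=0$. The one step requiring attention is to check that this choice is admissible, i.e.~that it satisfies the identities~\eqref{eq:req_identities}. The first identity holds automatically, since each of its terms carries an explicit factor of $\llbracket-,-\rrbracket$ (in particular $\mu_2(\mu_1(b_1),\llbracket b_2,b_3\rrbracket)$ vanishes because its second slot does). In the second identity the three $\llbracket-,-\rrbracket$-terms vanish for the same reason, leaving $-\tfrac14\mu_2(b_1,\mu_2(b_2,\mu_1(b_3)))+\tfrac14\mu_2(b_3,\mu_2(b_2,\mu_1(b_1)))$; since $\mu_2(b_i,\mu_1(b_j))\in\frL_{-1}$, each of these is an outer application of $\mu_2\colon\frL_{-1}\wedge\frL_{-1}\to\frL_{-2}$ and therefore vanishes by hypothesis. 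Thus $\llbracket-,-\rrbracket=0$ is admissible, and~\eqref{eq:peiffer_thm} yields $\{b_1,b_2\}=\llbracket b_1,b_2\rrbracket-\tfrac12\mu_2(b_1,b_2)=0$.

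Finally I would note that the two constructions are mutually inverse on these subclasses: the underlying graded vector space, the differential $\mu_1=\sft$, the $\frg$-actions, and all Lie brackets are preserved by both theorems, so the only datum that could distinguish the two sides is the Peiffer lifting, which vanishes on both sides by construction. I expect the admissibility verification of $\llbracket-,-\rrbracket=0$ to be the only genuine (and still very short) computation; the remainder is immediate bookkeeping from the two defining formulas~\eqref{eq:Peiffeb_sym_para_mu2_relation} and~\eqref{eq:peiffer_thm}.
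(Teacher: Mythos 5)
Your proposal is correct and follows essentially the same route as the paper: the paper's proof simply invokes Theorems~\ref{thm:2cm_to_L_infty} and~\ref{thm:L_infty_to_2cm} and observes that the identities~\eqref{eq:req_identities} become trivial in this situation, which is exactly the check you carry out explicitly (your verification that the residual $\tfrac14\mu_2(b_i,\mu_2(b_j,\mu_1(b_k)))$ terms land in $\mu_2\colon\frL_{-1}\wedge\frL_{-1}\to\frL_{-2}$ is the one non-vacuous point, and you get it right). The only difference is that you spell out the details the paper leaves implicit.
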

    \begin{proof}
        This follows from theorems~\ref{thm:2cm_to_L_infty} and~\ref{thm:L_infty_to_2cm} with the observation that in these cases, the identities~\eqref{eq:req_identities} are trivial.
    \end{proof}

    Altogether, we conclude that 2-crossed modules of Lie algebras readily restrict to strict 3-term $L_\infty$-algebras, but strict 3-term $L_\infty$-algebras can only be extended to 2-crossed modules, if they allow for maps~\eqref{eq:add_map}.

    \subsection{Inner derivations and the Weil algebra}\label{app:inner_weil}
    
    Conceptually, the Weil algebra of a Lie \(n\)-algebra and the inner derivation \(n\)-crossed module of the Lie \(n\)-algebra are similar: both involve doubling the number of generators, with augmented degree, so as to be “topologically (or cohomologically) trivial”. In this appendix, we show that, under the comparison theorems of appendix~\ref{app:Lie3_2_crossed}, the two are in fact precisely the same, for \(n\le2\).
    
    First, we review the case for \(n=1\).
    \begin{thm}\label{thm:Weil_and_inner_Lie1}
        Given a Lie algebra $\frg$, the Chevalley--Eilenberg algebra of the 2-term \(L_\infty\)-algebra corresponding to the crossed module of Lie algebras $\ainn(\frg)$ is isomorphic to $\sW(\frg)$.
    \end{thm}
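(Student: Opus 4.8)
The plan is to unwind both sides of the claimed isomorphism and check agreement first at the level of the underlying graded algebra and then at the level of the differential. First I would make the crossed module $\ainn(\frg)$ fully explicit. As recorded in section~\ref{ssec:inner_group}, it is $(\frg[1]\xrightarrow{~\id~}\frg)$, i.e.\ the crossed module $\frh\xrightarrow{~\sft~}\frg$ with $\frh$ a copy of $\frg$ placed in degree $-1$, with $\sft=\id$, and with the adjoint $\frg$-action $a\acton b=[a,b]$; the Peiffer identity $\sft(b_1)\acton b_2=[b_1,b_2]_\frh$ then forces the Lie bracket on $\frh$ to be the original bracket on $\frg$. Feeding this into the comparison theorem relating strict 2-term $L_\infty$-algebras and crossed modules of Lie algebras yields the strict Lie 2-algebra $\frL=(\frL_{-1}\xrightarrow{~\mu_1~}\frL_0)$ with $\frL_0=\frL_{-1}=\frg$, $\mu_1=\id$, and binary products $\mu_2(a_1,a_2)=[a_1,a_2]$ for $a_i\in\frL_0$ and $\mu_2(a,b)=a\acton b=[a,b]$ for $a\in\frL_0$, $b\in\frL_{-1}$.

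Next I would match the underlying graded algebras. Fixing a basis $e_\alpha$ of $\frg$ with structure constants $f^\gamma_{\alpha\beta}$, the Chevalley--Eilenberg algebra $\sCE(\frL)$ is the free graded-commutative algebra on $\frL[1]^*$, which carries degree-one generators $t^\alpha$ dual to $\frL_0$ and degree-two generators $\hat t^\alpha$ dual to $\frL_{-1}$. These are exactly the generators of $\sW(\frg)$, so the two graded algebras are manifestly identified; what remains is to check that this identification intertwines $Q_\sCE$ and $Q_\sW$.

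The differential $Q_\sCE$ is read off from the higher products by the dualization formula of appendix~\ref{app:L_infinity_defs}. On the degree-one generators the only contributions come from $\mu_1$ and from $\mu_2$ restricted to $\frL_0$, giving $Q_\sCE t^\gamma=-\tfrac12 f^\gamma_{\alpha\beta}t^\alpha t^\beta+\hat t^\gamma$. On the degree-two generators the sole contribution is the mixed product $\mu_2(\frL_0,\frL_{-1})$, giving $Q_\sCE\hat t^\gamma=-f^\gamma_{\alpha\beta}t^\alpha\hat t^\beta$. Comparing with the explicit form~\eqref{eq:Weil_of_ordinary_Lie} of $Q_\sW$ shows the two differentials agree on generators, and since both are derivations this extends to an isomorphism of differential graded algebras.

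The one genuinely delicate step is the dualization of the mixed product $\mu_2(\frL_0,\frL_{-1})$: one must carefully track the Koszul signs and confirm that the two orderings of the single degree-one coordinate $t^\alpha$ and the single degree-two coordinate $\hat t^\beta$ combine — crucially \emph{without} the factor $\tfrac12$ that accompanies the purely degree-one $\mu_2(\frL_0,\frL_0)$ term — into precisely $-f^\gamma_{\alpha\beta}t^\alpha\hat t^\beta$. Everything else is routine bookkeeping dictated by the comparison theorem and the definition of the Chevalley--Eilenberg differential.
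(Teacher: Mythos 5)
Your overall strategy is exactly the paper's: write out the strict Lie 2-algebra $(\frg[1]\xrightarrow{~\id~}\frg)$ underlying $\ainn(\frg)$, dualize to its Chevalley--Eilenberg algebra, and compare generator by generator with $\sW(\frg)$. There is, however, a sign slip that makes the specific isomorphism you propose (the identity on generators) fail. With the dualization convention of appendix~\ref{app:L_infinity_defs}, namely $Q(\xi^A\otimes\tau_A)=-\sum_i\tfrac1{i!}\mu_i(\dotsc)$, the unary product $\mu_1=\id$ contributes $-\hat t^\gamma$ to $Q_\sCE t^\gamma$, not $+\hat t^\gamma$: you have applied the overall minus sign to the $\mu_2(\frL_0,\frL_0)$ term (hence your correct $-\tfrac12 f^\gamma_{\alpha\beta}t^\alpha t^\beta$) but dropped it for the $\mu_1$ term. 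Consequently
\begin{equation*}
Q_\sCE\colon\ t^\gamma\mapsto-\tfrac12 f^\gamma_{\alpha\beta}t^\alpha t^\beta-\hat t^\gamma~,
\end{equation*}
whereas $Q_\sW t^\gamma=-\tfrac12 f^\gamma_{\alpha\beta}t^\alpha t^\beta+\hat t^\gamma$, so the identity map on generators does not intertwine the two differentials.

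The theorem of course survives: the correct isomorphism sends $t^\alpha\mapsto t^\alpha$ and $\hat t^\alpha\mapsto-\hat t^\alpha$, and one checks that the sign flip propagates consistently through the degree-two generators, where both differentials act as $\hat t^\alpha\mapsto-f^\alpha_{\beta\gamma}t^\beta\hat t^\gamma$. This is precisely the ``unfortunate minus sign'' the paper comments on after its proof, attributed to the clash between the standard conventions for the Weil algebra and for the semidirect product. Note also that you locate the delicate Koszul-sign bookkeeping in the mixed product $\mu_2(\frL_0,\frL_{-1})$, but that term in fact comes out identically on both sides under the naive identification; the only place where the two differentials genuinely disagree is the $\mu_1$ term, which is where the nontrivial part of the isomorphism lives.
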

    \begin{proof}
        The Lie 2-algebra corresponding to the inner derivation crossed module \(\ainn(\frg)\) is $\frg[1]\xrightarrow{~\id~}\frg$
        with binary products
        \begin{equation}
            \mu_2(a_1,a_2)=[a_1,a_2] \eand
            \mu_2(a_1,\hat a_2)=[a_1,\hat a_2]~,
        \end{equation}
        for all $a_1,a_2\in \frg$ and $\hat a_1,\hat a_2\in \frg[1]$. With respect to some basis, its Chevalley--Eilenberg algebra is generated by elements $w^\alpha\in\frg[1]^*$ and $\hat w^\alpha\in\frg[2]^*$ and comes with the differential $Q_{\ainn}$ acting on the generators according to
        \begin{equation}
            \begin{aligned}
                Q_{\ainn}~&\colon~&v^\alpha&\mapsto-\tfrac12 f^\alpha_{\beta\gamma}v^\beta v^\gamma-\hat v^\alpha~,~~~&\hat v^\alpha&\mapsto -f^\alpha_{\beta\gamma}v^\beta\hat v^\gamma~,
            \end{aligned}
        \end{equation}
        where $f^\alpha_{\beta\gamma}$ are the structure constants of $\frg$.
        
        On the other hand, the Weil algebra $\sW(\frg)$ is generated by elements $t^\alpha \in \frg[1]^*$ and $\hat t^\alpha \in \frg[2]^*$ and the differential acts as 
        \begin{equation}
            \begin{aligned}
                Q_{\sW}~&\colon~&t^\alpha&\mapsto-\tfrac12 f^\alpha_{\beta\gamma}t^\beta t^\gamma + \hat t^\alpha~,~~~&\hat t^\alpha&\mapsto -f^\alpha_{\beta\gamma}t^\beta\hat t^\gamma~.
            \end{aligned}
        \end{equation}
        
        Comparing the action of the two differentials, it is obvious that 
        \begin{equation}
            v^\alpha \mapsto t^\alpha~,~~~\hat v^\alpha\mapsto -\hat t^\alpha~
        \end{equation}
        yields an isomorphism (or strict dual quasi-isomorphism) of differential graded algebras.
    \end{proof}
    
    The previous theorem categorifies for Lie 2-algebras.
    \begin{thm}
        Given a crossed module of Lie algebras $(\frh\xrightarrow{~\tilde \sft~}\frg,\tildeacton)$, the Chevalley--Eilenberg algebra of the strict 3-term \(L_\infty\)-algebra obtained as in theorem~\ref{thm:2cm_to_L_infty} from the 2-crossed module of Lie algebras $\ainn(\frh\xrightarrow{~\tilde \sft~}\frg)$ is isomorphic to the Weil algebra of $\frh\xrightarrow{~\tilde \sft~}\frg$.
    \end{thm}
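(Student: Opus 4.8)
The plan is to imitate the strategy used for the $n=1$ case in theorem~\ref{thm:Weil_and_inner_Lie1}: write both differential graded algebras explicitly on a common set of generators and read off an isomorphism by matching the two differentials termwise. First I would assemble the right-hand side. Starting from the 2-crossed module $\ainn(\frh\xrightarrow{\tilde\sft}\frg)$ with underlying complex $\frh\xrightarrow{\sft}\frh\rtimes\frg\xrightarrow{\sft}\frg$ of~\eqref{eq:complex_inn_2cm}, together with its $\frg$-actions, the semidirect-product bracket on $\frh\rtimes\frg$, and the Peiffer lifting $\{(b_1,a_1),(b_2,a_2)\}=a_2\tildeacton b_1$, I would apply theorem~\ref{thm:2cm_to_L_infty} to obtain a strict 3-term $L_\infty$-algebra. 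This fixes $\mu_1=\sft$ on each level and $\mu_2$ by the formulae of that theorem; in particular the bracket of two middle-degree elements is the symmetrized Peiffer lifting $\mu_2(x_1,x_2)=-\{x_1,x_2\}-\{x_2,x_1\}$ of~\eqref{eq:Peiffeb_sym_para_mu2_relation}. I would then write down $Q_\sCE$ of this $L_\infty$-algebra on coordinates dual to $\frg$ in degree~1, to $\frh\rtimes\frg$ in degree~2, and to $\frh$ in degree~3.

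Next I would write out the left-hand side, the Weil algebra $\sW(\frh\xrightarrow{\tilde\sft}\frg)$, using $Q_\sW=Q_\sCE+\sigma$ with $Q_\sCE\sigma=-\sigma Q_\sCE$, exactly as in the string example~\eqref{eq:Weil_string_Lie_2_loop}. Its generators are $t^\alpha$ dual to $\frg$ in degree~1 and $r^a$ dual to $\frh$ in degree~2, together with the shifted copies $\hat t^\alpha=\sigma t^\alpha$ in degree~2 and $\hat r^a=\sigma r^a$ in degree~3. The decisive structural observation is the degree and dimension bookkeeping: the middle-degree generators of the inner-derivation Chevalley--Eilenberg algebra are dual to $\frh\rtimes\frg\cong\frh\oplus\frg$, and this splitting is matched precisely by the pair $(r^a,\hat t^\alpha)$ on the Weil side, where the genuine degree-2 Weil generator $r^a$ corresponds to the $\frh$-summand and the shifted generator $\hat t^\alpha$ corresponds to the $\frg$-summand. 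Comparing the linear part of $Q_\sCE t^\alpha$, coming from the top map $(b,a)\mapsto\tilde\sft(b)+a$, with the Weil term $-(\tilde\sft^*r)^\alpha+\hat t^\alpha$ then forces the identification $\hat t^\alpha\leftrightarrow-\theta^\alpha$, the categorified analogue of the sign-flip $\hat v^\alpha\mapsto-\hat t^\alpha$ in theorem~\ref{thm:Weil_and_inner_Lie1}.

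With both differentials in hand, I would propose an explicit generator-by-generator assignment---the identity on the $\frg$-part $t^\alpha$, the splitting above on the middle degree, and the evident map on the top-degree generators $\hat r^a$, all up to signs---and verify that it intertwines $Q_\sCE$ of the $L_\infty$-algebra with $Q_\sW$. Since both algebras are free graded-commutative on these generators, a morphism is determined by its action on generators and is a chain map as soon as it commutes with the differentials on generators, so the problem reduces to matching a short finite list of differential components. I expect the main obstacle to be the sign and cross-term bookkeeping in the middle and top degrees: one must check that the semidirect bracket on $\frh\rtimes\frg$ and the symmetrized Peiffer term reproduce, after dualization, exactly the mixed terms of $Q_\sW$ (the $\tilde\sft$-terms coupling $t$ to $r$, the $f^\alpha_{\beta\gamma}t^\beta\hat t^\gamma$-terms, and the action terms appearing in $Q_\sW\hat r^a$), with the signs coming from $\mu_2(b_1,b_2)=-\{b_1,b_2\}-\{b_2,b_1\}$, from the $(-1)^j$ in the homotopy Jacobi identity, and from the rule $Q_\sCE\sigma=-\sigma Q_\sCE$ all conspiring consistently. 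A useful simplification, should the direct comparison prove unwieldy, is to first pass to the primed coordinates of~\eqref{eq:inner_algebra_alternative_basis}, in which $\sft$ and the $\frg$-action take their simplest form, and to carry out the matching of differentials there.
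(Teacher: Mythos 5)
Your proposal is correct and follows essentially the same route as the paper's proof: both write out the differential of the Chevalley--Eilenberg algebra of the 3-term $L_\infty$-algebra obtained from $\ainn(\frh\xrightarrow{\tilde\sft}\frg)$ via theorem~\ref{thm:2cm_to_L_infty} and the differential of $\sW(\frh\xrightarrow{\tilde\sft}\frg)$ in a common basis of generators, then read off the generator-by-generator identification (identity on the degree-1 and $\frh$-dual generators, a sign flip on the shifted $\frg$-dual generators and on the top-degree generators). Your observation that the middle-degree generators split as duals of $\frh\oplus\frg$ and match the pair $(r^a,\hat t^\alpha)$, with the sign forced by the linear part of the differential, is exactly the mechanism the paper uses.
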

    \begin{proof}
        Theorem~\ref{thm:2cm_to_L_infty} yields the following Lie 3-algebra for  $\ainn(\frh\xrightarrow{~\tilde \sft~}\frg)$:
        \begin{subequations}
        \begin{equation}
            \begin{matrix}
                (&\frh &\xrightarrow{~\mu_1~}& \frh\rtimes \frg&\xrightarrow{~\mu_1~}& \frg&)~, \\
                &b & \xmapsto{~\mu_1~} & (-b,\tilde\sft(b))  \\
                &&& (b,a) & \xmapsto{~\mu_1~} & \tilde\sft(b)+a
            \end{matrix}
        \end{equation}
        with binary products given by
        \begin{equation}
            \begin{aligned}
                \mu_2(a_1,a_2)&=[a_1,a_2]~,~~~&\mu_2\big((b_1,a_1),(b_2,a_2)\big)&=-\big(a_2 \tildeacton b_1+a_1 \tildeacton b_2\big)~,\\
                \mu_2\big(a_1,(b_2,a_2)\big)&=\big(a_1\tildeacton b_2,[a_1,a_2]\big)~,~~~
                &\mu_2(b_1,a_1)&=a_1\tildeacton b_1
            \end{aligned}
        \end{equation}
        \end{subequations}
        for all $a_1,a_2\in \frg$ and $b_1,b_2\in \frh$. Its Chevalley--Eilenberg algebra is generated by elements
        \begin{equation}
            v^\alpha\in\frg[1]^*~,~~~(w^a,\hat v^\alpha)\in(\frh\rtimes\frg)[2]^*~,~~~\hat w^a \in\frh[3]^*
        \end{equation}
        and the differential acts as
        \begin{equation}
            \begin{aligned}
                Q_{\ainn}~&\colon~&v^\alpha&\mapsto-\tfrac12 f^\alpha_{\beta\gamma}v^\beta v^\gamma-f^\alpha_a w^a - \hat v^\alpha~,~~~&\hat v^\alpha&\mapsto -f^\alpha_{\beta\gamma}v^\beta\hat v^\gamma+f_a^\alpha \hat w^a~,\\
                && w^a&\mapsto -f^a_{\alpha b}v^\alpha w^b-\hat w^a~, &\hat w^a &\mapsto -f^a_{\alpha b}v^\alpha \hat w^b+f^a_{\alpha b}\hat v^\alpha w^b~,
            \end{aligned}
        \end{equation}
        where $f^\alpha_a$, $f^\alpha_{\beta\gamma}$, and $f^a_{\alpha b}$ are the structure constants defining $\sft$, the Lie bracket on $\frg$ and the $\frg$-action $\acton$ on $\frh$.
        
        On the other hand, the Weil algebra $\sW(\frh\xrightarrow{~\tilde \sft~}\frg,\tildeacton)$ is generated by elements
        \begin{equation}
            t^\alpha \in \frg[1]^*~,~~~
            r^a \in \frh[2]^*~,~~~
            \hat t^\alpha \in \frg[2]^*~,~~~
            \hat r^a \in \frh[3]^*~,
        \end{equation}
        and the differential acts according to
        \begin{equation}
            \begin{aligned}
                Q_{\sW}~&\colon~&t^\alpha&\mapsto-\tfrac12 f^\alpha_{\beta\gamma}t^\beta t^\gamma-f^\alpha_a r^a+ \hat t^\alpha~,~~~&\hat t^\alpha&\mapsto -f^\alpha_{\beta\gamma}t^\beta\hat t^\gamma+f_a^\alpha \hat r^a~,\\
                && r^a&\mapsto -f^a_{\alpha b}t^\alpha r^b+\hat r^a ~,&\hat r^a &\mapsto -f^a_{\alpha b}t^\alpha \hat r^b+f^a_{\alpha b}\hat t^\alpha r^b~.
            \end{aligned}
        \end{equation}
        
        Comparing the differentials, it is again obvious that 
        \begin{equation}
            v^\alpha \mapsto t^\alpha~,~~~\hat v^\alpha\mapsto -\hat t^\alpha~,~~~w^a\mapsto r^a~,~~~\hat w^a\mapsto -\hat r^a
        \end{equation}
        yields an isomorphism of differential graded algebras.
    \end{proof}
    
    In both theorems, we encountered unfortunate minus signs in the isomorphism, which is a consequence of our being stuck between the hammer of standard conventions for the Weil algebra and the anvil of standard conventions for the semidirect product.
    
    Regardless, the 3-term \(L_\infty\)-algebra encoded in the Weil algebra of a strict Lie 2-algebra is canonically isomorphic as an $L_\infty$-algebra to the 3-term \(L_\infty\)-algebra underlying the inner derivation 2-crossed module of the strict Lie 2-algebra. We stress, however, that the inner derivation 2-crossed module of Lie algebras contains additional data, namely the antisymmetric part of the Peiffer lifting
    \begin{equation}
        \llbracket (b_1,a_1),(b_2,a_2)\rrbracket = \tfrac12\left(a_2 \tildeacton b_1 - a_1 \tildeacton b_2\right)~.
    \end{equation}

    \subsection{Quasi-isomorphisms and strict 2-group equivalences}\label{app:Equivalence}
    
    Morphisms of $L_\infty$-algebras are most readily understood in their dual formulation: as morphisms of differential graded algebras between the corresponding Chevalley--Eilenberg algebras. Such a morphism descends to a morphism between the $\mu_1$-cohomologies of the $L_\infty$-algebras. A \emph{quasi-isomorphism} between two $L_\infty$-algebras $\frL$ and $\tilde \frL$ is a morphism of $L_\infty$-algebras $\phi\colon\frL\to \tilde \frL$, which descends to an isomorphism $\phi_*\colon\operatorname H^\bullet_{\mu_1}(\frL)\to\operatorname H^\bullet_{\mu_1}(\tilde \frL)$. For more details, see e.g.~\cite{Jurco:2018sby}. Quasi-isomorphisms are indeed the appropriate notion of equivalence for most intents and purposes. For example, quasi-isomorphic gauge $L_\infty$-algebras lead to quasi-isomorphic, and thus physically equivalent, BRST complexes~\cite{Saemann:2019dsl}.
    
    By the {\em minimal model theorem}, any $L_\infty$-algebra $\frL$ is quasi-isomorphic to an $L_\infty$-algebra with underlying graded vector space $H^\bullet_{\mu_1}(\frL)$, which is called a {\em minimal model} for $\frL$.
    
    As an example, we explain the quasi-isomorphism between two strict Lie 2-algebras relevant to our discussion, namely
    \begin{equation}
        \frg=(*\xrightarrow{~~~} \frg)\eand \frg_{\rm lp}\coloneqq (L_0\frg\xhookrightarrow{~~~} P_0\frg)~,
    \end{equation}
    where $L_0\frg$ and $P_0 \frg$ are based loop\footnote{Note that \(L_0\frg\), the ordinary loop space, is \emph{not} the same as \(\hat L_0\frg\), which contains a central extension, used in the loop model of the string 2-algebra.} and path spaces in $\frg$, cf.~appendix~\ref{app:path_groups}. Besides the embedding $\mu_1$, the only other non-trivial higher product is in both cases $\mu_2$ given by the obvious commutators. The quasi-isomorphism between these two strict Lie 2-algebras is a truncation of a quasi-isomorphism given in~\cite[Lemma 37]{Baez:2005sn}. We have morphisms of Lie 2-algebras $\phi$ and $\psi$,
    \begin{subequations}\label{eq:quasi-iso-morphs}
    \begin{equation}
        \begin{tikzcd}
            \frg \ar[r,bend left=45, "\phi" pos=0.5] & \frg_{\rm lp}\ar[l,bend left=45, pos=0.45]{}{\psi} 
        \end{tikzcd}
    \end{equation}
    which are given explicitly by the chain maps 
    \begin{equation}
        \begin{tikzcd}
            * \ar[d] \ar[r] & L_0\frg \ar[d,hookrightarrow] \ar[r] & * \ar[d] \\
            \frg \ar[r]{}{\cdot \ell(\tau)}& P_0\frg \ar[r]{}{\dpar} & \frg
        \end{tikzcd}
    \end{equation}
    \end{subequations}
    where $\dpar\colon P_0\frg\to \frg$ is again the endpoint evaluation and $\cdot \ell(\tau)\colon\frg\to P_0\frg$ embeds $\alpha_0\in\frg$ as the line $\alpha(\tau)=\alpha_0 \ell(\tau)$ for some smooth function $\ell\colon [0,1]\to \FR$ with $f(0)=0$ and $f(1)=1$. Both maps $\phi$ and $\psi$ descend to isomorphisms on the cohomologies 
    \begin{equation}
        \frg\cong\operatorname H^\bullet_{\mu_1}(*\to \frg)\cong\operatorname H^\bullet_{\mu_1}(\frg_{\rm lp})=(*\to \frg)~,
    \end{equation}
    and $(*\to \frg)=\operatorname H^\bullet_{\mu_1}(*\to \frg)$ is thus indeed a minimal model for $\frg_{\rm lp}$.
    
    We can complete the morphisms in~\eqref{eq:quasi-iso-morphs} to a categorical equivalence by adding a contracting homotopy: $(\Psi\circ \Phi)_0$ is already the identity, and we have a 2-morphisms of Lie 2-algebras $\eta:\Phi\circ \Psi\to \id_{\frg_{\rm lp}}$ encoded in
    \begin{equation}\label{eq:def_chi}
        \eta:P_0\frg\to L_0\frg~,~~~\eta(\gamma)=\gamma-\ell(\tau)\dpar\gamma~.
    \end{equation}

    Strict Lie 2-algebras integrate to particular Lie groupoids, which carry the structure of a 2-group and $\frg_{\rm lp}$ integrates to the 2-group 
    \begin{equation}
        \sG_{\rm lp}\coloneqq(L_0 \sG\rtimes P_0\sG\rightrightarrows P_0\sG)=\CCC(L_0\sG\xrightarrow{~\sft~}P_0\sG)~.
    \end{equation}
    We thus expect $\sG_{\rm lp}$ to be equivalent to the 2-group $*\rightarrow \sG$ in a suitable sense. This is the case as we will show now.
    
    Since both 2-groups are strict, the appropriate notion of morphism is given by butterflies, cf.~\cite{Aldrovandi:0808.3627}, with an equivalence corresponding to a flippable butterfly. Given two crossed modules of Lie groups $\CCG\coloneqq(\sH\xrightarrow{~\sft}\sG)$ and $\tilde \CCG\coloneqq(\tilde\sH\xrightarrow{~\tilde\sft}\tilde\sG)$, a {\em butterfly} from $\CCG$ to $\tilde \CCG$ is a commutative diagram of group homomorphisms
    \begin{equation}
        \begin{tikzcd}
            \sH \arrow[dd,"\sft",swap] \arrow[dr,"e_1"]& & \tilde \sH\arrow[dl,"e_2",swap] \arrow[dd,"\tilde \sft"]
            \\
            & \hat \sG \arrow[dl,"\pi_1",swap] \arrow[dr,"\pi_2"]& 
            \\
            \sG & & \tilde \sG
        \end{tikzcd}
    \end{equation}
    where $\sH\xrightarrow{~e_1~}\hat \sG \xrightarrow{~\pi_2~}\tilde \sG$ is a complex and $\tilde \sH\xrightarrow{~e_2~}\hat \sG\xrightarrow{~\pi_1~}\sG$ is a group extension. Moreover, we have the equivariance condition
    \begin{equation}
        e_2(\,\pi_2(x)\tilde \acton \tilde h\,)=x^{-1}\,e_2(\tilde h)\,x~~~\mbox{and}~~~e_1(\,\pi_1(x)\acton h\,)=x^{-1}\,e_1( h)\,x
    \end{equation}
    for all $x\in \hat \sG$, $h\in \sH$, and $\tilde h\in \tilde \sH$. A butterfly from $\CCG$ to $\tilde \CCG$ is {\em flippable} if it is also a butterfly from $\tilde \CCG$ to $\CCG$.
    
    We now have the flippable butterfly
    \begin{equation}
        \begin{tikzcd}
            * \arrow[dd] \arrow[dr]& & L_0\sG \arrow[dl,hookrightarrow] \arrow[dd,hookrightarrow]
            \\
            & P_0\sG \arrow[dl,"\dpar",swap] \arrow[dr,"\sfid"]& 
            \\
            \sG & & P_0\sG
        \end{tikzcd}
    \end{equation}
    proving the equivalence of the strict 2-groups $\sG_{\rm lp}$ and $(*\xrightarrow{~~~}\sG)$.

    \subsection{Path groupoids}\label{app:path_space}
    
    We need groupoids and higher groupoids of smooth, parametrized paths, but generic such paths with coincident endpoints fail to compose smoothly and associatively. To remedy this, we follow~\cite{Caetano:1993zf,Schreiber:0705.0452} and introduce sitting instants and factor by thin homotopies. This appendix summarizes some of the technical details underlying our path groupoids.
    
    Suppose we are given a manifold \(M\). A \emph{path with sitting instants} is a smooth map $\gamma\colon [0,1]\to M$, regarded as a morphism
    \begin{equation}
        x_1\xleftarrow{~\gamma~}x_0
    \end{equation}
    with \emph{sitting instants} at the endpoints $x_0=\gamma(0)$, $x_1=\gamma(1)$. That is, there is an $\eps>0$ such that for $i\in\{0,1\}$ and all $|t-i|\leq \eps$, the map $\gamma$ is constant: $\gamma(t)=x_i$. We abbreviate this by writing
    \begin{equation}
        t\approx i\in\{0,1\}~~\Rightarrow~~\gamma(t)= x_i~.
    \end{equation}
    This ensures smooth composition of paths.
    
    A \emph{homotopy with sitting instants} between two paths \(\gamma_0,\gamma_1\colon[0,1]\to M\) sharing common endpoints \(x_0,x_1\in M\) is a smooth homotopy
    \begin{equation}
        \sigma\colon [0,1]\times[0,1]\to M~,\hspace{1.5cm}
        \begin{tikzcd}[column sep=1.5cm,row sep=large]
            x_1 & \ar[l, anchor=center,bend left=45, "{\gamma_0}", ""{name=U,inner sep=1pt,above}] \ar[l,anchor=center, bend right=45, "{\gamma_1}", swap, ""{name=D,inner sep=1pt,below}] 
            x_0 \arrow[Rightarrow,from=U, to=D, "{\sigma}",swap]
        \end{tikzcd}
    \end{equation}     
    with sitting instants
    \begin{equation}\label{eq:surface_sitting_instants}
        \begin{aligned}
            s\approx i\in\{0,1\}~~&\Rightarrow~~\sigma(s,t) = \gamma_i(t)~,\\
            t\approx i\in\{0,1\}~~&\Rightarrow~~\sigma(s,t) = x_i~.
        \end{aligned}
    \end{equation}

    A homotopy with sitting instants \(\sigma\) is \emph{thin} if the rank of \(\dd\sigma\) is at most \(1\) everywhere. The \emph{path groupoid} $\CP M$ is the groupoid whose objects are points in \(M\), and whose 1-morphism from \(x_1\in M\) to \(x_2\in M\) is an equivalence class of paths with sitting instants, which we identify any two paths $\gamma_1,\gamma_2\colon x_0\to x_1$, $x_0,x_1\in M$ between which there is a thin homotopy with sitting instants. This ensures that composition of paths is associative.\footnote{The \emph{fundamental groupoid} $\Pi_1(M)$ is finer than $\CP M$, since in that case we do not impose the condition of rank \(\le1\) on the homotopies. A parallel transport functor whose domain is the fundamental groupoid can only describe flat connections.}
    We neglect details of the topology and smooth structure. Such details can be treated rigorously using \emph{diffeological spaces}; see~\cite{Schreiber:0705.0452, Schreiber:0802.0663, Schreiber:2008aa, Waldorf:0911.3212}, as well as~\cite{Stacey:0803.0611} and references therein for further details.
    
    We can also construct the path 2-groupoid \(\CP_{(2)}M\)~\cite{Schreiber:0802.0663} as follows. The objects are points, and the 1-morphisms are equivalence classes of paths (with sitting instants) under thin homotopies (with sitting instants). The 2-morphisms are be equivalence classes of (not necessarily thin!) homotopies (with sitting instants) under thin homotopies of homotopies (with sitting instants), which we now define.
    
    A \emph{homotopy of homotopies with sitting instants} between homotopies \(\sigma_0,\sigma_1\) between the same paths \(\gamma_0,\gamma_1\) between the same endpoints \(x_0,x_1\) is a smooth map
    \begin{equation}
        \rho\colon [0,1]^3 \to M~,\hspace{1.5cm}
        \begin{tikzcd}[column sep=2.5cm,row sep=large]
            x_1 & \ar[l, bend left=60, "\gamma_0", ""{name=U,inner sep=1pt,above}] \ar[l, bend right=60, "\gamma_1", swap, ""{name=D,inner sep=1pt,below}] x_0
            \arrow[Rightarrow, bend left=70,from=U, to=D, "{\sigma_1}",""{name=L,inner sep=1pt,right}]
            \arrow[Rightarrow, bend right=70,from=U, to=D, "{\sigma_0}",""{name=R,inner sep=1pt,left},swap]
            \arrow[triple,from=R,to=L,"{\rho}",swap]
        \end{tikzcd}
    \end{equation}
    with sitting instants 
    \begin{equation}
        \begin{aligned}
            r\approx i\in\{0,1\}~~&\Rightarrow~~\rho(r,s,t) = \sigma_i(t)~,\\
            s\approx i\in\{0,1\}~~&\Rightarrow~~\rho(r,s,t) = \gamma_i(t)~,\\
            t\approx i\in\{0,1\}~~&\Rightarrow~~\rho(r,s,t) = x_i~.
        \end{aligned}
    \end{equation}
    Such a homotopy is called \emph{thin} if $\dd\rho$ has rank $\le2$ everywhere and $\dd\rho$ has rank \(\le1\) at \((r,s,t)\) with \(s\in\{0,1\}\).\footnote{This ensures that domains and codomains are well defined on equivalence classes of homotopies of homotopies.}
    
    We also need the path 3-groupoid \(\CP_{(3)}M\), whose obvious definition we spell out as well. Its objects, 1-morphisms, and 2-morphisms are as before. Its 3-morphisms are equivalence classes of homotopies of homotopies under thin homotopies of homotopies of homotopies, which we define below. A \emph{homotopy of homotopies of homotopies with sitting instants} between homotopies of homotopies \(\rho_0,\rho_1\) between the same homotopies \(\sigma_0,\sigma_1\) between the same paths \(\gamma_0,\gamma_1\) between the same endpoints \(x_0,x_1\) is a smooth map
    \begin{equation}
        \pi\colon [0,1]^4 \to M~,\hspace{1.5cm}
        \begin{tikzcd}[column sep=3.5cm,row sep=large]
            x_1 & \ar[l, bend left=60, "\gamma_0", ""{name=U,inner sep=1pt,above}] \ar[l, bend right=60, "\gamma_1", swap, ""{name=D,inner sep=1pt,below}] x_0
            \arrow[Rightarrow, bend left=70,from=U, to=D, "{\sigma_1}",""{name=L,inner sep=1pt,right}]
            \arrow[Rightarrow, bend right=70,from=U, to=D, "{\sigma_0}",""{name=R,inner sep=1pt,left},swap]
            \arrow[triple,from=R,to=L,bend left=45,"{\rho_1}",""{name=M1,inner sep=1pt,above}]
            \arrow[triple,from=R,to=L,bend right=45,"{\rho_2}",""{name=M2,inner sep=1pt,below},swap]
            \arrow[quad,from=M1,to=M2,"{\pi}",swap]
        \end{tikzcd}
    \end{equation}
    such that, for \(i\in\{0,1\}\),
    with sitting instants 
    \begin{equation}
        \begin{aligned}
            q\approx i\in\{0,1\}~~&\Rightarrow~~\pi(q,r,s,t) = \rho_i(t)~,\\
            r\approx i\in\{0,1\}~~&\Rightarrow~~\pi(q,r,s,t) = \sigma_i(t)~,\\
            s\approx i\in\{0,1\}~~&\Rightarrow~~\pi(q,r,s,t) = \gamma_i(t)~,\\
            t\approx i\in\{0,1\}~~&\Rightarrow~~\pi(q,r,s,t) = x_i~.
        \end{aligned}
    \end{equation}
    Such a homotopy is called \emph{thin} if $\dd\pi$ has rank $\le3$ everywhere, $\dd\pi$ has rank $\le2$ at $(q,r,s,t)$ with $r\in\{0,1\}$ and $\dd\pi$ has rank $\le1$ at $(q,r,s,t)$ with $s\in\{0,1\}$. Thankfully, this is all we need.

    \subsection{Chen forms}\label{app:chen_forms}
    
    To define path-ordered higher-dimensional integrals, we use the formalism of \emph{Chen forms}. Briefly, the idea is to regard \(n\)-forms as \(1\)-forms on iterated path spaces. The treatment here is not meant to be rigorous, but to give the general flavor of ideas. For technical details the reader should consult~\cite{Baez:2004in,Getzler:1991:339,Hofman:2002ey}.
    
    \paragraph{Surface-ordering.} We want to define a surface-ordered integral of a 2-form, analogous to path-ordered integrals of 1-forms. For this, we must fix an order on the points on a surface $\sigma$, which is evidently not canonical. If $\sigma(s,t)$ is a parametrized surface 
    \begin{equation}
        \sigma\colon [0,1]\times [0,1]\rightarrow M~,
    \end{equation}
    we can define an ordering of points lexicographically: we first sort by \(s\), then by \(t\). This amounts to the following picture.
    
    First, ensure that \(\sigma\) has sitting instants~\eqref{eq:surface_sitting_instants}, reparametrizing as necessary; see figure~\ref{fig:parametrized_surface}. Then the parametrized surface \(\sigma\) forms a bigon between the two parametrized curves
    \begin{align}
        \gamma_1(t) &\coloneqq \sigma(0,t)~,&
        \gamma_2(t) &\coloneqq \sigma(1,t)~.
    \end{align}
    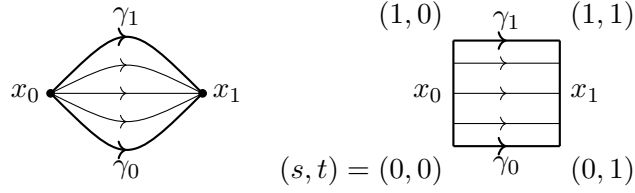
\begin{figure}
    \[
    \begin{tikzpicture}[baseline=-0.1]
    \coordinate (Q0) at (0-1,0);
    \coordinate (Q1) at (1,0);

    \filldraw (Q0) circle (0.05);
    \filldraw (Q1) circle (0.05);
    
    \node [left] at (Q0) {\(x_0\)};
    \node [right] at (Q1) {\(x_1\)};
    
    \begin{scope}[xshift=5cm]
    \coordinate (P0) at (0.7,0.7);
    \coordinate (P1) at (0.7,-0.7);
    \coordinate (P2) at (-0.7,-0.7);
    \coordinate (P3) at (-0.7,0.7);    
    \end{scope}
    
    \begin{scope}[thick,decoration={
        markings,
        mark=at position 0.5 with {\arrow{>}}}
    ]
        \draw [postaction={decorate}] (Q0) .. controls +(1,1) .. (Q1) node[midway, above] {\(\gamma_1\)};
        \draw [postaction={decorate}] (Q0) .. controls +(1,-1) .. (Q1)  node[midway, below] {\(\gamma_0\)};
        \draw (P0) -- (P1) node [right, midway] {\(x_1\)};
        \draw [postaction={decorate}] (P2) -- (P1) node[below, midway] {\(\gamma_0\)};
        \draw (P2) -- (P3) node [left, midway] {\(x_0\)};
        \draw [postaction={decorate}] (P3) -- (P0) node[above, midway] {\(\gamma_1\)};
        
        \begin{scope}[thin]
        \draw [postaction={decorate}] (Q0) -- (Q1);
        \draw [postaction={decorate}] (Q0) .. controls +(1,0.5) .. (Q1);
        \draw [postaction={decorate}] (Q0) .. controls +(1,-0.5) .. (Q1);
        
        \draw [postaction={decorate}] ($(P2)+(0,0.3)$) -- ($(P1)+(0,0.3)$);
        \draw [postaction={decorate}] ($0.5*(P2)+0.5*(P3)$) -- ($0.5*(P1)+0.5*(P0)$);
        \draw [postaction={decorate}] ($(P3)+(0,-0.3)$) -- ($(P0)+(0,-0.3)$);
        \end{scope}
        
    \end{scope}

    \node[above right] at (P0) {\((1,1)\)};
    \node[below right] at (P1) {\((0,1)\)};
    \node[below left] at (P2) {\((s,t)=(0,0)\)};
    \node[above left] at (P3) {\((1,0)\)};
    
    \begin{scope}[dashed]
    \end{scope}
        
    \end{tikzpicture}
    \]
    \caption{A parametrized surface with sitting instants, seen as a parametrized curve on the space of parametrized curves between two fixed points.}\label{fig:parametrized_surface}
    \end{figure}
    We can regard $\sigma$ as a parametrized path \(\check\sigma\) between two points \(\gamma_0,\gamma_1\in P_{x_0}^{x_1}\), where \(P_{x_0}^{x_1}\) is the manifold of parametrized paths\footnote{the manifold of parametrized paths with sitting instants, defined similarly to \(\hom_{\CP M}(x_0,x_1)\), but without quotienting by thin homotopies} between $x_0$ and $x_1$.
    \begin{equation}
        \check\sigma_s(t)\coloneqq\sigma(s,t)~,~~~\check\sigma\in P_{\gamma_0}^{\gamma_1}(P_{x_0}^{x_1}(M))~,~~~\check\sigma_s \in P_{x_0}^{x_1}(M)\text{ for all }s\in[0,1]~.
    \end{equation}

    An ordinary 2-form on $M$ defines a 1-form $\check B=\chint B$ on the locally convex manifold $P_{x_0}^{x_1}(M)$, also known as a {\em Chen form}. To wit, for each path $\gamma\in P_{x_0}^{x_1}(M)$, we can pull back $B$ along the evaluation map $\operatorname{ev}_t\colon\gamma\mapsto \gamma(t)$. We then contract $\operatorname{ev}^*_t B$ with the vector field tangent $R$ acting as $R(\gamma)=\dot \gamma$, which generates reparametrizations of $\gamma$ and whose pushforward is tangent to $\gamma$:
    \begin{equation}
        \check B=\chint B\coloneqq\int_0^1 \dd t~\iota_R (\operatorname{ev}_t^* B)~.
    \end{equation}
    For details, see again~\cite{Baez:2004in,Getzler:1991:339,Hofman:2002ey}. The 1-form $\check B$ can then be further integrated along the path \(\check\sigma\) in $P_{x_0}^{x_1}(M)$.
    
    \paragraph{Lie-algebra valued 2-forms.} To deal with 2-forms $B\in \Omega^2(M)\otimes\frh$ which transform under a gauge group with connection 1-form $A\in \Omega^1(M)\otimes \frg$, i.e.~equipped with an action of $\frg$ on $\frh$, we extend the picture slightly. The integration to a Chen form is now modified by an underlying parallel transport along the path $\gamma\in P_{x_0}^{x_1}(M)$ described by $A$. The 2-form $B$ is decorated by path-ordered integrals of $A$ along parts of $\gamma$:
    \begin{equation}\label{eq:non-abelian_correction}
        \check B=\chint_AB\coloneqq\int_0^1 \dd t~W^{-1}_t(\iota_R (\operatorname{ev}_t^* B))\ewith W_t=\Pexp\int_{\gamma_t} A~,
    \end{equation}
    where $\gamma_t$ is again the path $\gamma$ truncated at $t$ and reparametrized. For details, see~again~\cite{Baez:2004in,Hofman:2002ey}.
    
    \paragraph{Higher-dimensional generalizations.} The higher-dimensional generalization on iterated loop spaces is mostly self-evident: we iterate the procedure, producing Chen forms of lower and lower degree. Given a form \(C\in\Omega^k(M)\), we pick two points \(x_0,x_1\in M\), define the space of parametrized paths (with sitting instants) \(P_{x_0}^{x_1}(M)\), and build the Chen form \(\chint C\in \Omega^{k-1}(P_{x_0}^{x_1}(M))\). We then pick two points \(\gamma_0,\gamma_1\in P_{x_0}^{x_1}(M)\) (i.e.~parametrized paths with sitting instants on \(M\)), define the space \(P_{\gamma_0}^{\gamma_1}(P_{x_0}^{x_1}(M))\) of parametrized surfaces (i.e.~parametrized paths with sitting instants on \(P_{x_0}^{x_1}(M)\)), and build the Chen form \(\chichint C\in\Omega^{k-2}(P_{\gamma_0}^{\gamma_1}(P_{x_0}^{x_1}(M))\). We then iterate the process until we obtain a 1-form on an iterated loop space, over which we can then define a path-ordered integral. 
    
    For ``non-abelian'' forms, that is forms $C\in \Omega^{k}(M)\otimes\frv$ taking values in some vector spaces $\frv$ carrying representations of certain Lie algebras, there is also an evident generalization of~\eqref{eq:non-abelian_correction} by decorating with one-forms on iterated loop spaces. For concreteness, let us explain formula~\eqref{eq:443c} in more detail. The iterated integral there is defined as follows:
    \begin{equation}
        \Pexp\ichichint_{A,B} (-H)\coloneqq\Pexp\ichint_{\check B}\chint_A (-H)~,
    \end{equation}
    where $\chint_A (-H)$ is defined as in~\eqref{eq:non-abelian_correction}, which is consistent as $H$ takes values in $\frh$, which carries a representation of the Lie group $\sG$ integrating $\frg$. Moreover, the second integral is again defined as in~\eqref{eq:non-abelian_correction}, but now on the path space $P_{x_0}^{x_1}(M)$ with the 1-form
    \begin{equation}
        \check B\coloneqq\chint_A B\in\Omega^1(P_{x_0}^{x_1}(M))\otimes\frh~.
    \end{equation}
    Again, $\frh$ clearly carries a representation of the Lie group $\sH$ integrating $\frh$, and thus~\eqref{eq:443c} is indeed well-defined.
        
    \bibliography{bigone}

\begin{thebibliography}{10}

\bibitem{Breen:math0106083}
L.~Breen and W.~Messing,
{\em Differential geometry of gerbes,}
\href{http://dx.doi.org/10.1016/j.aim.2005.06.014}{Adv. Math. {\bf 198}  (2005)
  732} [{\tt
  \href{http://www.arxiv.org/abs/math.AG/0106083}{math.AG/0106083}}].

\bibitem{Aschieri:2003mw}
P.~Aschieri, L.~Cantini, and B.~Jurčo,
{\em Nonabelian bundle gerbes, their differential geometry and gauge theory,}
\href{http://dx.doi.org/10.1007/s00220-004-1220-6}{Commun. Math. Phys. {\bf
  254}  (2005) 367} [{\tt
  \href{http://www.arxiv.org/abs/hep-th/0312154}{hep-th/0312154}}].

\bibitem{Baez:2004in}
J.~C.~Baez and U.~Schreiber,
{\em Higher gauge theory: 2-connections on 2-bundles,}
{\tt \href{http://www.arxiv.org/abs/hep-th/0412325}{hep-th/0412325}}.

\bibitem{Baez:0511710}
J.~C.~Baez and U.~Schreiber,
{\em Higher gauge theory,}
\href{http://dx.doi.org/10.1090/conm/431/08264}{Contemp. Math. {\bf 431}
  (2007)~7} [{\tt
  \href{http://www.arxiv.org/abs/math.DG/0511710}{math.DG/0511710}}].

\bibitem{Baez:2002jn}
J.~C.~Baez,
{\em {Higher Yang--Mills theory},}
{\tt \href{http://www.arxiv.org/abs/hep-th/0206130}{hep-th/0206130}}.

\bibitem{Girelli:2003ev}
F.~Girelli and H.~Pfeiffer,
{\em {Higher gauge theory -- differential versus integral formulation},}
\href{http://dx.doi.org/10.1063/1.1790048}{J. Math. Phys. {\bf 45}  (2004)
  3949} [{\tt \href{http://www.arxiv.org/abs/hep-th/0309173}{hep-th/0309173}}].

\bibitem{Schreiber:0705.0452}
U.~Schreiber and K.~Waldorf,
{\em Parallel transport and functors,}
\href{https://tcms.org.ge/Journals/JHRS/xvolumes/2009/n1a10/v4n1a10hl.pdf}{J.
  Homot. Relat. Struct. {\bf 4}  (2009) 187} [{\tt
  \href{http://www.arxiv.org/abs/0705.0452}{0705.0452 [math.DG]}}].

\bibitem{Schreiber:0802.0663}
U.~Schreiber and K.~Waldorf,
{\em Smooth functors vs. differential forms,}
Homol. Homot. Appl. {\bf 13}  (2011) 143 [{\tt
  \href{http://www.arxiv.org/abs/0802.0663}{0802.0663 [math.DG]}}].

\bibitem{Schreiber:2008aa}
U.~Schreiber and K.~Waldorf,
{\em Connections on non-abelian gerbes and their holonomy,}
\href{http://www.tac.mta.ca/tac/volumes/28/17/28-17.pdf}{Th. Appl. Cat. {\bf
  28}  (2013) 476} [{\tt \href{http://www.arxiv.org/abs/0808.1923}{0808.1923
  [math.DG]}}].

\bibitem{Soncini:2014zra}
E.~Soncini and R.~Zucchini,
{\em {A new formulation of higher parallel transport in higher gauge theory},}
\href{http://dx.doi.org/10.1016/j.geomphys.2015.04.010}{J. Geom. Phys. {\bf 95}
   (2015)~28} [{\tt \href{http://www.arxiv.org/abs/1410.0775}{1410.0775
  [hep-th]}}].

\bibitem{Baez:2010ya}
J.~C.~Baez and J.~Huerta,
{\em {An invitation to higher gauge theory},}
\href{http://dx.doi.org/10.1007/s10714-010-1070-9}{Gen. Relativ. Gravit. {\bf
  43}  (2011) 2335} [{\tt \href{http://www.arxiv.org/abs/1003.4485}{1003.4485
  [hep-th]}}].

\bibitem{Saemann:2019dsl}
C.~Saemann and L.~Schmidt,
{\em {Towards an M5-brane model II: Metric string structures},}
{\tt \href{http://www.arxiv.org/abs/1908.08086}{1908.08086 [hep-th]}}.

\bibitem{Gastel:2018joi}
A.~Gastel,
{\em {Canonical gauges in higher gauge theory},}
\href{http://dx.doi.org/10.1007/s00220-019-03530-4}{Commun. Math. Phys.
  (2019)~1} [{\tt \href{http://www.arxiv.org/abs/1810.06278}{1810.06278
  [math-ph]}}].

\bibitem{Sati:2008eg}
H.~Sati, U.~Schreiber, and J.~Stasheff,
{\em $L_\infty$-algebra connections and applications to String- and
  Chern--Simons $n$-transport,}
in: ``Quantum Field Theory,'' eds. B. Fauser, J. Tolksdorf and E. Zeidler, p.
  303, Birkh{\"a}user 2009
[{\tt \href{http://www.arxiv.org/abs/0801.3480}{0801.3480 [math.DG]}}].

\bibitem{Sati:2009ic}
H.~Sati, U.~Schreiber, and J.~Stasheff,
{\em {Differential twisted String and Fivebrane structures},}
\href{http://dx.doi.org/10.1007/s00220-012-1510-3}{Commun. Math. Phys. {\bf
  315}  (2012) 169} [{\tt \href{http://www.arxiv.org/abs/0910.4001}{0910.4001
  [math.AT]}}].

\bibitem{Saemann:2017rjm}
C.~Saemann and L.~Schmidt,
{\em {The non-abelian self-dual string and the (2,0)-theory},}
\href{http://dx.doi.org/10.1007/s11005-019-01250-3}{Lett. Math. Phys. {\bf 110}
   (2020) 1001–1042} [{\tt
  \href{http://www.arxiv.org/abs/1705.02353}{1705.02353 [hep-th]}}].

\bibitem{Saemann:2017zpd}
C.~Saemann and L.~Schmidt,
{\em {Towards an M5-brane model I: A 6d superconformal field theory},}
\href{http://dx.doi.org/10.1063/1.5026545}{J. Math. Phys. {\bf 59}  (2018)
  043502} [{\tt \href{http://www.arxiv.org/abs/1712.06623}{1712.06623
  [hep-th]}}].

\bibitem{Roberts:0708.1741}
D.~M.~Roberts and U.~Schreiber,
{\em The inner automorphism 3-group of a strict 2-group,}
\href{http://tcms.org.ge/Journals/JHRS/xvolumes/2008/n1a6/v3n1a6hl.pdf}{J.
  Homot. Rel. Struct. {\bf 3}  (2008) 193} [{\tt
  \href{http://www.arxiv.org/abs/0708.1741}{0708.1741 [math.CT]}}].

\bibitem{Jurco:2016qwv}
B.~Jur\v{c}o, C.~Saemann, and M.~Wolf,
{\em {Higher groupoid bundles, higher spaces, and self-dual tensor field
  equations},}
\href{http://dx.doi.org/10.1002/prop.201600031}{Fortschr. Phys. {\bf 64}
  (2016) 674} [{\tt \href{http://www.arxiv.org/abs/1604.01639}{1604.01639
  [hep-th]}}].

\bibitem{Cartan:1949aaa}
H.~Cartan,
{\em {Cohomologie r{\'e}elle d'un espace fibr{\'e} principal diff{\'e}rentiable
  I},}
\href{http://www.numdam.org/item/SHC_1949-1950__2__A18_0/}{Talk no.~19,
  {S{\'e}minaire Henri Cartan} {\bf 2}  (1950)}.

\bibitem{Cartan:1949aab}
H.~Cartan,
{\em {Cohomologie r{\'e}elle d'un espace fibr{\'e} principal diff{\'e}rentiable
  II},}
\href{http://www.numdam.org/item/SHC_1949-1950__2__A19_0/}{Talk no.~20,
  {S{\'e}minaire Henri Cartan} {\bf 2}  (1950)}.

\bibitem{Kotov:2007nr}
A.~Kotov and T.~Strobl,
{\em {Characteristic classes associated to $Q$-bundles},}
\href{http://dx.doi.org/10.1142/S0219887815500061}{Int. J. Geom. Meth. Mod.
  Phys. {\bf 12}  (2015) 1550006} [{\tt
  \href{http://www.arxiv.org/abs/0711.4106}{0711.4106 [math.DG]}}].

\bibitem{Ritter:2015zur}
P.~Ritter, C.~Saemann, and L.~Schmidt,
{\em {Generalized higher gauge theory},}
\href{http://dx.doi.org/10.1007/JHEP04(2016)032}{JHEP {\bf 1604}  (2016) 032}
  [{\tt \href{http://www.arxiv.org/abs/1512.07554}{1512.07554 [hep-th]}}].

\bibitem{Jurco:2018sby}
B.~Jur\v{c}o, L.~Raspollini, C.~Saemann, and M.~Wolf,
{\em {$L_\infty$-algebras of classical field theories and the
  Batalin--Vilkovisky formalism},}
\href{http://dx.doi.org/10.1002/prop.201900025}{Fortsch. Phys. {\bf 67}  (2019)
  1900025} [{\tt \href{http://www.arxiv.org/abs/1809.09899}{1809.09899
  [hep-th]}}].

\bibitem{Schmidt:2019pks}
L.~Schmidt,
{\em {Twisted Weil algebras for the string Lie 2-algebra},}
in: ``Higher Structures in M-Theory,'' proceedings of the
  \href{http://www.maths.dur.ac.uk/lms/109/index.html}{LMS/EPSRC Durham
  Symposium}, 12-18 August 2018
[{\tt \href{http://www.arxiv.org/abs/1903.02873}{1903.02873 [hep-th]}}].

\bibitem{Schommer-Pries:0911.2483}
C.~Schommer{--}Pries,
{\em Central extensions of smooth 2-groups and a finite-dimensional string
  2-group,}
\href{http://dx.doi.org/10.2140/gt.2011.15.609}{Geom. Top. {\bf 15}  (2011)
  609} [{\tt \href{http://www.arxiv.org/abs/0911.2483}{0911.2483 [math.AT]}}].

\bibitem{Demessie:2016ieh}
G.~A.~Demessie and C.~Saemann,
{\em {Higher gauge theory with string 2-groups},}
\href{http://dx.doi.org/10.4310/ATMP.2017.v21.n8.a2}{Adv. Theor. Math. Phys.
  {\bf 21}  (2017) 1895} [{\tt
  \href{http://www.arxiv.org/abs/1602.03441}{1602.03441 [math-ph]}}].

\bibitem{Martins:2009aa}
J.~F.~Martins and R.~Picken,
{\em The fundamental {G}ray 3-groupoid of a smooth manifold and local
  3-dimensional holonomy based on a 2-crossed module,}
\href{http://dx.doi.org/10.1016/j.difgeo.2010.10.002}{Diff. Geom. App. {\bf 29}
   (2011) 179} [{\tt \href{http://www.arxiv.org/abs/0907.2566}{0907.2566
  [math.CT]}}].

\bibitem{Baez:2005sn}
J.~C.~Baez, D.~Stevenson, A.~S.~Crans, and U.~Schreiber,
{\em {From loop groups to 2-groups},}
\href{http://projecteuclid.org/euclid.hha/1201127333}{Homol. Homot. Appl. {\bf
  9}  (2007) 101} [{\tt
  \href{http://www.arxiv.org/abs/math.QA/0504123}{math.QA/0504123}}].

\bibitem{Wang:1512.08680}
W.~Wang,
{\em On the global 2-holonomy for a 2-connection on a 2-bundle,}
\href{http://dx.doi.org/10.1016/j.geomphys.2017.03.008}{J. Geom. Phys. {\bf
  117}  (2017) 151} [{\tt \href{http://www.arxiv.org/abs/1512.08680}{1512.08680
  [math-ph]}}].

\bibitem{Kobayashi:1954aa}
S.~Kobayashi,
{\em La connexion des vari{\'e}t{\'e}s fibr{\'e}es,}
Comptes Rendus {\bf 238}  (1954) 318.

\bibitem{Barrett:1991aj}
J.~W.~Barrett,
{\em {Holonomy and path structures in general relativity and Yang--Mills
  theory},}
\href{http://dx.doi.org/10.1007/BF00671007}{Int. J. Theor. Phys. {\bf 30}
  (1991) 1171}.

\bibitem{Gambini:1980yz}
R.~Gambini and A.~Trias,
{\em {On the geometrical origin of gauge theories},}
\href{http://dx.doi.org/10.1103/PhysRevD.23.553}{Phys. Rev. D {\bf 23}  (1981)
  553}.

\bibitem{Caetano:1993zf}
A.~Caetano and R.~F.~Picken,
{\em An axiomatic definition of holonomy,}
\href{http://dx.doi.org/10.1142/S0129167X94000425}{Int. J. Math. {\bf 05}
  (1994) 835}.

\bibitem{Jurco:2014mva}
B.~Jur\v{c}o, C.~Saemann, and M.~Wolf,
{\em {Semistrict higher gauge theory},}
\href{http://dx.doi.org/10.1007/JHEP04(2015)087}{JHEP {\bf 1504}  (2015) 087}
  [{\tt \href{http://www.arxiv.org/abs/1403.7185}{1403.7185 [hep-th]}}].

\bibitem{0817647309}
J.-L.~Brylinski,
{\em Loop spaces, characteristic classes and geometric quantization,}
  Birkh{\"a}user, Boston, 2007.

\bibitem{caetano1998family}
A.~Caetano and R.~Picken,
{\em On a family of topological invariants similar to homotopy groups,}
\href{http://hdl.handle.net/10077/4352}{Rend. Istit. Mat. Univ. Trieste {\bf
  30}  (1998)~81}.

\bibitem{Gajer:1997:155-207}
P.~Gajer,
{\em Geometry of Deligne cohomology,}
\href{http://dx.doi.org/10.1007/s002220050118}{Invent. Math. {\bf 127}  (1997)
  155}.

\bibitem{Gajer:1999:195-235}
P.~Gajer,
{\em Higher holonomies, geometric loop groups and smooth Deligne cohomology,}
in: ``Advances in Geometry,'' p. 195-235, volume 172 of Progress in
  Mathematics, 1999.

\bibitem{Mackaay:2000ac}
M.~Mackaay and R.~Picken,
{\em {The holonomy of gerbes with connections},}
\href{http://dx.doi.org/10.1006/aima.2002.2085}{Adv. Math. {\bf 170}  (2002)
  287} [{\tt
  \href{http://www.arxiv.org/abs/math.DG/0007053}{math.DG/0007053}}].

\bibitem{Chepelev:2001mg}
I.~Chepelev,
{\em Non-abelian {W}ilson surfaces,}
\href{http://dx.doi.org/10.1088/1126-6708/2002/02/013}{JHEP {\bf 0202}  (2002)
  013} [{\tt \href{http://www.arxiv.org/abs/hep-th/0111018}{hep-th/0111018}}].

\bibitem{Alvarez:1997ma}
O.~Alvarez, L.~A.~Ferreira, and J.~S.~Guillen,
{\em A new approach to integrable theories in any dimension,}
\href{http://dx.doi.org/10.1016/S0550-3213(98)00400-3}{Nucl. Phys. B {\bf 529}
  (1998) 689} [{\tt
  \href{http://www.arxiv.org/abs/hep-th/9710147}{hep-th/9710147}}].

\bibitem{Li:2019nqu}
Z.~Li,
{\em A global geometric approach to parallel transport of strings in gauge
  theory,}
{\tt \href{http://www.arxiv.org/abs/1910.14230}{1910.14230 [math.DG]}}.

\bibitem{Zwiebach:1992ie}
B.~Zwiebach,
{\em {Closed string field theory: Quantum action and the B--V master
  equation},}
\href{http://dx.doi.org/10.1016/0550-3213(93)90388-6}{Nucl. Phys. B {\bf 390}
  (1993)~33} [{\tt
  \href{http://www.arxiv.org/abs/hep-th/9206084}{hep-th/9206084}}].

\bibitem{Lada:1992wc}
T.~Lada and J.~Stasheff,
{\em {Introduction to sh Lie algebras for physicists},}
\href{http://dx.doi.org/10.1007/BF00671791}{Int. J. Theor. Phys. {\bf 32}
  (1993) 1087} [{\tt
  \href{http://www.arxiv.org/abs/hep-th/9209099}{hep-th/9209099}}].

\bibitem{Lada:1994mn}
T.~Lada and M.~Markl,
{\em {Strongly homotopy Lie algebras},}
\href{http://dx.doi.org/10.1080/00927879508825335}{Commun. Alg. {\bf 23}
  (1995) 2147} [{\tt
  \href{http://www.arxiv.org/abs/hep-th/9406095}{hep-th/9406095}}].

\bibitem{Baez:0307200}
J.~C.~Baez and A.~D.~Lauda,
{\em Higher-dimensional algebra V: 2-groups,}
\href{http://www.kurims.kyoto-u.ac.jp/EMIS/journals/TAC/volumes/12/14/12-14.pdf}{Th.
  App. Cat. {\bf 12}  (2004) 423} [{\tt
  \href{http://www.arxiv.org/abs/math.QA/0307200}{math.QA/0307200}}].

\bibitem{Conduche:1984:155}
D.~Conduch{\'e},
{\em Modules crois{\'e}s g{\'e}n{\'e}ralis{\'e}s de longueur 2,}
\href{http://dx.doi.org/10.1016/0022-4049(84)90034-3}{J. Pure Appl. Algebra.
  {\bf 34}  (1984) 155}.

\bibitem{Conduche:2003}
D.~Conduch{\'e},
{\em Simplicial crossed modules and mapping cones,}
\href{ftp://ftp.gwdg.de/ftp/pub/EMIS/journals/GMJ/vol10/v10n4-3.pdf.gz}{Georgian
  Math. J. {\bf 10}  (2003) 623}.

\bibitem{Kamps:2002aa}
K.~H.~Kamps and T.~Porter,
{\em 2-groupoid enrichments in homotopy theory and algebra,}
\href{http://dx.doi.org/10.1023/A:1016051407785}{K-Theory {\bf 25}  (2002)
  373}.

\bibitem{Aldrovandi:0808.3627}
E.~Aldrovandi and B.~Noohi,
{\em Butterflies I: morphisms of 2-group stacks,}
\href{http://dx.doi.org/10.1016/j.aim.2008.12.014}{Adv. Math. {\bf 221}  (2009)
  687} [{\tt \href{http://www.arxiv.org/abs/0808.3627}{0808.3627 [math.CT]}}].

\bibitem{Waldorf:0911.3212}
K.~Waldorf,
{\em Transgression to loop spaces and its inverse, I: Diffeological bundles and
  fusion maps,}
\href{http://cahierstgdc.com/wp-content/uploads/2017/03/Waldorf.pdf}{Cah.
  Topol. G\'eom. Diff\'er. Cat\'eg. {\bf 53}  (2012) 162} [{\tt
  \href{http://www.arxiv.org/abs/0911.3212}{0911.3212 [math.DG]}}].

\bibitem{Stacey:0803.0611}
A.~Stacey,
{\em The smooth structure of the space of piecewise-smooth loops,}
\href{http://dx.doi.org/10.1017/S0017089516000033}{Glasgow Math. J. {\bf 59}
  (2017)~27} [{\tt \href{http://www.arxiv.org/abs/0803.0611}{0803.0611
  [math.DG]}}].

\bibitem{Getzler:1991:339}
E.~Getzler, J.~D.~Jones, and S.~Petrack,
{\em Differential forms on loop spaces and the cyclic bar complex,}
\href{http://dx.doi.org/10.1016/0040-9383(91)90019-Z}{Topology. {\bf 30}
  (1991) 339}.

\bibitem{Hofman:2002ey}
C.~Hofman,
{\em {Nonabelian 2-forms},}
{\tt \href{http://www.arxiv.org/abs/hep-th/0207017}{hep-th/0207017}}.

\end{thebibliography}
    
    \iffancy
    \bibliographystyle{latexeu}
    \else
    \bibliographystyle{alphaurl}
    \fi
    
\end{document}